\documentclass[11pt]{article}

\usepackage[utf8]{inputenc}
\usepackage{fullpage}
\usepackage[margin=1in]{geometry}

\usepackage{amsmath, amssymb,amsbsy,amsthm,dsfont}
\usepackage{graphicx}
\usepackage[dvipsnames]{xcolor}
\usepackage{caption}
\usepackage{subcaption}
\usepackage{enumerate}
\usepackage{nicefrac}
\usepackage{cases}
\usepackage{hyperref}%
\definecolor{bluish-green}{HTML}{009E73}
\hypersetup{%
   breaklinks,%
   colorlinks=true,%
   urlcolor=[rgb]{0.0,0.2,0.75},%
   linkcolor=[rgb]{0.5,0.0,0.0},%
   citecolor=bluish-green,%
   filecolor=[rgb]{0,0,0.4}, anchorcolor=[rgb]={0.0,0.1,0.2}%
}
\usepackage{mathtools}

\let\emptyset\varnothing


\allowdisplaybreaks[1]

\numberwithin{equation}{section}


\DeclarePairedDelimiter\ceil{\lceil}{\rceil}
\DeclarePairedDelimiter\floor{\lfloor}{\rfloor}

\DeclareMathOperator{\R}{\mathbb{R}} 

\newcommand{\p}{\mathbb{P}} 
\newcommand{\E}{\mathbb{E}} 
\newcommand{\eps}{\varepsilon} 

















\newcommand{\EQ}{\widetilde{Q}}
\newcommand{\nth}[1]{{#1}^\mathrm{th}}



\def\cB{{\mathcal B}}

\def\cE{{\mathcal E}}

\def\cL{{\mathcal L}}

\def\cS{{\mathcal S}}

\def\cU{{\mathcal U}}

\newcommand{\calB}{\mathcal{B}}
\newcommand{\calS}{\mathcal{S}}
\newcommand{\calR}{\mathcal{R}}

\newcommand{\cost}{\mathsf{cost}}
\newcommand{\ACGT}{\{A,C,G,T\}}

\newtheorem{theorem}{Theorem}[section]
\newtheorem{lemma}[theorem]{Lemma}

\newtheorem{claim}[theorem]{Claim}

\newtheorem{example}[theorem]{Example}
\newtheorem{remark}[theorem]{Remark}

\DeclareFontFamily{U}{mathx}{\hyphenchar\font45}
\DeclareFontShape{U}{mathx}{m}{n}{
      <5> <6> <7> <8> <9> <10>
      <10.95> <12> <14.4> <17.28> <20.74> <24.88>
      mathx10
      }{}
\DeclareSymbolFont{mathx}{U}{mathx}{m}{n}
\DeclareFontSubstitution{U}{mathx}{m}{n}
\DeclareMathAccent{\widebar}{0}{mathx}{"73}

\usepackage{amssymb}

\usepackage{amsfonts}

\begin{document}

\title{Batch Optimization for DNA Synthesis}
\author{
	Konstantin Makarychev\thanks{Northwestern University; \url{konstantin@northwestern.edu}.}
	\and
	Mikl\'os Z.\ R\'acz\thanks{Princeton University; \url{mracz@princeton.edu}. Research supported in part by NSF grant DMS 1811724 and by a Princeton SEAS Innovation Award.}
	\and
	Cyrus Rashtchian\thanks{University of California, San Diego; \url{crashtchian@eng.ucsd.edu}.}
	\and
	Sergey Yekhanin\thanks{Microsoft Research; \url{yekhanin@microsoft.com}.}
}
\date{\today}

\maketitle

\begin{abstract}
Large pools of synthetic DNA molecules have been recently used to reliably store significant volumes of digital data. While DNA as a storage medium has enormous potential because of its high storage density, its practical use is currently severely limited because of the high cost and low throughput of available DNA synthesis technologies. 

We study the role of batch optimization in reducing the cost of large scale DNA synthesis, which translates to the following algorithmic task. Given a large pool $\mathcal S$ of random quaternary strings of fixed length, partition $\mathcal S$ into batches in a way that minimizes the sum of the lengths of the shortest common supersequences across batches.  

We introduce two ideas for batch optimization that both improve (in different ways) upon a naive baseline: (1) using both $(ACGT)^{*}$ and its reverse $(TGCA)^{*}$ as reference strands, and batching appropriately, and (2) batching via the quantiles of an appropriate ordering of the strands. We also prove asymptotically matching lower bounds on the cost of DNA synthesis, showing that one cannot improve upon these two ideas. Our results uncover a surprising separation between two cases that naturally arise in the context of DNA data storage: the asymptotic cost savings of batch optimization are significantly greater in the case where strings in $\mathcal S$ do not contain repeats of the same character (homopolymers), as compared to the case where strings in $\mathcal S$ are unconstrained. 
\end{abstract}


\section{Introduction} \label{sec:intro} 

Storing digital data in synthetic DNA molecules has received much attention in the past decade~\cite{Neiman1965,church2012next,goldman, yazdi2015dna,bornholt2016dna,shipman2017crispr,yazdi2017portable,erlich2017dna,organick2018random,ceze2019molecular,meiser2020reading}. DNA data storage offers several orders of magnitude higher information density compared to conventional storage media, as well as the potential to store data reliably for hundreds or thousands of years. However, the prohibitively high cost and low throughput of modern DNA synthesis technologies present a key barrier that needs to be addressed in order to make DNA data storage a commonplace technology. 

For the purposes of the current paper we can think of a DNA molecule as a string (strand) in the quaternary alphabet $\{A,C,G,T\}.$ Today the dominant method for producing large quantities of DNA molecules is array-based DNA synthesis~\cite{KosuriChurch2014,Howon2020}.  With this technology the DNA synthesizer creates a large number of DNA strands in parallel, where each strand is grown by one DNA base (character) at a time. To append bases to strands, the synthesis machine follows a fixed supersequence of bases, called a reference strand. As the machine iterates through this supersequence, the next base is added to a select subset of the DNA strands. This process continues until the machine reaches the end of the supersequence. In particular, each synthesized DNA strand must be a subsequence of the reference strand. The cost of DNA synthesis is proportional to the length of the reference strand.

In applications to DNA data storage one typically needs to synthesize very large quantities of DNA molecules, significantly exceeding the capacity of any single DNA synthesizer. Therefore the pool of strands that one aims to synthesize needs to be partitioned into batches, where the size of each batch corresponds to the maximum load of the synthesizer. In this setting the total cost of DNA synthesis is proportional to the sum of the lengths of the shortest common supersequences of each batch. The focus of this paper is the algorithmic task of \emph{batch optimization}, where the goal is to partition the strands into batches and assign every batch a reference strand in a way that minimizes this cost. 

The encoding process that generates the list of DNA strands that need to be synthesized to store a given digital file varies with the specific system~\cite{church2012next,bornholt2016dna,organick2018random,ICASSP2020} and is usually quite complex. The encoder adds redundancy to the data to allow for the correction of various types of errors that occur during DNA synthesis, storage, and sequencing, including insertions, deletions, and substitutions of individual bases, as well as missing DNA strands. 

We now describe two aspects of encoding of digital data in DNA that are relevant to our work. Commonly, input digital data is  randomized~\cite{organick2018random} using a seeded pseudorandom number generator or compressed and encrypted~\cite{ICASSP2020}; this is done in order to reduce the frequency of undesirable patterns that may occur in strands that are used to represent the data, for instance, patterns likely to cause the presence of DNA secondary structure~\cite{Bochman2012}. Ensuring that strands look random also facilitates certain tasks that may be a part of the decoding process such as clustering~\cite{organick2018random,rashtchian2017clustering} and trace reconstruction~\cite{BatuKannan04-RandomCase,HolensteinMPW08,viswanathan,organick2018random,peres2017average,holden2018subpolynomial}. Another important aspect is as follows. Algorithms that encode digital data in DNA~\cite{goldman,organick2018random} often ensure that the resulting strands do not contain long runs of the same character (i.e., homopolymers), since such runs are known to cause errors during the DNA sequencing stage. The length of the longest allowed homopolymer run may be as low as~one---that is, not allowing homopolymers---or unconstrained, depending on the scenario.

Motivated by the above considerations, in the current paper we model pools $\mathcal S$ of DNA strands that we aim to synthesise as large collections of random quaternary strings. We consider two key representative cases: the case where strings in~$\mathcal S$ are unconstrained and the case where strings in~$\mathcal S$ do not contain repeats of the same character.

\subsection{Problem statement}\label{sec:problem_statement}

Fix a strand length $n,$ and consider two different choices for the strand universe $\mathcal U.$ 
\begin{enumerate}
    \item \emph{Unconstrained strands:} $\mathcal U = \{A,C,G,T\}^n.$ 
    \item \emph{Strands without homopolymers:} $\mathcal U$ is the subset of $\{A,C,G,T\}^n$ that contains all strands with no consecutively repeated characters.
\end{enumerate}
Let $\mathcal S$ be a subset of elements of $\mathcal U,$ with $M := |\mathcal S|$; this is the pool of strands we wish to synthesize. Let $k$ be an integer that divides $M,$ and let $\pi$ be a partition of $\mathcal S$ into $k$ subsets (which we refer to as \emph{batches}) $\mathcal B_1,\ldots, \mathcal B_k$ of size $M/k$.\footnote{The assumption that the batches are of equal size is made for simplicity. Indeed, our techniques extend to a more general setting where the batches are roughly the same size (e.g., up to a constant factor), and several results are phrased in this more general setting.} We define $\cost(\mathcal B_i)$, the cost of synthesizing elements of the batch $\mathcal B_i$, as the length of the shortest common supersequence of all strands in $\mathcal B_i.$ Using this notation, we define the cost of synthesizing the whole pool $\mathcal S$ as:
\begin{equation}
    \cost(\mathcal S) := \min\limits_{\pi}  \sum\limits_{i=1}^k\cost(\mathcal B_i).
\end{equation}
We assume that elements of $\mathcal S$ are selected i.i.d.\ from $\mathcal U$ uniformly at random, and we are interested in upper and lower bounds for $\cost(\cS)$ that hold with high probability.

While in the practice of DNA synthesis the parameters $n$, $M$, and $k$ are concrete numbers, to facilitate the asymptotic study of the problem we focus on the following relevant scenario: $n$ is growing, $M$ is significantly larger than but polynomial in $n,$ and $k$ is either a constant or a slowly growing function of $n.$ 

\begin{example}\label{ex:illustration}
Consider the setting of strands with no homopolymers. Let $n=4$ and $M=4.$ Let 
$\mathcal S=\{AGCT,GCAT,CAGA,GAGC\}.$ Assume that $k=2,$ that is, there are two batches and each batch contains two strands. 

We can partition $\cal S$ into $\mathcal B_1=\{AGCT, GCAT\}$ and $\mathcal B_2=\{CAGA, GAGC\}$. The DNA synthesizer (printer) first prints $\calB_1$. It starts with two empty strings $(\emptyset,\emptyset)$. Then, it appends $A$ to the first strand and obtains strands $(A,\emptyset)$. It appends $G$ to both strands and obtains $(AG,G)$. Then, it appends the letters $C$, $A$, and $T$ as follows:
$$(\emptyset, \emptyset) \xrightarrow{A} (A, \emptyset)\xrightarrow{G} (AG, G)\xrightarrow{C} (AGC, GC)\xrightarrow{A} (AGC, GCA)\xrightarrow{T} (AGCT, GCAT).$$
After the last step, we get the set $\calB_1 = \{ AGCT, GCAT\}$. The printer prints $\calB_2$ as follows:
$$(\emptyset, \emptyset) \xrightarrow{C} (C, \emptyset)\xrightarrow{G} (C, G)\xrightarrow{A} (CA, GA)\xrightarrow{G} (CAG, GAG)\xrightarrow{A} (CAGA, GAG)\xrightarrow{C} (CAGA, GAGC).$$
In this example, we used the reference strand $AGCAT$ to print the set $\calB_1$ in five steps and the reference strand $CGAGAC$ to print the set $\calB_2$ in six steps. Therefore $\cost(\mathcal S)\leq 11.$
\end{example}

\subsection{Main results for multiple batches}\label{sec:results} 
Before describing our main results for multiple batches, 
we briefly and informally discuss the setting of a single batch---formal statements and proofs are in Section~\ref{sec:single_batch}. 
A natural reference strand to use to print a pool of strands $\cS$ is the periodic strand $(ACGT)^{*}$, where $ACGT$ repeats indefinitely. 

For this reference strand, the cost of printing a random strand can be written as 
$\sum_{i=1}^{n} X_{i}$, 
where $\left\{ X_{i} \right\}_{i=1}^{n}$ 
are i.i.d.\ uniformly random on $\{1,2,3,4\}$ in the case of unconstrained strands; 
in the case of strands without homopolymers, $\left\{ X_{i} \right\}_{i=1}^{n}$ are independent, with $X_{1}$ uniformly random on $\{1,2,3,4\}$ and $X_{i}$ uniformly random on $\{1,2,3\}$ for $i \geq 2$. 
By using a standard concentration inequality we then obtain the upper bounds 
$\cost \left( \cS \right) \leq 2.5 n + 3 \sqrt{n \log M}$ 
for unconstrained strands and 
$\cost \left( \cS \right) \leq 2 n + 3 \sqrt{n \log M}$ 
for strands without homopolymers, 
with both bounds holding with probability $1-o(1)$. 
Combining this with an appropriate stochastic domination argument that compares random walks, we also obtain matching lower bounds, for both choices of the strand universe $\cU$. 
This shows that for a single batch no reference strand can do asymptotically better than the periodic strand $(ACGT)^{*}$.

The setting of multiple batches, which is the focus of our work, presents interesting challenges. As a simple baseline, we could consider randomly partitioning $\cS$ into $k$ batches. A direct application of the single batch upper bound would provide a cost of roughly $2.5nk + O(k \sqrt{n \log(M/k)})$ for unconstrained strands and $2nk + O(k \sqrt{n \log (M/k)})$ for strands without homopolymers. We provide improvements in both cases by using a slightly more sophisticated batching method.


We first observe a symmetry property: For any strand without homopolymers the cost of printing it using $(ACGT)^{*}$ and the cost of printing it using $(TGCA)^{*}$ add up to $4n+1$, so the better choice of reference strand results in a cost of at most $2n$. This idea can be extended to a large set of strands, by choosing for each strand the better reference strand out of $(ACGT)^{*}$ and its reverse $(TGCA)^{*}$. 
We further improve upon the cost by leveraging a second idea. 
After partitioning strands based on which of the two reference strands is better, we then sort the strands based on their cost (with respect to the chosen reference strand). We then use a quantile-based batching process to group the first $M/k$ lowest cost strands, then the next $M/k$, etc. 
We show that combining these two ideas reduces the total cost to $2nk - \Theta(k\sqrt{n})$ for $k \geq 3$ batches.

In the case of unrestricted strands, such an improvement is not possible, although we are able to show that with $k$ batches a similar partitioning strategy, based on appropriately ordering the strands and using quantiles, enables us to save a factor of $k$ in the deviation term and obtain a total cost of $2.5nk + O(\sqrt{n \log M})$. 
We now formally state our results.

\begin{theorem}[Upper bounds]\label{thm:ub-multi}
		Let $\calS$ be a set of $M$ random strands in  $\ACGT^n$, and let $k$ be an integer satisfying $3 \leq k \leq \frac{1}{4}\sqrt{\frac{M}{\log M}}$. There exist absolute constants $C_{1} > 0$ and $C_{2} < \infty$ such that the following hold.
		\begin{enumerate}		
		\item \emph{\textbf{(Strands without homopolymers)}} 
		There exists a way to efficiently partition $\calS$ into $k$ equal size batches $\calB_1, \ldots, \calB_k$ such that with probability at least $1-1/M$ we have that 
		\[
		\sum_{i=1}^k \cost(\calB_i) \leq 2nk - C_1 k \sqrt{n}.
		\]
		\item \emph{\textbf{(Unconstrained strands)}} 
		There exists a way to efficiently partition $\calS$ into $k$ equal size batches $\calB_1, \ldots, \calB_k$ such that with probability at least  $1-1/M$ we have that
		\[
		\sum_{i=1}^k \cost(\calB_i) \leq 2.5nk + C_2 \sqrt{n \log M}.
		\]
	\end{enumerate}
\end{theorem}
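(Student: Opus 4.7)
The plan is to handle both bounds by sorting strands according to their cost against a cyclic reference strand and then batching by quantiles. Write $C_A(s)$ and $C_T(s)$ for the lengths of the minimal prefixes of $(ACGT)^{*}$ and $(TGCA)^{*}$ that contain $s$ as a subsequence. The main observation is that a batch $\cB$ printed against the truncation of $(ACGT)^{*}$ to its minimal sufficient length has cost exactly $\max_{s \in \cB} C_A(s)$. As recalled in the single-batch discussion, $C_A(s)$ is a sum of $n$ bounded independent increments, so it concentrates around its mean ($2.5n$ in the unconstrained case, $2n+1/2$ in the homopolymer-free case) with standard deviation $\sigma = \Theta(\sqrt{n})$ and near-Gaussian behavior by the CLT.

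For the unconstrained case, I sort all $M$ strands in increasing order of $C_A$ and assign ranks in $((i{-}1)M/k, iM/k]$ to batch $\cB_i$, so that $\cost(\cB_i) \leq Y_i := C_A^{(iM/k)}$. The distribution of $C_A$ is exactly symmetric about its mean $2.5n$ (the base increments are uniform on $\{1,2,3,4\}$), so $F^{-1}(p) + F^{-1}(1-p) = 5n$ and hence $\sum_{i=1}^{k-1} F^{-1}(i/k) = (k-1)\cdot 2.5 n$ by pairing $i$ with $k-i$. Combining this identity with DKW-type control of the empirical quantiles for $i < k$ and a Gaussian tail bound for the top batch, $Y_k \leq 2.5n + O(\sqrt{n \log M})$, gives a total of $2.5 n k + O(\sqrt{n \log M})$.

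For the homopolymer-free case, the savings over the naive bound of $2nk$ come from a symmetry trick. I would first verify the identity $C_A(s) + C_T(s) = 4n+1$ for every homopolymer-free $s$ of length $n$: the first character contributes $5$ to the sum for each of the four choices of $s_1$, and each transition $s_i \to s_{i+1}$ between two distinct characters contributes $4$, which one checks by enumerating the twelve ordered pairs. Since $4n+1$ is odd, $\min(C_A(s), C_T(s)) \leq 2n$ for every such $s$, with no ties. I then split $\cS$ into $\cS_A := \{s : C_A(s) \leq 2n\}$ and $\cS_T := \cS \setminus \cS_A$; by symmetry $|\cS_A|, |\cS_T| = M/2 \pm O(\sqrt{M \log M})$ with high probability. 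Within $\cS_A$ I sort by $C_A$ and form $\lceil k/2 \rceil$ quantile batches, each printed against a truncation of $(ACGT)^{*}$, and symmetrically for $\cS_T$ with $(TGCA)^{*}$.

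The savings come from the fact that the conditional distribution of $C_A(s)$ given $C_A(s) \leq 2n$ has mean $2n - \Theta(\sqrt{n})$, in line with the standard mean-shift of a Gaussian truncated just below its own mean, and is additionally hard-capped at $2n$. Consequently each of the $\Theta(k)$ quantile batches contributes at most $2n$ but on average $2n - \Theta(\sqrt{n})$, so summing gives the required $2nk - \Omega(k\sqrt{n})$. I expect the main obstacle to be handling the top batch on each side, where $C_A$ approaches the $2n$ cap and the quantile function of the truncated distribution becomes steep: there I plan to use the hard cap $C_A \leq 2n$ directly as a worst-case bound rather than a Gaussian quantile approximation, splitting the sum of batch maxima into a bulk regime (where CLT plus DKW give clean quantile concentration) and a single top batch (where we use the cap). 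Finally, I would absorb the batch-size imbalance $|\cS_A| - M/2$ and $M/k$-divisibility by allowing the batch sizes to be approximately, rather than exactly, $M/k$, as permitted by the footnote in Section~\ref{sec:problem_statement}.
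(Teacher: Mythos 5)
Your proposal for the unconstrained case matches the paper's: sort by $\cost_{\widetilde R}$ with $\widetilde R=(ACGT)^*$, batch by quantiles, pair the $i/k$ and $(k-i)/k$ quantiles via the symmetry of the cost distribution about $2.5n$, control empirical quantiles via DKW, and bound the top batch using the single-batch upper bound. That is exactly the paper's proof of part 2.

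For the homopolymer-free case you take a genuinely different route in the batching construction, and it is worth comparing. The paper sorts all of $\calS$ by $\cost_{\widetilde R}$ into $k$ consecutive blocks of exactly $M/k$, prints the first $\lceil k/2 \rceil$ with $\widetilde R$ and the rest with $\widebar R$, and must separately argue that the two ``middle'' batches $\calB_{\ell}$ and $\calB_{\ell+1}$ have combined cost at most $4n+1$ (this step requires a small coupling argument involving two distinguished strands $S'$ and $S''$). You instead threshold: put a strand in $\calS_A$ iff $\cost_{\widetilde R}(s) \leq 2n$, which by the identity $\cost_{\widetilde R}+\cost_{\widebar R}=4n+1$ and the odd parity of $4n+1$ is a deterministic dichotomy with no ties. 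This is cleaner: every batch you form has cost at most $2n$ by fiat, so the middle-batch difficulty evaporates. The price is that $|\calS_A|$ is $\mathrm{Bin}(M,1/2)$, so your batches have size $M/k \pm O(\sqrt{M\log M}/k)$ rather than exactly $M/k$. The theorem as stated promises equal-size batches, so you need to either invoke the remark that the results extend to roughly-equal batches, or redistribute a few strands across the threshold — but redistributing reintroduces exactly the middle-batch issue, so these two proofs are morally the same object seen from two angles.

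Two points to be careful about in filling in the sketch. First, your statement that ``each of the $\Theta(k)$ quantile batches ... on average $2n - \Theta(\sqrt n)$'' conflates the conditional mean of $\cost_{\widetilde R}$ on $\calS_A$ with the sum of batch maxima; the latter is a sum of quantiles $F^{-1}(i/\lceil k/2\rceil)$, and for the near-top indices these are within $O(\sqrt{n}/k)$ of the cap $2n$, not $\Omega(\sqrt{n})$ below it. The savings come from the lower-index batches, and you need to isolate a constant fraction of the batches (the paper uses the bottom $\lceil k/3\rceil$) that each save $\Omega(\sqrt n)$. You already anticipate the top-batch issue by invoking the hard cap, which is the right instinct. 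Second, the claim that the conditional distribution has mean $2n - \Theta(\sqrt n)$, and the associated quantile bound, rest on a CLT-type approximation for the sum of increments; to make this rigorous you need either a Berry--Esseen bound or the anti-concentration estimate the paper develops elsewhere (Lemma~\ref{lem:right_tail_LB}). Neither is hard, but your sketch elides it, as does the paper's informal justification of its equation~\eqref{eq:quantile_bound}, so you are not at a disadvantage relative to the paper here.
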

We complement these results with almost tight lower bounds.
Proving the following theorem is the most technically challenging part of our work.

\begin{theorem}[Lower bounds]\label{thm:lb-4-homoplymers-main}
Let $\calS$ be a set of $M \geq 10 n^2 \log n$ random strands in $\ACGT^n$, and let $k$ be a positive integer satisfying $k \leq \frac{1}{10}\sqrt{\log M / \log \log M}.$ 
	\begin{enumerate}		
	\item \emph{\textbf{(Strands without homopolymers)}} There exists an absolute constant $c_1 < \infty$ such that the following holds with probability at least $1 - c_{1}/M$. For any  partition of~$\calS$ into $k$ equal size batches $\calB_1, \ldots, \calB_k$, we have that
	\[
	\sum_{i=1}^k \cost(\calB_i) \geq 2nk - c_1k\sqrt{n\log k}.
	\]
	\item \emph{\textbf{(Unconstrained strands)}} Suppose that $M \leq \exp(n)$. There exists an absolute constant $c_2 >0$ such that the following holds with probability at least $1 - c_{2}^{-1}/M$. For any  partition of~$\calS$ into $k$ equal size batches $\calB_1, \ldots, \calB_k$, we have that
	\[
	\sum_{i=1}^k \cost(\calB_i) \geq 2.5nk + c_2\sqrt{n \log M}.
	\]
\end{enumerate}
\end{theorem}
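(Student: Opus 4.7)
The plan is to lower-bound $\sum_i \cost(\cB_i)$ over all partitions by combining a structural coverage lemma with a union bound over reference strands rather than over partitions. I would treat Part 2 (unconstrained strands) first as the cleaner case, then adapt the argument to Part 1. Throughout, set $\alpha = 2.5$ in the unconstrained case and $\alpha = 2$ in the no-homopolymer case, matching the target leading constant.

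The core tool is a \emph{coverage lemma}: for any reference $R \in \ACGT^L$ and a uniformly random $s \in \cU$,
\[
\P[s \text{ is a subsequence of } R] \leq p(L) = \exp\!\left(-\Omega\!\left(\frac{(\alpha n - L)_+^2}{n}\right)\right).
\]
The proof reduces to bounding the number of distinct length-$n$ subsequences of $R$ (of the appropriate type in the no-homopolymer case). The key structural fact is that the periodic reference $(ACGT)^*$ (together with its reverse in Part~1) is asymptotically optimal for this counting problem; modulo that, one reduces to analyzing the greedy embedding of $s$ into the periodic reference, giving a sum of i.i.d.\ bounded random variables to which Chernoff or moderate-deviation bounds apply.

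Given the coverage lemma, I would bound $\sum_i \cost(\cB_i)$ from below by a union bound over reference tuples. If some partition $\pi$ achieves $\cost(\cB_i) = L_i$ with $L := \sum_i L_i$ below the target, then each $\cB_i$ embeds into some $R_i \in \ACGT^{L_i}$, and every $s \in \cS$ is covered by at least one $R_i$. For a fixed length profile $(L_1, \ldots, L_k)$ there are at most $4^L$ choices for $(R_1, \ldots, R_k)$, and for each fixed choice the probability that all $M$ i.i.d.\ strands are covered is at most $(\sum_i p(L_i))^M$ by independence. Summing over the (polynomially many in $L$) length profiles yields a negligible total probability whenever $L$ is below the target. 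For Part~2 I would additionally invoke a heaviest-strand argument to boost the deviation: with probability $1 - O(1/M)$, some $s \in \cS$ has cost at least $2.5n + \Omega(\sqrt{n \log M})$ against every length-$L$ reference of the relevant type, forcing whichever batch contains it to have SCS of at least this length, while the other $k-1$ batches still contribute at least $2.5n$ each by the per-batch argument. For Part~1, I would exploit the forward/reverse periodic symmetry and a quantile-type comparison with the upper-bound partition to pin down the deviation as $-c_1 k \sqrt{n \log k}$.

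The main obstacle is the coverage lemma itself, with both the correct constant $\alpha$ and the correct moderate-deviation decay. A naive counting bound such as $\binom{L}{n} 4^{-n}$ is much too weak near $L = \alpha n$, so one must genuinely establish that the periodic reference is (near-)optimal for subsequence coverage---a subsequence-enumeration claim calling for a combinatorial argument about subsequence automata or an entropy-style comparison. Obtaining the $\exp(-\Omega((\alpha n - L)^2/n))$ form, rather than just a Chernoff-type tail valid well below the mean, is a second subtlety, handled by a local-CLT-style refinement of the greedy embedding analysis. A third difficulty, specific to Part~1, is nailing down the $\sqrt{\log k}$ factor: this requires carefully tracking the joint distribution of costs against the two periodic references and comparing an arbitrary partition to the quantile-based baseline used in the upper bound.
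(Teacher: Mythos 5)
Your proposal takes a genuinely different route from the paper---a global coverage union bound over reference tuples, versus the paper's per-batch quantile argument---but as stated it has two gaps, one of which is severe.

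\textbf{First gap: batch-size balance.} Your union bound says: if a cheap partition exists, then every $s \in \cS$ is covered by some $R_i$, and for a fixed tuple $(R_1,\ldots,R_k)$ this has probability at most $(\sum_i p(L_i))^M$. But this bound is vacuous as soon as one $L_i$ is close to $4n$ (so $p(L_i)\approx 1$), which is perfectly compatible with $\sum_i L_i$ being below the target when the profile is unbalanced (e.g.\ one long reference, the rest very short). The equal-batch-size constraint is what rules out such profiles, and ``every strand covered by at least one $R_i$'' forgets it. The paper instead argues per batch: since each $\cB_i$ contains exactly $M/k$ strands, $\cost(\cB_i) \geq \EQ_{1/k,\calR^*}(\calS)$, and Lemma~\ref{lem:DKW} (Dvoretzky--Kiefer--Wolfowitz, with a union bound over $\calR^*$) transfers this to the distributional quantile $Q_{1/(2k),\calR^*}(D)$, which Lemma~\ref{lem:lb-Q} bounds by $\alpha n - \sqrt{5n\log(2k)}$ via the same stochastic-domination idea you need for your coverage lemma. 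This structure natively enforces balance and avoids the degenerate profiles. (On the coverage lemma itself, you are overcomplicating it: no subsequence-automaton or entropy machinery is needed; the greedy-embedding increments $X_i(S,R)$ have, for any fixed $R$, conditional law uniform on four distinct positive integers, hence stochastically dominate i.i.d.\ uniform-on-$\{1,\dots,4\}$ (or $\{1,2,3\}$), and Hoeffding gives the tail you want.)

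\textbf{Second gap: the heaviest-strand argument for Part 2.} You claim that with probability $1-O(1/M)$ some $s\in\cS$ has cost at least $2.5n + \Omega(\sqrt{n\log M})$ against \emph{every} relevant reference. This is not true as stated: for each fixed $R$, the fraction of strands with cost that high is about $M^{-\Theta(1)}$, but the set of such strands changes with $R$, and there are $4^{\Theta(n)}$ references; a naive union bound would require a deviation of order $n$, not $\sqrt{n\log M}$. The paper's actual mechanism is what makes Part~2 go through, and you would need something like it: identify a class of strands that are uniformly hard for \emph{all} references---namely those with an excess of repeated adjacent bases, $d(S) \approx (\nicefrac{1}{4}+\delta)n$ with $\delta\approx\sqrt{\log M / n}$. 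Lemma~\ref{lem:lb-Q-plus-delta} shows (again by a careful conditional-expectation / submartingale argument on the increments, averaged over windows of three) that such strands have cost at least $2.5n + \nicefrac{2}{3}\delta n - O(\sqrt{n\log k})$ against every $R\in\calR^*$. Lemma~\ref{lem:S-partition} then couples $\calS_{D_{1/4},M}$ with a size-$\sqrt{M}$ sample from $D_{1/4+\delta}$ so that, with high probability, $\calS$ contains $\Omega(\sqrt{M})$ of these hard strands; pigeonhole then forces one batch to contain at least a $1/(3k)$-fraction of them, inflating that batch's cost. Your proposal has no analogue of this step, and without it the $+c_2\sqrt{n\log M}$ term is not obtained. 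Your treatment of the $-c_1k\sqrt{n\log k}$ deviation in Part~1 is also only a gesture (``quantile-type comparison with the upper-bound partition'') rather than an argument; the paper gets it directly from the per-batch quantile bound $Q_{1/(2k),\calR^*}(D_0)\geq 2n - \sqrt{5n\log(2k)}$.
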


Comparing Theorems~\ref{thm:ub-multi} and~\ref{thm:lb-4-homoplymers-main}, we see that the upper and lower bounds match up to the absolute constants in the deviation term when $k$ is small enough. As a consequence, this provides evidence that our 
batching method is nearly optimal, perhaps surprisingly. 

Furthermore, Theorems~\ref{thm:ub-multi} and~\ref{thm:lb-4-homoplymers-main} provide a clear separation between the two representative strand universes. 
On the one hand, for unconstrained strands we have, with probability $1-o(1)$, that 
$\cost(\cS) = 2.5 n k + \Theta \left( \sqrt{n \log M} \right)$;  
that is, the cost \emph{exceeds} the main term $2.5 n k$ by the deviation term. 
On the other hand, for strands without homopolymers we have, with probability $1-o(1)$, that 
$2nk - c_{1} k \sqrt{n \log k} 
\leq \cost(\cS) 
\leq 2nk - C_{1} k \sqrt{n}$; 
that is, the cost is \emph{smaller than} the main term $2 nk$ by the deviation term. 

\subsection{Related work}

For an overview of the biochemical DNA synthesis process, we refer the interested reader to the surveys~\cite{KosuriChurch2014,ceze2019molecular}. Our work is motivated by several experimental papers that address the challenge of reducing the synthesis cost in both single and multi-batch settings~\cite{hannenhalli2002combinatorial, kahng2002border, rahmann2003shortest, kahng2004scalable, ning2006distribution, rahmann2006subsequence, kumar2010dna, trinca2011parallel,ning2011multiple, srinivasan2014review}. Variants of the problem have also been studied that incorporate certain quality control measures~\cite{hubbell1999fidelity,colbourn2002construction,sengupta2002quality,milenkovic2006error}. Much of this previous work considers the  $(ACGT)^*$ supersequence when analyzing the synthesis cost. Rahmann first observed that in this case the single batch cost of uniformly random strings is approximately Gaussian, but he did not provide a formal analysis nor any asymptotic or finite-size bounds~\cite{rahmann2003shortest}. 
In the multi-batch setting, previous work uses the same cost function as we do, namely the sum of the shortest common supersequence (SCS) lengths for each batch~\cite{ning2006distribution,ning2011multiple}. In general, a wide array of algorithms have been proposed and empirically evaluated for selecting a short reference string given the set of DNA strands to synthesize. However, these heuristics do not come with provable guarantees, and many of them implicitly solve the SCS problem, which is known to be NP-hard for a collection of strings~\cite{raiha1981shortest,JiangLi1995}.  

From a theoretical point of view, a few recent works have considered minimizing the synthesis cost through coding-based approaches. Lenz et al.~study reference strings that have a large number of subsequences, and they consider mappings to encode data by a set of strings while minimizing the single-batch synthesis cost~\cite{lenz2020coding}.   A slightly different synthesis model has also been considered, where information is stored based on run-length patterns in the strings~\cite{anavy2019data,lee2019terminator,jain2020coding}. Our work also relates to combinatorial questions about the number of distinct subsequences~\cite{flaxman2004strings, hirschberg_tight_2000, L66,  pemantle2004asymptotics, pemantle2008twenty}.

There is also a large body of prior work on the longest common subsequence (LCS) of random strings~\cite{bukh2019periodic,chvatal1975longest,danvcik1995upper,houdre2016closeness,kiwi2005expected,lueker2009improved,navarro2001guided}. The expected LCS length of two random length $n$ strings is known to be $(\gamma+o(1)) n$ for a value $\gamma >0$ called the Chv\'{a}tal-Sankoff constant. Despite decades of effort, the exact value of $\gamma$ remains unknown for constant alphabet sizes. For two length $n$ strings, the LCS and SCS are related via the equality $\mathsf{SCS}(S_1,S_2) = 2n - \mathsf{LCS}(S_1,S_2)$, but for larger sets, no analogous relationship is known. In particular, our results show that the average SCS length for a large collection of strings behaves very differently than for a pair of strings. While we are not aware of prior results on the SCS for multiple batches, our single batch results improve an existing bound on the expected SCS length in the special case of $M=n$ strings (see Remark~\ref{rem:SCS}).

\subsection{Organization}

The rest of the paper is organized as follows. We begin in Section~\ref{sec:proof-overview} with a technical overview of our proofs. In Section~\ref{sec:single_batch}, we provide both upper and lower bounds for a single batch. In Section~\ref{sec:emp-quant}, we introduce the cost quantile preliminaries that we use for our multi-batch results. We prove the upper bounds for the multi-batch setting, Theorem~\ref{thm:ub-multi}, in Section~\ref{sec:ub_multi_batch}. Finally, we prove the lower bounds for the multi-batch setting, Theorem~\ref{thm:lb-4-homoplymers-main}, in Section~\ref{sec:lb_multi_batch}.

\section{Proof Overview}\label{sec:proof-overview}

In this section we give an overview of our results and the associated proofs. Suppose we want to synthesize a DNA strand $S$ using a reference strand $R$. Denote the length of the prefix of $R$ which we use for synthesis by $\cost_R(S)$. Then, the cost of printing a batch of strands $\calB$ using $R$ equals the maximum cost of printing $S$ for $S\in \calB$:
$$\cost_R(\calB) = \max_{S\in \calB} \cost_R(S).$$
We observe that the cost of printing any strand of length $n$ using the periodic reference strand $(ACGT)^*$ is at most $4n$, since the $i$-th base of $S$ can be printed using the corresponding base in the $i$-th quadruple of $(ACGT)^*$. Hence, the cost of synthesizing any batch of strands of length $n$ is bounded from above by $4n$. As we discuss later, the cost of every strand without homopolymers with respect to the reference strand $(ACGT)^*$ is at most $3n+1$. So the cost of any batch of strands without homopolymers is also at most $3n+1$. 

Since the cost of synthesizing every batch of strands is upper bounded by $4n$, we do not need to consider reference strands of length more than $4n$. However, for the sake of analysis, we shall assume that all reference strands $R$ have an infinite length. The first $4n$ bases of these strands are arbitrary, while the remaining infinite suffix is a repetition of the pattern $ACGT$. We denote the set of all such strands by $\calR^*$. Observe that every strand $\calS$ can be synthesized using every $R\in \calR^*$ because $R$ contains the substring $(ACGT)^*$. Note that when we synthesize a batch $\calB$ using a reference strand $R\in \calR^*$, we truncate $R$ after $\cost_R(\calB)$ bases, so effectively we use a reference strand of length $\cost_R(\calB)$.

\subsection{Cost of a Single Batch}

We first show how to estimate the cost of synthesizing a single batch of DNA strands. We prove that for a random strand $S$ of length $n$ and reference strand $\widetilde R=(ACGT)^*$, the expected $\cost_{\widetilde R}(S)$ equals $2.5 n$. We then use concentration inequalities to argue that the maximum cost of strands in $\calB$ is upper bounded by $2.5 n + O(\sqrt{n\log M})$ with high probability, where $M$ is the batch size. Similarly, we show that for every fixed strand $R$, we have that $\E[\cost_{R}(S)] \geq 2.5 n$. Hence, for every fixed $R$ the cost of $\calB$ is also lower bounded by $2.5 n + \Omega(\sqrt{n\log M})$ with high probability. We obtain a lower bound on the cost of a batch by taking the union bound over all $R\in \calR^*$. Similarly, we get lower and upper bounds of  $2 n + \Omega(\sqrt{n\log M})$ and $2 n + O(\sqrt{n\log M})$ for random strands without homopolymers. 

We now discuss how to compute $\E[\cost_{R}(S)]$ for a given reference strand $R$ and random $S$. Let $\tau_i(S,R)$ be the cost of the prefix $S_1,\dots, S_i$. In other words, $\tau_i(S,R)$ is the index of the base in $R$ that is used for synthesizing the $i$-th base in $S$. We let $\tau_0(S,R)=0$. Observe that $\left\{ \tau_i(S,R) \right\}_{i \geq 0}$ is a Markov chain: the value of $\tau_{i+1}(S,R)$ depends only on the current state $\tau_{i}(S,R)$ and the random value of $S_{i+1}$. We denote the increments of $\tau_{i}(S,R)$ by $X_i(S,R)$: for $i\in\{1,\dots, n\}$, let
$$X_i(S,R) := \tau_{i}(S,R) - \tau_{i-1}(S,R).$$
Then, $\cost_R(S) = \tau_{n}(S,R) = \sum_{i=1}^n X_i(S,R)$. For the reference strand $\widetilde R=(ACGT)^*$, each increment $X_i(S,\widetilde R)$ is a random variable uniformly distributed in $\{1,2,3,4\}$, and all $X_i(S,\widetilde R)$ are mutually independent. Consequently, 
$\E[X_i(S,\widetilde R)] = 2.5$ for all $i$ and thus  $\E[\cost_{\widetilde{R}}(S)] = 2.5 n$. Furthermore, by the central limit theorem, the deviation of the cost from its expectation, $\cost_{\widetilde{R}}(S)-2.5n$, is approximately Gaussian with mean $0$ and variance $1.25 n$. Thus, we can use Hoeffding's inequality and other concentration inequalities to obtain upper and lower bounds of on $\cost_{\widetilde R}(S)$.  These bounds imply that the cost of a single batch of $M$ strands equals $2.5n+\Theta(\sqrt{n \log M})$.

To show that $\E[X_i(S,R)] \geq 2.5$ for every $R\in \calR^*$ and not only for $R= \widetilde R$, we observe that the sequence $X_1(S,R),\dots,X_n(S, R)$ \emph{stochastically dominates} a sequence of i.i.d random variables $Y_1,\dots, Y_n$, where each $Y_i$ is uniformly distributed in $\{1,2,3,4\}$. Hence, 
$$\E[X_1(S,R)+\cdots +X_n(S,R)]
\geq  
\E[Y_1+\cdots +Y_n] = 2.5n.$$

For random strands without homopolymers, each jump $X_i (S,\widetilde R)$ is uniformly distributed in $\{1,2,3\}$ for $i>1$; and $X_1 (S,\widetilde R)$ is uniformly distributed in $\{1,2,3,4\}$. Hence, the expected cost $\cost_{\widetilde R}(S)$ is $2n + \nicefrac{1}{2}$. Also, note that the maximum possible value of $X_i (S,\widetilde R)$ is $3$ (for $i>1$). Hence, the cost of every strand is upper bounded by $3n+1$.

\subsection{Upper Bounds for Multiple Batches}
We are going to use the same reference strand $\widetilde R = (ACGT)^*$ for synthesizing all batches of unconstrained strands and two different reference strands, $\widetilde R = (ACGT)^*$
and its reverse $\widebar R = (TGCA)^*$, for synthesizing batches of strands without homopolymers. 

\medskip
\noindent\textbf{Na\"ive Approach.} Suppose we assign strands randomly to $k$ batches. Then, each batch consists of $M/k$ random strands sampled uniformly from $\{A,C,G,T\}^n$. Hence, the cost of every batch is $2.5n+\Theta(\sqrt{n \log M})$. Consequently, the total cost of synthesising $k$ batches is $2.5nk+k\cdot\Theta(\sqrt{n \log M})$. We now show that by carefully assigning strands to batches we can improve this cost to 
$2.5nk+\Theta(\sqrt{n \log M})$ for unconstrained strands. Similarly, we show how to improve a na\"ive solution of cost $2nk+k\cdot\Theta(\sqrt{n \log M})$ for strands without homopolymers to a solution of cost $2nk-\Omega(k\sqrt{n})$.

\medskip

\noindent\textbf{Unconstrained Strands.} 
Our strategy for splitting the set of unconstrained strands $\calS$ into $k$ batches is quite simple. For every strand $S$ in $\calS$, we compute $\cost_{\widetilde{R}}(S)$ and then sort strands by this cost. We put the first $M/k$ strands in the first batch, the second $M/k$ strands in the second batch, and so on. Then, the cost of the $i$-th batch is equal to the empirical $\nicefrac{i}{k}$-th quantile of $\left\{ \cost_{\widetilde{R}}(S) \right\}_{S \in \calS}$. We formally define empirical quantiles in Section~\ref{sec:emp-quant}. In Section~\ref{sec:emp-quant}, we also show that, with high probability, empirical quantiles of $\left\{ \cost_{\widetilde{R}}(S) \right\}_{S \in \calS}$ are very close to the corresponding quantiles of the distribution of the random variable $\cost_{\widetilde R}(S)$, where $S$ is randomly and uniformly drawn from $\{A,C,G,T\}^n$. The only exception is the empirical $1$-quantile of the sample $\calS$ which corresponds to the cost of the most expensive strand in $\calS$.  This cost is approximately equal to the $(1-\nicefrac{1}{M})$-quantile of the distribution of  $\cost_{\widetilde R}(S)$, where $M$ is the size of $\calS$.

As we discussed above, $\cost_{\widetilde R}(S)$ can be approximated by the random variable $2.5n + g$, where $g$ is a Gaussian random variable with mean $0$ and variance $1.25 n$. The sum of the $1/k, 2/k, \dots, (k-1)/k$ quantiles of a symmetric Gaussian distribution equals $0$, since the quantiles $i/k$ and $(k-i)/k$ are symmetric around $0$. However, the $(1-\nicefrac{1}{M})$-quantile of the distribution of $g$ is relatively large and approximately equals $c \sqrt{n \log M}$. Hence, the total cost of synthesizing $k$ batches approximately equals 
$$2.5nk + c\sqrt{n \log M}.$$
We make this argument formal in Section~\ref{sec:ub_multi_batch}.

\medskip

\noindent\textbf{Strands without Homopolymers.} If we use the same batching strategy as we discussed above for strands without homopolymers, we obtain a solution of cost 
$2nk + c\sqrt{n \log M}$ with high probability. However, somewhat surprisingly, we can do better by utilizing two reference strands, $\widetilde{R}=(ACGT)^*$ and 
$\widebar{R}=(TGCA)^*$, instead of just the single strand $\widetilde R$. We show that the 
random variables $\cost_{\widetilde R} (S)$ and 
$\cost_{\widebar R} (S)$ are anticorrelated. Specifically, for every strand $S$ without homopolymers, we (deterministically) have 
$$\cost_{\widetilde R} (S) + \cost_{\widebar R} (S) = 4n+1.$$


This observation suggests the following strategy: We first sort all strands $\calS$ by their cost when printed with $\widetilde R$. For the first $\lceil k/2 \rceil$ batches, we print them with $\widetilde R$, and we print the remaining batches with $\widebar R$. Overall, we will argue that this batching process results in $k-2$ batches having a cost of at most $2n$, and a constant fraction of these batches having an additional savings of $\Omega(\sqrt{n})$, which results in the ultimate savings of  $\Omega(k \sqrt{n})$. 
The only challenging batches are the ``middle'' two. We handle these by arguing that their costs are coupled so that together they do not exceed $4n+1$. 
We next explain the intuition behind the main savings. See Section~\ref{sec:ub_multi_batch_no_homopoly} for full details of the batching process and the analysis.

Since $(X_i(S,\widetilde R) + X_i(S,\widebar R))/2 = 2$ for all $i > 1$ and $S$ does not have homopolymers, the random variables $\cost_{\widetilde R} (S)$ and
$\cost_{\widebar R} (S)$ can be approximated by \emph{correlated} random variables $2n - g$ and $2n + g$, where $g$ is a Gaussian random variable with mean $0$ and variance $\nicefrac{2}{3}\,n$. The cost of every strand is thus approximately equal to $\min\{2n-g, 2n+g\} = 2n - |g|$, and the total cost of $k$ batches is approximately equal to 
the sum of the $\nicefrac{i}{k}$-quantiles of the random variable $2n - |g|$ for $i=1,\dots, k$. For sufficiently large $k$, this sum is approximately equal to 
$$k\cdot(\E[2n - |g|]) = k\cdot(2n - \E[|g|])= 2nk - k\sqrt{\frac{4}{3\pi}\,n}.$$
For small $k$ ($k>2$), the sum is upper bounded by $2nk -\Omega(k\sqrt{n})$. We prove this bound for 
$k\geq 3$
in  Section~\ref{sec:ub_multi_batch_no_homopoly}.

\subsection{Lower Bounds for Multiple Batches}
We now discuss how to obtain lower bounds on the cost of batch synthesis. We start with lower bounds that are based on the following observation: Every batch $\calB$ must contain a  $\nicefrac{1}{k}$ fraction of all strands in $\calS$. Consequently, its cost is lower bounded by the empirical $\nicefrac{1}{k}$-quantile of $\left\{ \cost_R(S) \right\}_{S \in \calS}$, which, in turn, approximately equals the  $\nicefrac{1}{k}$-quantile of the distribution of the random variable $\cost_R(S)$, where $S$ is a random strand. Here $R$ is the reference strand used for synthesising $\cB$. Using the notation (defined in Section~\ref{sec:emp-quant}) for empirical $q$-quantiles $\EQ_{q,R}(\calS)$ and $q$-quantiles $Q_{q,R}(D)$ of a distribution $D$, we can lower bound the cost of $\cB$ as follows:
$$\cost(\calB)\geq \min_{R\in \calR^*} \EQ_{\nicefrac{1}{k},R}(\calS) \gtrsim \min_{R\in \calR^*} Q_{\nicefrac{1}{k},R}(D_{\nicefrac{1}{4}}),$$
where $D_{\nicefrac{1}{4}}$ is the uniform distribution of strands of length $n$. Using Hoeffding's inequality for $\cost_R(S)$ along with bounds on $\EQ_{\nicefrac{1}{k},R}(\calS)$ and $Q_{\nicefrac{1}{k},R}(D_{\nicefrac{1}{4}})$ from Section~\ref{sec:emp-quant}, we then show that $Q_{\nicefrac{1}{k},R}(D_{\nicefrac{1}{4}})\geq 2.5n - O(\sqrt{n \log k})$ which yields a lower bound of $k\cdot (2.5n - O(\sqrt{n \log k}))$ on the total cost of synthesizing $k$ batches. For strands without homopolymers, the same argument gives a bound of $k\cdot (2n - O(\sqrt{n \log k}))$.

\medskip

\noindent\textbf{Improved Lower Bound for Unconstrained Strands.}
We then improve the lower bound on the cost of batch synthesis of  unconstrained strands by showing that while the cost of all batches are lower bounded by  $2.5n - O(\sqrt{n \log k})$, the cost of the most expensive batch is at least $2.5n + \Omega(\sqrt{n \log M})$. Note that a similar statement does not hold for strands without homopolymers. To prove that the cost of the most expensive batch is 
$2.5n + \Omega(\sqrt{n \log M})$, we consider a subset $\calS''$ of strands that have disproportionately many (roughly, 
$n/4 + c\sqrt{n \log M}$) repeated bases. We show 
that a random set $\calS$ contains many such strands (approximately $\sqrt{M})$ and then prove that for random strands $S$ from $\calS''$, the expected cost $\cost_R(S)$ is at least $2.5 n + c\sqrt{n \log M}$. This gives us a lower bound of $2.5 nk + c\sqrt{n \log M} - O(k\sqrt{n \log k})$ on the total cost of synthesising $k$ batches (note, typically $M \gg k$).

\section{Single batch analysis} \label{sec:single_batch}

As a warm-up to the multiple batch setting, in this section we analyze the single batch setting, that is, the setting where $k = 1$. 
As discussed in Section~\ref{sec:results}, by using the periodic strand $(ACGT)^{*}$ as a reference strand, 
we obtain (assuming $M \geq n$) the upper bounds 
$\cost \left( \cS \right) \leq 2.5 n + 3 \sqrt{n \log M}$ 
for unconstrained strands and 
$\cost \left( \cS \right) \leq 2 n + 3 \sqrt{n \log M}$ 
for strands without homopolymers, 
with both bounds holding with probability $1-o(1)$. 
The formal statement and its short proof are in Section~\ref{sec:single_batch_UB}.

We also obtain matching lower bounds, for both choices of the strand universe $\cU$, 
by an appropriate stochastic domination argument that compares random walks. 
The formal statements are in Section~\ref{sec:single_batch_LB} 
and their proofs are in Section~\ref{sec:single_batch_LB_proofs}.

\subsection{Upper bounds for a single batch}\label{sec:single_batch_UB} 

\begin{theorem}\label{thm:single_batch_UB}
Consider the problem setup in Section~\ref{sec:problem_statement} with $k=1$. Let $M' := \max \left\{ M, n \right\}$. 
\begin{enumerate}[(a)]
\item \emph{\textbf{(Unconstrained strands)}} 
With probability at least $1-1/n$ we have that 
\[
\cost \left( \cS \right) \leq 2.5 n + 3 \sqrt{n \log M'}.
\]

\item \emph{\textbf{(Strands without homopolymers)}} 
With probability at least $1-1/n$ we have that 
\[
\cost \left( \cS \right) \leq 2 n + 3 \sqrt{n \log M'}.
\]
\end{enumerate}
\end{theorem}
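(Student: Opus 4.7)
The plan is to use a single deterministic reference strand for the whole batch, namely the periodic strand $\widetilde{R} = (ACGT)^{*}$. This immediately gives
\[
\cost(\cS) \leq \cost_{\widetilde{R}}(\cS) = \max_{S \in \cS} \cost_{\widetilde{R}}(S),
\]
so the task reduces to controlling this single maximum with probability at least $1 - 1/n$. Everything else will be a standard concentration-plus-union-bound step.

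For a single random $S$, the proof overview already identifies $\cost_{\widetilde{R}}(S) = \sum_{i=1}^{n} X_{i}$ with independent increments $X_{i} := X_{i}(S,\widetilde{R})$, each supported on an interval of length at most $3$. In the unconstrained case each $X_{i}$ is uniform on $\{1,2,3,4\}$ with mean $2.5$, so $\E[\cost_{\widetilde{R}}(S)] = 2.5\,n$; in the no-homopolymer case $X_{1}$ is uniform on $\{1,2,3,4\}$ and $X_{i}$ is uniform on $\{1,2,3\}$ for $i \geq 2$, so $\E[\cost_{\widetilde{R}}(S)] = 2 n + \tfrac{1}{2}$. Hoeffding's inequality will then yield, for every $t > 0$,
\[
\P\!\left( \cost_{\widetilde{R}}(S) \geq \E[\cost_{\widetilde{R}}(S)] + t \right) \leq \exp\!\left( -\tfrac{2 t^{2}}{9 n} \right).
\]

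Next I would pick $t \approx 3\sqrt{n \log M'}$, which makes the right-hand side at most $1/(M')^{2}$. A union bound over the $M$ strands of $\cS$ then yields failure probability at most $M / (M')^{2} \leq 1/M' \leq 1/n$, using $M' = \max(M,n) \geq M$ and $M' \geq n$. In the unconstrained case this is exactly part (a) after substituting $\E[\cost_{\widetilde{R}}(S)] = 2.5\,n$. For the no-homopolymer case the mean is $2 n + \tfrac{1}{2}$ rather than $2 n$, so I would instead take $t = 3\sqrt{n \log M'} - \tfrac{1}{2}$ and check that the Hoeffding exponent only loses a lower-order correction—still well below $1/(M')^{2}$ per strand once $n$ is large enough; for the tiny regime where the constant $3$ no longer has slack, the claimed bound is either vacuous or handled by inspecting the trivial upper bound $3 n + 1$ on $\cost_{\widetilde{R}}(S)$.

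I do not expect any real obstacle. Fixing the reference strand reduces everything to a textbook tail-of-bounded-sums argument, and the extra $+\tfrac{1}{2}$ in the no-homopolymer mean is absorbed by the slack in the constant $3$. All the genuine difficulty—nontrivial choices of reference strand, the anticorrelation trick $\cost_{\widetilde{R}}(S) + \cost_{\widebar{R}}(S) = 4 n + 1$, and quantile-based batching—arises only in the multi-batch setting treated later in the paper.
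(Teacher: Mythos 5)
Your proof matches the paper's essentially line for line: fix $\widetilde R = (ACGT)^*$, write $\cost_{\widetilde R}(S)$ as a sum of independent bounded increments, apply Hoeffding with $\lambda = 3\sqrt{n\log M'}$, and union bound over the $M$ strands using $M/(M')^2 \leq 1/n$. If anything you are slightly more careful than the paper on part (b), where the mean is $2n+\tfrac12$ rather than $2n$: the paper simply declares that case ``analogous'' without addressing the $\tfrac12$ offset, whereas you notice it and correctly observe that the slack in the constant $3$ (or, more sharply, the tighter range sum $\sum_i(b_i-a_i)^2 = 4n+5$) absorbs it.
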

\begin{proof} 
Consider first the case of unconstrained strands. We fix $R:= (ACGT)^{*}$ as the reference strand with which we print the strands in $\cS$. 
For a strand $S \in \cU$, let $\tau_{i} \left( S \right)$ denote the index of the base of $R$ that is used to print the $i$th base of $S$. With this notation, $R$ needs $\max_{S \in \cS} \tau_{n} \left( S \right)$ bases to print all strands in $\cS$. 
This shows that $\cost \left( \cS \right) \leq \cost_{R} \left( \cS \right) = \max_{S \in \cS} \tau_{n} \left( S \right)$. 
Let $\lambda := 3 \sqrt{n \log M'}$ and $m := 2.5 n + \lambda$. 
Combining the previous observation with a union bound we thus have that 
\[
\p \left\{ \cost \left( \cS \right) > m \right\}
\leq 
\p \left\{ \cost_{R} \left( \cS \right) > m \right\}
= 
\p \left\{ \max_{S \in \cS} \tau_{n} \left( S \right) > m \right\}
\leq M \p \left\{ \tau_{n} \left( S \right) > m \right\}.
\]
The key observation is that if $S$ is a uniformly random strand, 
then 
$\tau_{n} \left( S \right) \overset{d}{=} \sum_{i=1}^{n} X_{i}$, 
where $\left\{ X_{i} \right\}_{i=1}^{n}$ are i.i.d.\ random variables that are uniform on $\left\{ 1, 2, 3, 4 \right\}$; here $\overset{d}{=}$ denotes equality in distribution. 
Therefore, noting that $\E \left[ X_{1} \right] = 2.5$, by Hoeffding's inequality (Theorem~\ref{thm:hoeffding}) we have that 
\[
\p \left\{ \tau_{n} \left( S \right) > m \right\} 
= \p \left\{ \sum_{i=1}^{n} \left( X_{i} - \E \left[ X_{i} \right] \right) > \lambda \right\}
\leq \exp \left( - \frac{2\lambda^{2}}{9n} \right) = \frac{1}{\left( M' \right)^{2}}.
\]
Combining the two displays above we have that 
$\p \left\{ \cost \left( \cS \right) > m \right\} 
\leq M / \left( M' \right)^{2} 
\leq 1/n$, as desired. 

The case of strands without homopolymers is analogous, 
the only change is that now 
$\left\{ X_{i} \right\}_{i=1}^{n}$ are independent random variables with $X_{1}$ uniformly random on $\left\{ 1, 2, 3, 4 \right\}$ and $X_{i}$ uniformly random on $\left\{ 1, 2, 3 \right\}$ for $i \geq 2$. 
\end{proof}

\subsection{Lower bounds for a single batch}\label{sec:single_batch_LB}\label{sec:lb_single_batch} 

The lower bounds for the two strand universes are analogous, but we state them separately for clarity. 
In both cases we present two bounds: one focusing on the constant of the linear term, with weak assumptions on $M$, 
the other focusing on the second order term, assuming slightly more about $M$. 

\subsubsection{Unconstrained strands} \label{sec:single_batch_LB_general} 

Recall the upper bound of $2.5n + 3 \sqrt{n \log M}$ from Theorem~\ref{thm:single_batch_UB} (assuming $M \geq n$).
Our first lower bound result says that the constant $2.5$ cannot be improved, even if the batch only consists of a (large enough) constant amount of strands.

\begin{theorem}\label{thm:lb_single_batch_general_basic} 
Consider the problem setup in Section~\ref{sec:problem_statement} with $k=1$ and unconstrained strands. 
Fix $\eps > 0$ and let $M \geq 21/\eps^{2}$.
Then, with probability at least $1 - \exp\left( - n \right)$ 
we have that 
\[
\cost \left( \cS \right) \geq \left( 2.5 - \eps \right) n.
\]
\end{theorem}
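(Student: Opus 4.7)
My plan is to combine a stochastic-domination bound on $\cost_R(S)$ with a union bound over short candidate reference strings. Observe that $\cost(\cS) < (2.5 - \eps) n$ is equivalent to the existence of a finite string $R'$ over $\ACGT$ of length at most $L := \lceil (2.5 - \eps)n \rceil - 1$ such that every strand in $\cS$ is a subsequence of $R'$. Since there are at most $\sum_{\ell = 0}^{L} 4^{\ell} \leq 4^{(2.5 - \eps)n + 1}$ such candidates, the goal is to show that for any fixed such $R'$, the probability that all $M$ i.i.d.\ strands in $\cS$ are subsequences of $R'$ decays as $\exp(-2M\eps^2 n/9)$, fast enough that the union bound still leaves total probability at most $\exp(-n)$ once $M \geq 21/\eps^2$.

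The key technical step is to strengthen the mean estimate $\E[X_i(S,R)] \geq 2.5$ from the proof overview to a full stochastic-domination statement: for every reference strand $R \in \cR^*$ and uniformly random $S \in \ACGT^n$, the cost $\cost_R(S) = \sum_{i=1}^n X_i(S,R)$ stochastically dominates $Z := \sum_{i=1}^n Y_i$, where the $Y_i$ are i.i.d.\ uniform on $\{1,2,3,4\}$. Conditional on $\tau_{i-1}(S,R)$, the four distances $a_1 < a_2 < a_3 < a_4$ from that position to the next occurrences of $A$, $C$, $G$, $T$ in $R$ are distinct positive integers, so $a_j \geq j$. Drawing $J_i \sim U\{1,2,3,4\}$ and letting $S_i$ be the character whose jump equals $a_{J_i}$ defines a coupling in which $S_i$ remains uniform on $\ACGT$ and the sequence $(S_i)$ remains i.i.d., while $X_i(S,R) = a_{J_i} \geq J_i =: Y_i$ almost surely. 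Summing yields $\cost_R(S) \geq Z$ almost surely under this coupling, proving the stochastic domination. Applying Hoeffding's inequality to $Z$ (a sum of $n$ i.i.d.\ variables in $[1,4]$ with mean $2.5$) then gives
\[
\P\bigl(\cost_R(S) < (2.5 - \eps)n\bigr) \leq \P\bigl(Z < (2.5 - \eps)n\bigr) \leq \exp(-2\eps^2 n / 9).
\]

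To finish, for any candidate $R'$ of length at most $L$, extend to $R' \cdot (ACGT)^* \in \cR^*$; then a strand $S$ is a subsequence of $R'$ iff $\cost_{R' \cdot (ACGT)^*}(S) \leq L < (2.5 - \eps)n$, so the previous display bounds the probability of this event by $\exp(-2\eps^2 n/9)$. Independence of the $M$ strands and a union bound over the at most $4^{(2.5-\eps)n + 1}$ candidates then give
\[
\P\bigl(\cost(\cS) < (2.5 - \eps)n\bigr) \leq 4 \cdot \exp\bigl(n \bigl(2.5 \log 4 - 2M\eps^2/9\bigr)\bigr),
\]
and $M \geq 21/\eps^2$ ensures $2M\eps^2 / 9 \geq 14/3 > 2.5 \log 4 + 1$, so the right-hand side is at most $\exp(-n)$ for $n$ large enough (the small remaining $n$ can be absorbed into a small tightening of the constant). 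The main obstacle is establishing the stochastic domination cleanly; once that is in place, Hoeffding and the counting estimate for short reference strings are routine, but both are essentially tight, so the constant $21$ in the hypothesis leaves very little slack and would require a more delicate argument to reduce.
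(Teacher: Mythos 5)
Your proof is correct and follows essentially the same route as the paper: a union bound over candidate reference strings, stochastic domination of $\cost_R(S)$ by an i.i.d.\ $\mathrm{Unif}\{1,2,3,4\}$ random walk (your explicit coupling via the ranks $J_i$ is a nice, slightly more detailed version of the paper's domination argument), then Hoeffding's inequality and the numerical check $\frac{2}{9}M\eps^2 \geq \frac{14}{3} > 2.5\log 4 + 1$. The one small difference is that you sum the count of candidates over all lengths $\ell \leq L$, picking up a prefactor of about $4$ and hence your ``for $n$ large enough'' caveat; the paper avoids this entirely by observing that $\bigcup_{R\in\Sigma^{\leq m}}\cE_R(\cS) = \bigcup_{R\in\Sigma^{m}}\cE_R(\cS)$ (any extension of a valid reference is still valid), so it suffices to union-bound over the exactly $4^{m}$ strings of length $m$, giving the claimed probability bound uniformly in $n$.
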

This result can be significantly sharpened: if the number of strands in $\cS$ is at least linear in $n$ (and at most exponential in $n$), 
then not only is the main term $2.5 n$ required in the cost, 
but even the additional $\sqrt{n \log M}$ term is necessary. 

\begin{theorem}\label{thm:lb_single_batch_general_sqrtn} 
Consider the problem setup in Section~\ref{sec:problem_statement} with $k=1$ and unconstrained strands. 
Let $M$ satisfy $\left( 5 \exp \left( 45 \right) \right) n \leq M \leq 5n \exp \left( 4 n / 25 \right)$. 
Then, with probability at least $1-\exp \left( - n \right)$ 
we have that 
\[
\cost \left( \cS \right) \geq 
2.5 n + \frac{1}{5} \sqrt{n \log \left( \frac{M}{5n} \right)}.
\]
\end{theorem}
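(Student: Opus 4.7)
The plan is to lower bound $\cost(\cS) = \min_{R \in \cR^*} \max_{S \in \cS} \cost_R(S)$ via a sharp tail estimate for $\cost_R(S)$ on a single uniformly random strand $S$, followed by a union bound over reference strands. Set $\lambda := \tfrac{1}{5}\sqrt{n\log(M/(5n))}$ and $t := 2.5n + \lambda$. Because $\cost_R(\cS) \leq 4n$ for every $R$, only the first $4n$ characters of $R$ affect the cost, so the union bound ranges over at most $4^{4n}$ effective reference strands. The goal thus reduces to showing $\p\{\cost_R(\cS) < t\} \leq \exp(-(4\log 4 + 1)\,n)$ uniformly in $R$.

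Fix an arbitrary $R \in \cR^*$. By the stochastic-domination argument used in Theorem~\ref{thm:lb_single_batch_general_basic} and sketched in Section~\ref{sec:proof-overview}, the increments $X_i(S,R)$ of the matching process dominate i.i.d.\ uniform $\{1,2,3,4\}$ variables $Y_1,\ldots,Y_n$ (mean $\mu=2.5$, variance $\sigma^2 = 5/4$). Hence
\[
p_R \;:=\; \p\{\cost_R(S) \geq t\} \;\geq\; \p\Big\{\sum_{i=1}^n Y_i \geq 2.5 n + \lambda\Big\}.
\]
A moderate-deviation lower bound (Bahadur--Rao, or equivalently an explicit Chernoff/exponential-tilting inversion for sums of bounded i.i.d.\ variables) yields
\[
p_R \;\geq\; c_1 \min\big\{1,\; \sqrt n/\lambda\big\} \exp\!\Big(-\tfrac{\lambda^2}{2\sigma^2 n}\Big)
\;=\; c_1 \min\big\{1,\; \sqrt n/\lambda\big\}\cdot (5n/M)^{2/125},
\]
valid in the Cram\'er regime $\lambda = O(n)$, which is enforced by the hypothesis $M \leq 5n\, e^{4n/25}$ since it forces $\lambda \leq 2n/25$. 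Because $\cost_R(\cS) = \max_{S\in\cS}\cost_R(S)$ is the maximum of $M$ i.i.d.\ copies, $\p\{\cost_R(\cS) < t\} \leq (1-p_R)^M \leq \exp(-M p_R)$, and applying the union bound gives
\[
\p\{\cost(\cS) < t\} \;\leq\; \exp\!\big((4\log 4)\,n - M p_R\big).
\]

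It remains to verify that the lower hypothesis $M \geq 5 e^{45} n$ is strong enough to force $Mp_R \geq (1 + 4\log 4)\, n \approx 6.55\, n$. At the lower endpoint $M = 5 e^{45} n$, the ratio $\lambda/(\sigma\sqrt n)$ is an absolute constant of order~$1$, so $p_R$ is itself a positive constant and $Mp_R = \Omega(M) \gg n$; as $M$ grows, the factor $M^{123/125}(5n)^{2/125}$ scales polynomially in $M$ and continues to dominate $n$, with the subexponential $\sqrt n/\lambda$ correction costing at most $\mathrm{poly}(n)$. The specific constants $1/5$, $5 e^{45}$, and $4/25$ in the theorem are calibrated precisely so that this accounting goes through. \textbf{The main obstacle} is the tail lower bound of the middle step: a plain CLT or Berry--Esseen estimate fails once $\lambda \gg \sqrt n$, because the Gaussian tail then drops below the $O(1/\sqrt n)$ Berry--Esseen error, and a Cram\'er-style moderate-deviation lower bound (or an explicit Chernoff inversion performed by hand on the known MGF of $Y_i$) is required to secure the exponent $-\lambda^2/(2\sigma^2 n)$ across the whole range $\lambda \in [\Theta(\sqrt n),\, O(n)]$---so that the exponential gain from the $M$ i.i.d.\ strands can beat the $4^{4n}$ union-bound overhead.
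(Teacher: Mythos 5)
Your proposal mirrors the paper's proof in all structural respects: both fix $\lambda = \tfrac{1}{5}\sqrt{n\log(M/(5n))}$, reduce the cost over $\cS$ to a max over i.i.d.\ strands, invoke the stochastic domination of $X_i(S,R)$ by i.i.d.\ uniform $\{1,2,3,4\}$ variables, and close with a union bound over reference strands (the paper uses $4^m$ with $m\approx 2.58n$ rather than $4^{4n}$, an inconsequential difference). You have correctly identified the one genuinely nontrivial ingredient: a lower bound on $\p\{\sum_i Y_i \geq 2.5n+\lambda\}$ that holds with the right exponential rate across the full window $\lambda \in [\Theta(\sqrt n),\, O(n)]$, beyond what CLT/Berry--Esseen can give. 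The paper supplies this via Lemma~\ref{lem:right_tail_LB}, a self-contained Paley--Zygmund argument with explicit two-sided MGF estimates, yielding $\p\{\sum Y_i > m\} \geq \exp(-25\lambda^2/n) = 5n/M$ for $(4/3)\sqrt n \leq \lambda \leq 2n/25$; substituting this makes the union-bound accounting trivial, namely $4^m(1-5n/M)^M \leq \exp((2.58\log 4)n - 5n) \leq e^{-n}$.

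Two small caveats. First, the constant $1/(2\sigma^2) = 2/5$ in the exponent you wrote down is the leading Gaussian rate; for $\lambda$ as large as $2n/25$ the true Cram\'er rate function exceeds $\lambda^2/(2\sigma^2 n)$ by a lower-order but $n$-dependent amount, so your stated tail bound is a slight overclaim in that regime (a Paley--Zygmund or an honest Chernoff inversion yields a worse but uniformly valid constant, as in the paper's exponent $25\lambda^2/n$). Your conclusion is unaffected because the slack in $M \geq 5e^{45}n$ is enormous. Second, the hypothesis constants are calibrated to the paper's cruder exponent, not to a sharp moderate-deviation rate, so ``calibrated precisely for this accounting'' is not quite the right description, though again this is harmless. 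In short: same approach, correct architecture, but the key tail lemma is invoked as a black box rather than proved, and its stated exponent needs the small correction noted above.
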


\begin{remark}[Expected Shortest Common Supersequence ($\mathsf{SCS}$)] \label{rem:SCS}
Since $\cost(\calS) = \mathsf{SCS}(\calS)$, our results also provide bounds on the SCS of a set of strings. Jiang and Li consider the case when $M=n$, and they prove that $\E_\calS[\mathsf{SCS}(\calS)] = 2.5n \pm O(n^{0.707})$ for a set $\calS$ of $n$ uniformly random length $n$ quaternary strings
\cite[Corollary 4.11]{JiangLi1995}. Combining Theorem~\ref{thm:single_batch_UB} and Theorem~\ref{thm:lb_single_batch_general_basic} with $\epsilon = \sqrt{21/n}$ we obtain  
$2.5n - \sqrt{21n} \leq \E_\calS[\mathsf{SCS}(\calS)] \leq 2.5n + 3\sqrt{n \log n},$ tightening the prior result. 
\end{remark}

\subsubsection{Strands without homopolymers} \label{sec:lb_single_batch_no_homopolymers} 

Next, we state theorems analogous to those in Section~\ref{sec:single_batch_LB_general}, but in the constrained setting where $\cU$ contains strands without homopolymers. (We allow the reference strand to potentially have homopolymers.)
Recall the upper bound of $2n + 3 \sqrt{n \log M}$ from Theorem~\ref{thm:single_batch_UB} (assuming $M \geq n$).
Our first lower bound result says that the constant $2$ cannot be improved, even if the batch only consists of a (large enough) constant amount of strands.

\begin{theorem}\label{thm:lb_single_batch_no_homopolymers_basic}
Consider the problem setup in Section~\ref{sec:problem_statement} with $k=1$ and strands without homopolymers.  
Fix $\eps > 0$ and let $M \geq 9 / \eps^{2}$. 
Then, with probability at least $1 - \exp\left( - n \right)$ 
we have that 
\[
\cost \left( \cS \right) \geq \left( 2 - \eps \right) n.
\] 
\end{theorem}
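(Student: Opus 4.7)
My plan is to mirror the proof of Theorem~\ref{thm:lb_single_batch_general_basic}, with the leading constant $2.5$ replaced by $2$. Fix any reference strand $R \in \cR^{*}$, let $S$ be uniform on $\cU$, and write $X_i = \tau_i(S,R) - \tau_{i-1}(S,R)$ so that $\cost_R(S) = \sum_{i=1}^n X_i$. At any position $\tau_{i-1}$ in $R$, the four distances $d_c := \min\{j \geq 1 : R_{\tau_{i-1}+j} = c\}$ for $c \in \{A,C,G,T\}$ are four distinct positive integers (all finite since $R$ terminates in $(ACGT)^{*}$). Conditional on $\cF_{i-1} := \sigma(S_1,\dots,S_{i-1})$ and $S_{i-1} = c$, for $i \geq 2$ the base $S_i$ is uniform on $\{A,C,G,T\}\setminus\{c\}$, so $X_i$ is uniform on the three distinct positive integers $\{d_{c'} : c' \neq c\}$, which sorted are at least $1, 2, 3$. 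Hence $X_i$ stochastically dominates $\mathrm{Unif}\{1,2,3\}$; the $i = 1$ case is analogous and gives dominance over $\mathrm{Unif}\{1,2,3,4\}$.

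This conditional dominance lifts, via the standard monotone coupling using one $[0,1]$-uniform variable per jump, to $\cost_R(S) \geq \sum_{i=1}^{n} Y_i$ almost surely, where the $Y_i$ are independent with $Y_1 \sim \mathrm{Unif}\{1,2,3,4\}$ and $Y_i \sim \mathrm{Unif}\{1,2,3\}$ for $i \geq 2$, so $\E[\sum Y_i] = 2n + \tfrac{1}{2}$. Using $\cost_R(\cS) = \max_{S \in \cS}\cost_R(S)$, independence across the $M$ strands, and Hoeffding's inequality (Theorem~\ref{thm:hoeffding}), for every fixed $R \in \cR^{*}$,
\[
\p\bigl(\cost_R(\cS) \leq (2-\eps)n\bigr) \leq \exp\!\left(-\frac{2 M (\eps n + 1/2)^2}{4n+5}\right).
\]

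A union bound over reference strands completes the proof, and this is the step where the main technical obstacle lies. Naively there are $4^{L}$ candidate prefixes of length $L \leq (2-\eps)n \leq 2n$, contributing a factor of roughly $e^{(4\ln 2)n}$ that is just too large to be absorbed by the concentration bound when $M = 9/\eps^{2}$. To close the gap, I would first prove via a short exchange argument that the shortest common supersequence of homopolymer-free strings is itself homopolymer-free: if a candidate SCS $T$ satisfies $T_i = T_{i+1}$, then no embedding of a homopolymer-free input can use both positions, so deleting $T_i$ and shifting the affected embeddings to $T_{i+1}$ produces a strictly shorter common supersequence, a contradiction. It then suffices to union bound over homopolymer-free prefixes, of which there are at most $4 \cdot 3^{L-1}$ of length $L$, yielding
\[
\p\bigl(\cost(\cS) \leq (2-\eps)n\bigr) \leq \sum_{L=1}^{\lfloor (2-\eps)n \rfloor} 4 \cdot 3^{L-1} \exp\!\left(-\frac{2 M (\eps n + 1/2)^2}{4n+5}\right);
\]
a routine constant-tracking (using $(\eps n + 1/2)^2 \geq \eps^2 n^2$, $4n+5 \leq 5n$, and a direct check in the small-$n$/small-$\eps n$ regime where $\cost(\cS) \geq n$ already essentially suffices) confirms that this is at most $\exp(-n)$ whenever $M \geq 9/\eps^{2}$.
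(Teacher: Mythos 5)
Your core strategy---stochastic domination of the per-character jumps by independent uniforms, Hoeffding's inequality per fixed reference strand, and a union bound over reference strands---is exactly the paper's approach (which simply points back to the proof of Theorem~\ref{thm:lb_single_batch_general_basic} with $m:=(2-\eps)n$ and $Y_i\sim\mathrm{Unif}\{1,2,3\}$). However, you have introduced an unnecessary detour at the union-bound step, and the worry that motivates it rests on an arithmetic slip.

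Concretely, you claim that the naive union bound over all $4^{m}$ reference strands ``is just too large to be absorbed'' when $M=9/\eps^2$, and to fix this you prove an exchange lemma restricting attention to homopolymer-free supersequences. But the naive bound already closes: with $m\leq 2n$, the union-bound factor is $4^{m}\leq 4^{2n}=\exp(2n\log 4)\approx\exp(2.77n)$, while Hoeffding (using the paper's simpler choice of \emph{all} $Y_i\sim\mathrm{Unif}\{1,2,3\}$, which have range $2$ and mean $2$) gives per strand $\p\{\sum_i Y_i\leq(2-\eps)n\}\leq\exp(-\eps^2 n/2)$, hence $\exp(-\eps^2 M n/2)=\exp(-4.5n)$ across the $M$ i.i.d.\ strands. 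The product is $\exp\left((2\log 4-4.5)n\right)\leq\exp(-1.7n)\leq\exp(-n)$. So the factor $\exp(2.77n)$ is comfortably absorbed, and the exchange argument and restriction to homopolymer-free prefixes---while correct as a mathematical claim about SCS of homopolymer-free strings---are not needed. Your sharper choice of $Y_1\sim\mathrm{Unif}\{1,2,3,4\}$ also slightly complicates the constant-tracking (the denominator $4n+5$ vs.\ the paper's $4n$) without any compensating benefit; the paper deliberately bounds $X_1$ by the weaker $\mathrm{Unif}\{1,2,3\}$ to keep the Hoeffding exponent clean. In short: the proof idea is right and ultimately yields a valid argument, but the ``main technical obstacle'' you identify is not actually an obstacle.
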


This result can be significantly sharpened: 
if the number of strands in $\cS$ is at least linear in $n$ (and at most exponential in $n$), then not only is the main term $2n$ required in the cost, 
but even the additional $\sqrt{n \log(M)}$ term is necessary.

\begin{theorem}\label{thm:lb_single_batch_no_homopolymers_sqrtn}
Consider the problem setup in Section~\ref{sec:problem_statement} with $k=1$ and strands without homopolymers.  
Let $M$ satisfy $\left( 5 \exp \left( 45 \right) \right) n \leq M \leq 5n \exp \left( 4 n / 25 \right)$. 
Then, with probability at least $1-\exp \left( - n \right)$ 
we have that 
\[
\cost \left( \cS \right) \geq 
2 n + \frac{3}{20} \sqrt{n \log \left( \frac{M}{5n} \right)}.
\] 
\end{theorem}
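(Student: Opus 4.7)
The plan is to mirror the proof of Theorem~\ref{thm:lb_single_batch_general_sqrtn}, replacing the comparison with an i.i.d.\ uniform-on-$\{1,2,3,4\}$ random walk by a comparison with an i.i.d.\ uniform-on-$\{1,2,3\}$ walk adapted to the no-homopolymer constraint. First, I would fix a reference $R \in \cR^{*}$ and write $\cost_{R}(S) = \sum_{i=1}^{n} X_{i}(S,R)$ as in Section~\ref{sec:proof-overview}. For $i \geq 2$, the no-homopolymer assumption forces $S_{i}$ to be uniform over the three letters distinct from $S_{i-1}=c$, and $X_{i}$ equals the gap in $R$ from position $\tau_{i-1}$ to the next occurrence of $S_{i}$. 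Because the three candidate gaps (one for each non-$c$ letter) are three \emph{distinct} positive integers, a short case analysis yields $\Pr(X_{i} > 1 \mid \text{past}) \geq 2/3$ and $\Pr(X_{i} > 2 \mid \text{past}) \geq 1/3$; the first jump $X_{1}$ similarly dominates uniform on $\{1,2,3,4\}$. Hence $(X_{1},\ldots,X_{n})$ jointly stochastically dominates an i.i.d.\ uniform-on-$\{1,2,3\}$ sequence $(Y_{1},\ldots,Y_{n})$, so $\cost_{R}(S) \geq_{\mathrm{st}} W_{n} := \sum_{i=1}^{n} Y_{i}$, with $\E[W_{n}]=2n$ and $\Var(W_{n})=2n/3$, \emph{regardless of~$R$}.

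Next, set $\lambda := \tfrac{3}{20}\sqrt{n \log(M/(5n))}$. The hypothesis $5 e^{45}\,n \leq M \leq 5n\,e^{4n/25}$ translates into $\sqrt{n} \lesssim \lambda \leq 3n/50$, which sits in the moderate-deviation window for $W_{n}$. I would then invoke a one-sided Chernoff/Cramér lower bound for bounded i.i.d.\ variables (exponential tilting with a local CLT correction) in the form
\[
p := \Pr(W_{n} \geq 2n + \lambda) \;\geq\; \exp\!\Bigl(-\tfrac{3}{4n}\lambda^{2}(1+o(1))\Bigr) \;=\; (5n/M)^{\,27/1600 \,+\, o(1)}.
\]
The exponent $\tfrac{27}{1600} = \tfrac{3}{4}\cdot(\tfrac{3}{20})^{2}$ is precisely what dictates the constant $\tfrac{3}{20}$ in the statement, and the upper bound on $M$ is chosen so that $\lambda/n \leq 3/50$ stays small enough to absorb the $o(1)$ correction into a clean inequality.

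Finally, I would execute a union bound. Since $(ACGT)^{*}$ synthesises any length-$n$ no-homopolymer strand using at most $3n+1$ bases, $\cost(\cS) \leq 3n+1$ almost surely, and the infimum defining $\cost(\cS)$ depends only on the first $3n+1$ bases of each $R \in \cR^{*}$. This leaves at most $4^{3n+1}$ effectively distinct reference strands. For any fixed $R$, independence of the $M$ strands gives $\Pr(\cost_{R}(\cS) < 2n + \lambda) \leq (1-p)^{M} \leq e^{-Mp}$, so
\[
\Pr\bigl(\cost(\cS) < 2n + \lambda\bigr) \;\leq\; 4^{3n+1}\,e^{-Mp},
\]
and one checks that the lower bound $M \geq 5 e^{45} n$ forces $Mp$ to dominate $(1 + 3\log 4)n + O(1)$ with exponential room to spare, yielding the desired $1 - e^{-n}$ probability.

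The delicate step will be the tail lower bound in Step~2: Berry--Esseen alone only produces a matching Gaussian lower bound for $\lambda = O(\sqrt{n})$, so a genuine moderate-deviation estimate valid throughout $\sqrt{n} \lesssim \lambda \lesssim n/C$ is required. I expect to reuse (or directly transpose) the tilted-measure lower bound developed for the unconstrained analogue, Theorem~\ref{thm:lb_single_batch_general_sqrtn}, carrying the uniform-on-$\{1,2,3,4\}$ walk over to the uniform-on-$\{1,2,3\}$ walk; the numerical constants $3/20$, $45$, and $4/25$ in the statement are calibrated so that the remaining bookkeeping around this estimate closes cleanly.
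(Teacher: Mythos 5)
Your overall strategy mirrors the paper's exactly: stochastic domination of the increments by a uniform-on-$\{1,2,3\}$ walk, a right-tail lower bound for that walk, and a union bound over reference strands. However, two calibration issues in your sketch would need to be fixed for the argument to close.

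First, the tail lower bound. You invoke a moderate-deviation estimate of the form $p \geq \exp(-\tfrac{3}{4n}\lambda^{2}(1+o(1)))$, attributing the constant $\tfrac{3}{20}$ in the statement to the Gaussian exponent $\tfrac{3}{4}$. This is not what the paper does, and the attribution is a misreading of where $\tfrac{3}{20}$ comes from. The paper's Lemma~\ref{lem:right_tail_LB} is a self-contained Paley--Zygmund argument giving the deliberately lossy bound
$\p\{\sum_i Y_i - 2n \geq \lambda\} \geq \exp(-400\lambda^{2}/(\ell^{2}n))$,
valid for $\ell\sqrt{n}/3 \leq \lambda \leq \ell n/50$; with $\ell=3$ this is $\exp(-\tfrac{400}{9n}\lambda^{2})$, an exponent roughly $60$ times worse than your Gaussian $\tfrac{3}{4}$. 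The constant $\tfrac{3}{20} = \ell/20$ is chosen precisely so that this lossy exponent evaluates to $\log(M/(5n))$, giving $p \geq 5n/M$ exactly. If you want the cleaner Gaussian constant you would indeed need the sharper Cram\'er-type estimate, but then the $\ell\sqrt{n}/3 \leq \lambda \leq \ell n/50$ window condition in the theorem statement, and the hypothesis $5e^{45}n \leq M \leq 5n e^{4n/25}$, no longer line up with anything; the statement's constants are calibrated to Lemma~\ref{lem:right_tail_LB}, not to the sharp MDP.

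Second, and more importantly, your union bound over $4^{3n+1}$ reference strands is too coarse to combine with the paper's $p \geq 5n/M$. With $Mp \geq 5n$, you would get $\p\{\cost(\cS) < m\} \leq 4^{3n+1}e^{-5n} = 4\cdot e^{(3\log 4 - 5)n}$, and $3\log 4 \approx 4.16 > 5 - 1 = 4$, so the exponent is only $\approx -0.84n$, which does not beat $e^{-n}$. The paper's union bound is over $R \in \Sigma^{m}$ with $m = 2n + \lambda$, which is at most $2n + 3n/50 \approx 2.06n$, giving $4^{m}e^{-5n} \leq e^{(2.06\log 4 - 5)n} \approx e^{-2.14n} \leq e^{-n}$. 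Your bound works only because you are feeding in the overly generous (and unproven) $p \approx (5n/M)^{27/1600}$; once you replace that with the Lemma~\ref{lem:right_tail_LB} bound, as you say you expect to do, you must also tighten the union bound to $4^{m}$ rather than $4^{3n+1}$.

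In short: same approach, and you correctly identify the deferred tail lemma as the crux, but the narrative misidentifies the source of the constant $\tfrac{3}{20}$ and the union bound as sketched is not tight enough for the stated $1-e^{-n}$ once the actual tail bound is plugged in.
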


\subsection{Lower bound proofs for a single batch} \label{sec:single_batch_LB_proofs} 

We start with some definitions. 
Let $\Sigma := \left\{ A, C, G, T \right\}$. 
For two strands $R, S \in \Sigma^{*}$, let $\cE_{R} \left( S \right)$ denote the event that $R$ is a superstring of $S$.
Let $\cE_{R} \left( \cS \right)$ denote the event that $R$ is a superstring of all strands in $\cS$, that is,
$\cE_{R} \left( \cS \right) := \cap_{S \in \cS} \cE_{R} \left( S \right)$.
For an integer $m$, let $\cE_{m} \left( \cS \right)$ denote the event that there exists a strand $R \in \Sigma^{\leq m}$ such that the event $\cE_{R} \left( \cS \right)$ holds.
That is,
\[
\left\{ \cost \left( S \right) \leq m \right\} 
= 
\cE_{m} \left( \cS \right)
= \bigcup_{R \in \Sigma^{\leq m}} \cE_{R} \left( \cS \right)
= \bigcup_{R \in \Sigma^{m}} \cE_{R} \left( \cS \right),
\]
where the second equality holds because if $R$ is a superstring of $S$, then all superstrings of $R$ are also a superstring of $S$.

\subsubsection{Proofs for unconstrained strands}

\begin{proof}[Proof of Theorem~\ref{thm:lb_single_batch_general_basic}]
Fix $m := \left( 2.5 - \eps \right) n$.
We start with a union bound:
\begin{equation}\label{eq:union_bound}
\p \left\{ \cost \left( \cS \right) \leq m \right\} 
= 
\p \left\{\cE_{m} \left( \cS \right) \right\}
\leq \sum_{R \in \Sigma^{m}} \p \left\{ \cE_{R} \left( \cS \right) \right\}
= \sum_{R \in \Sigma^{m}} \left( \p \left\{ \cE_{R} \left( S \right) \right\} \right)^{M},
\end{equation}
where the equality is due to the fact that the strands in $\cS$ are i.i.d.
Our goal now is to bound $\p \left\{ \cE_{R} \left( S \right) \right\}$,
where $R \in \Sigma^{m}$ is fixed and $S \in \Sigma^{n}$ is uniformly random.

To understand the probability of this event we introduce some notation.
First, we extend the reference strand $R$ indefinitely, by concatenating the strand $\left( ACGT \right)^{*}$ to the end of $R$---this is done just so that everything in the following is well-defined---we refer to this extended strand as~$R'$.
Note that $R'$ is a superstring of $S$,
and the original reference strand $R$ is a superstring of $S$ if and only if
the printing of $S$ using $R'$ succeeds in at most $m$ steps.
Let $\tau_{i} \left( S, R' \right)$ denote the index of the base of $R'$ that is used to print the $i$th base of $S$.
We also define $X_{1} \left( S, R' \right) := \tau_{1} \left( S, R' \right)$,
and $X_{i} \left( S, R' \right) := \tau_{i} \left( S, R' \right) - \tau_{i-1} \left( S, R' \right)$ for $i > 1$.
With this notation we have that
\[
\cE_{R} \left( S \right)
= \left\{ \tau_{n} \left( S, R' \right) \leq m \right\}
= \left\{ \sum_{i=1}^{n} X_{i} \left( S, R' \right) \leq m \right\}.
\]
Let $\left\{ Y_{i} \right\}_{i = 1}^{n}$ be i.i.d.\ random variables that are uniform on $\left\{ 1, 2, 3, 4 \right\}$.
Note that for an arbitrary $R$ the random variables $\left\{ X_{i} \left( S, R' \right) \right\}_{i=1}^{n}$ are not i.i.d.---in fact, they are not even necessarily independent.
Specifically,
given $R'$ and
$\left\{ X_{j} \left( S, R' \right) \right\}_{j=1}^{i-1}$,
the \emph{support} of $X_{i} \left( S, R'  \right)$
is determined.
However, no matter what, this support always consists of four distinct positive integers---the distances to the next occurrences of the four bases $A$, $C$, $G$, and $T$ in $R$.
Moreover, the \emph{distribution} on this support is always uniform---this is because,
given $R'$ and
$\left\{ X_{j} \left( S, R' \right) \right\}_{j=1}^{i-1}$,
the value of $X_{i} \left( S, R' \right)$ is determined by $S_{i}$, which is uniformly random on $\Sigma$.
Since at best (in terms of minimization) the four distinct positive integers in the support of $X_{i} \left( S, R' \right)$ are $1$, $2$, $3$, and $4$, the random variable $X_{i} \left( S, R' \right)$ \emph{stochastically dominates} $Y_{i}$.
Moreover, since the bases $\left\{ S_{i} \right\}_{i = 1}^{n}$ are independent,
we also have that
$\sum_{i=1}^{n} X_{i} \left( S, R' \right)$
stochastically dominates
$\sum_{i=1}^{n} Y_{i}$,
and so we have that
\begin{equation}\label{eq:stochastic_domination}
\p \left\{ \cE_{R} \left( S \right) \right\}
= \p \left\{ \sum_{i=1}^{n} X_{i} \left( S, R' \right) \leq m \right\}
\leq \p \left\{ \sum_{i=1}^{n} Y_{i} \leq m \right\}.
\end{equation}
We can now bound this latter quantity using standard estimates. In particular, using Hoeffding's inequality (Theorem~\ref{thm:hoeffding}) we obtain that
\[
\p \left\{ \sum_{i=1}^{n} Y_{i} \leq m \right\}
\leq \exp \left\{ - \frac{2}{9} \eps^{2} n \right\}.
\]
Plugging this back into~\eqref{eq:union_bound} and~\eqref{eq:stochastic_domination}, we have obtained the bound
\begin{equation}\label{eq:single_batch_bound}
\p \left\{ \cost \left( \cS \right) \leq m \right\}
\leq 4^{m} \exp \left( - \frac{2}{9} \eps^{2} M n \right)
\leq \exp \left( \left\{ 2.5 \log 4 - \frac{2}{9} \eps^{2} M \right\} n \right).
\end{equation}
If $M \geq 21/\eps^{2}$, then this is at most $\exp \left( - n \right)$.
\end{proof}

\begin{proof}[Proof of Theorem~\ref{thm:lb_single_batch_general_sqrtn}]
Let 
$m := 2.5n + \lambda$,  
where 
$\lambda = \frac{1}{5} \sqrt{n \log \left( \frac{M}{5n} \right)}$. 
The proof is identical to the proof of Theorem~\ref{thm:lb_single_batch_general_basic} until~\eqref{eq:stochastic_domination}. 
At this point in the proof, we bound this probability differently. 
Specifically, by applying Lemma~\ref{lem:right_tail_LB} with $\ell = 4$, we have that 
\begin{equation}\label{eq:LB_on_right_tail_single_batch}
\p \left\{ \sum_{i=1}^{n} Y_{i} > m \right\} 
\geq \exp \left( - 25 \lambda^{2} / n \right)
= \frac{5n}{M}.
\end{equation}
In order to obtain this bound via Lemma~\ref{lem:right_tail_LB} we must have that 
$(4/3) \sqrt{n} \leq \lambda \leq 2n/25$;  
these inequalities hold due to the condition 
$\left( 5 \exp \left( 45 \right) \right) n \leq M \leq 5n \exp \left( 4 n / 25 \right)$ assumed in Theorem~\ref{thm:lb_single_batch_general_sqrtn}. 
Note in particular that we thus have $m \leq 2.58 n$. 
From~\eqref{eq:LB_on_right_tail_single_batch} we thus have that 
\[
\p \left\{ \sum_{i=1}^{n} Y_{i} \leq m \right\} 
\leq 
1 - \frac{5n}{M}.
\]
Plugging this into~\eqref{eq:union_bound} and~\eqref{eq:stochastic_domination}, 
we have obtained---analogously to~\eqref{eq:single_batch_bound}---the bound 
\[
\p \left\{ \cost \left( \cS \right) \leq m \right\}
\leq 4^{m} \left( 1 - \frac{5n}{M} \right)^{M} 
\leq \exp \left( \left( 2.58 \log 4 \right) n - 5n \right) 
\leq \exp \left( - n \right). \qedhere
\]
\end{proof}

\subsubsection{Proofs for strands without homopolymers}

\begin{proof}[Proof of Theorem~\ref{thm:lb_single_batch_no_homopolymers_basic}]
The proof is almost identical to the proof of Theorem~\ref{thm:lb_single_batch_general_basic}, so we only highlight the changes.
First, we set $m := \left( 2 - \eps \right) n$.
Second,
$\left\{ Y_{i} \right\}_{i=1}^{n}$ are now i.i.d.\ random variables that are uniform on $\left\{ 1, 2, 3 \right\}$. Thus Hoeffding's bound gives that
\[
\p \left\{ \sum_{i=1}^{n} Y_{i} \leq m \right\}
\leq \exp \left( - \frac{1}{2} \eps^{2} n \right).
\]
The rest of the proof is identical.
\end{proof}

\begin{proof}[Proof of Theorem~\ref{thm:lb_single_batch_no_homopolymers_sqrtn}]
Let 
$m := 2n + \lambda$,  
where 
$\lambda = \frac{3}{20} \sqrt{n \log \left( \frac{M}{5n} \right)}$. 
The proof is identical to the proof of 
Theorem~\ref{thm:lb_single_batch_general_sqrtn}, 
except  
$\left\{ Y_{i} \right\}_{i=1}^{n}$ are now i.i.d.\ random variables that are uniform on $\left\{ 1, 2, 3 \right\}$, 
and thus Lemma~\ref{lem:right_tail_LB} is applied with $\ell = 3$. 
\end{proof}

\section{Empirical Quantiles}\label{sec:emp-quant}

Consider a distribution $D$ on strands, and 
a reference strand $R$. For $q\in(0,1)$, we define the $q$-quantile of the distribution of the printing cost $\cost_R(S)$, where $S$ is distributed according to $D$, as the minimum $t$ such that $\p\left\{\cost_R(S) \leq t \right\} \geq q$:  
$$Q_{q,R}(D) := \min \Big\{t :\p\big\{\cost_R(S) \leq t\big\}\geq q\Big\}.$$
Similarly, for a set of strands $\calS$, we define an \emph{empirical} variant of $Q$ as:
$$\widetilde Q_{q,R}(\calS) := \min \Big\{t:\frac{|\{S\in \calS: \cost_R(S) \leq t\big)\}|}{|\calS|}\geq q\Big\}.$$
For a family of reference strands $\calR$, we let 
$$Q_{q,\calR}(D) := \min_{R\in \calR} Q_{q,R}(D)
\qquad \text{and} \qquad  
\widetilde Q_{q,\calR}(\calS) := \min_{R\in \calR} \widetilde Q_{q,R}(\calS).$$

Consider a random set $\calS_{D,M}$ that contains 
$M$ i.i.d.\ samples from the distribution $D$. By the Glivenko–Cantelli theorem, we have
$\EQ_{q,R}(\calS_{D,M})\approx Q_{q,R}(D)$ for every $R$ and a sufficiently large~$M$. We use the Dvoretzky-Kiefer-Wolfowitz inequality to get a quantitative bound on 
$\EQ_{q,R}(\calS_{D,M})$. 
\begin{lemma} \label{lem:DKW}
For every distribution $D$ on strands, every reference strand $R$, and every positive $\varepsilon$, we have
\begin{equation}\label{eq:DKW:R}
\p\Big\{Q_{q-\varepsilon,R}(D)\leq \EQ_{q,R}(\calS_{D,M})\leq Q_{q+\varepsilon,R}(D)\text{ for all } q\in(\varepsilon,1-\varepsilon)\Big\} > 
1 - 2e^{-2M\varepsilon^2}.
\end{equation}
Furthermore, for every family of reference strands $\calR$, we have
\begin{equation}\label{eq:DKW:setR}
\p\Big\{Q_{q-\varepsilon,\calR}(D)\leq \EQ_{q,\calR}(\calS_{D,M})\leq Q_{q+\varepsilon,\calR}(D)\text{ for all } q\in(\varepsilon,1-\varepsilon)\Big\} > 
1 - 2|\calR|e^{-2M \varepsilon^2}.    
\end{equation}
\end{lemma}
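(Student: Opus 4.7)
The plan is to recognize that this is a standard consequence of the Dvoretzky--Kiefer--Wolfowitz (DKW) inequality applied to the real-valued random variable $\cost_R(S)$, followed by a routine conversion from a uniform CDF bound to quantile bounds, and finally a union bound over $\cR$ for the second part.

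For the first part, fix a reference strand $R$ and consider the distribution of the scalar random variable $Z_R := \cost_R(S)$ when $S \sim D$. Let $F_R(t) := \p\{Z_R \leq t\}$ denote its CDF and let $\widehat F_{R,M}(t) := \frac{1}{M}|\{S \in \calS_{D,M}: \cost_R(S) \leq t\}|$ denote the empirical CDF obtained from the $M$ i.i.d.\ samples in $\calS_{D,M}$. The DKW inequality gives
\[
\p\Big\{\sup_{t \in \R} |\widehat F_{R,M}(t) - F_R(t)| \leq \varepsilon \Big\} \geq 1 - 2 e^{-2M\varepsilon^2}.
\]
On this uniform-closeness event, I would then translate the bound on CDFs into the desired bound on quantiles as follows. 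For the upper quantile inequality, let $t^{+} := Q_{q+\varepsilon,R}(D)$, so by definition $F_R(t^{+}) \geq q + \varepsilon$. On the good event, $\widehat F_{R,M}(t^{+}) \geq F_R(t^{+}) - \varepsilon \geq q$, which forces $\EQ_{q,R}(\calS_{D,M}) \leq t^{+}$. For the lower quantile inequality, let $t^{-} := \EQ_{q,R}(\calS_{D,M})$, so $\widehat F_{R,M}(t^{-}) \geq q$. Then $F_R(t^{-}) \geq \widehat F_{R,M}(t^{-}) - \varepsilon \geq q - \varepsilon$, which by definition of $Q_{q-\varepsilon,R}(D)$ as a minimum forces $t^{-} \geq Q_{q-\varepsilon,R}(D)$. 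Since the derived inequalities hold simultaneously for every $q \in (\varepsilon, 1-\varepsilon)$ on the single DKW event, this proves~\eqref{eq:DKW:R}.

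For the second part, I would observe that $\EQ_{q,\calR}(\calS_{D,M}) = \min_{R \in \calR} \EQ_{q,R}(\calS_{D,M})$ and $Q_{q,\calR}(D) = \min_{R \in \calR} Q_{q,R}(D)$, so it suffices to take a union bound over $R \in \calR$ of the single-$R$ event from the first part. This immediately yields~\eqref{eq:DKW:setR} with the factor of $|\calR|$ in the failure probability. Indeed, on the intersection over $R \in \calR$ of the good events, we have both $\min_R \EQ_{q,R}(\calS_{D,M}) \leq \min_R Q_{q+\varepsilon,R}(D)$ (achieved by the minimizer on the right) and $\min_R \EQ_{q,R}(\calS_{D,M}) \geq \min_R Q_{q-\varepsilon,R}(D)$ (achieved by the minimizer on the left), uniformly in $q$.

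There is no genuine obstacle in this proof; the only mild care needed is to ensure that the left-continuous versus right-continuous choice of quantile is handled correctly, which is why we use the minimum convention $Q_{q,R}(D) = \min\{t : \p\{\cost_R(S) \leq t\} \geq q\}$ together with the fact that $\cost_R(S)$ takes integer values (so the relevant minima are attained). With that convention the strict/non-strict inequalities align cleanly as above.
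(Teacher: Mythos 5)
Your proof is correct and takes essentially the same route as the paper: invoke the DKW inequality for the empirical CDF of $\cost_R(S)$, translate the uniform CDF bound into two-sided quantile bounds, and take a union bound over $\calR$. The only (minor, cosmetic) divergence is in the lower-quantile direction: the paper sets $t^{*} = Q_{q-\varepsilon,R}(D) - 1$ and uses integer-valuedness of the cost to conclude $\EQ_{q,R} \geq t^{*}+1$, whereas you apply the definition of the quantile as a minimum directly to $t^{-} = \EQ_{q,R}(\calS_{D,M})$, which sidesteps the integrality argument and is marginally cleaner.
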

\medskip

\noindent We prove this lemma in Appendix~\ref{sec:proof:lem:DKW}.

\section{Upper Bounds for Multiple Batches} \label{sec:ub_multi_batch} 

\subsection{Batching for strands without homopolymers}\label{sec:ub_multi_batch_no_homopoly}

Our approach is to define a quantile-based batching process and to then split the strands into two groups based on whether we use the reference strand $\widetilde R = (ACGT)^*$ or its reverse $\widebar R = (TCGA)^*$. We first observe that one of these two options will lead to a cost of at most $2n$ if the strand does not contain homopolymers (and this result holds deterministically).

\begin{lemma}\label{lem:cost-2n}
Let $\widetilde R = (ACGT)^*$ and $\widebar R = (TCGA)^*$ be the alternating sequence and its reverse.
For a strand $S \in \ACGT^n$ without homopolymers, we have that
\begin{equation}\label{eq:symmetry_identity} 
\cost_{\widetilde R}(S) + \cost_{\widebar R}(S) = 4n+1.
\end{equation}
Moreover, there exists a bijection $\varphi : \ACGT^n \to \ACGT^n$ such that for every strand $S \in \ACGT^n$ we have that 
\begin{equation}\label{eq:bijection_identity}
\cost_{\widetilde R}(S) + \cost_{\widetilde R}( \varphi(S) ) = 4n+1.
\end{equation}
\end{lemma}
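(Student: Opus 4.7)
The plan is to prove both identities by analyzing the per-base increments of the greedy printing process and exploiting the fact that $(ACGT)^{*}$ and $(TGCA)^{*}$ encode opposite cyclic orderings of the four bases; I read $\widebar R = (TGCA)^{*}$, matching the earlier overview in Section~\ref{sec:proof-overview}.

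For the first identity, I would use the decomposition $\cost_{R}(S) = \sum_{i=1}^{n} X_{i}(S, R)$ from Section~\ref{sec:proof-overview}. For $i=1$, the increment $X_{1}(S, \widetilde R)$ is the position of $S_{1}$ in the block $ACGT$ and $X_{1}(S, \widebar R)$ is its position in the block $TGCA$; a four-way case check on $S_{1}$ shows these positions always sum to $5$. For $i \geq 2$, the no-homopolymer hypothesis forces $S_{i-1} \neq S_{i}$, so $X_{i}(S, \widetilde R)$ is the forward distance in the cycle $A \to C \to G \to T \to A$ while $X_{i}(S, \widebar R)$ is the forward distance in the reversed cycle $A \to T \to G \to C \to A$. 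Since forward and backward distances between two distinct points on a $4$-cycle always sum to $4$, I obtain $X_{i}(S, \widetilde R) + X_{i}(S, \widebar R) = 4$ for every $i \geq 2$. Summing yields $\cost_{\widetilde R}(S) + \cost_{\widebar R}(S) = 5 + 4(n-1) = 4n+1$.

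For the bijection, I would introduce the involution $\rho \colon \ACGT \to \ACGT$ that swaps diametrically opposite bases in the cycle $(ACGT)$, namely $\rho(A) = T$, $\rho(T) = A$, $\rho(C) = G$, $\rho(G) = C$, and extend it character-wise to strands via $\rho(S) := \rho(S_{1}) \cdots \rho(S_{n})$. The key observation is that applying $\rho$ base by base to $\widebar R = (TGCA)^{*}$ produces $(ACGT)^{*} = \widetilde R$; equivalently, the greedy printer that produces $\rho(S)$ against $\widetilde R$ simulates, step for step, the greedy printer that produces $S$ against $\widebar R$. This gives $\cost_{\widetilde R}(\rho(S)) = \cost_{\widebar R}(S)$. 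Because $\rho$ is a bijection on $\ACGT$ that preserves distinctness of consecutive characters, setting $\varphi := \rho$ yields an involution on the set of no-homopolymer strands. Combining with the first identity then gives $\cost_{\widetilde R}(S) + \cost_{\widetilde R}(\varphi(S)) = \cost_{\widetilde R}(S) + \cost_{\widebar R}(S) = 4n+1$.

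There is no serious technical obstacle; the entire argument reduces to identifying the correct symmetry. The main conceptual point is that $(ACGT)^{*}$ and $(TGCA)^{*}$ encode opposite traversals of a $4$-cycle, so their per-step increments are complementary on every non-repeating transition, and the base-level involution $\rho$ lifts this symmetry to the strand level, turning the first identity directly into the bijection identity. The only bookkeeping items are the first-step boundary term (which contributes $5$ rather than $4$) and the verification that $\rho$ respects the no-homopolymer constraint, both of which are immediate.
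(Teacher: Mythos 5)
Your proof is correct and matches the paper's approach: both establish $X_1(S,\widetilde R)+X_1(S,\widebar R)=5$ and $X_i(S,\widetilde R)+X_i(S,\widebar R)=4$ for $i\geq 2$, sum to get~\eqref{eq:symmetry_identity}, and then obtain~\eqref{eq:bijection_identity} from the base-complement involution $\varphi$ together with $\cost_{\widetilde R}(\varphi(S))=\cost_{\widebar R}(S)$. You also correctly read $\widebar R=(TGCA)^{*}$ (the lemma's $(TCGA)^{*}$ is a typo); the one caveat---present equally in the paper's own proof---is that this route establishes~\eqref{eq:bijection_identity} only for strands without homopolymers, since it applies~\eqref{eq:symmetry_identity} to $\varphi(S)$, which is homopolymer-free exactly when $S$ is, whereas the lemma statement asserts it for every $S\in\ACGT^n$.
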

\begin{proof}
Let $\tau_i(S, \widetilde R)$ and $\tau_i(S, \widebar R)$ be the time that the $\nth{i}$ character of $S$ is printed using $\widetilde R$ or $\widebar R$, respectively. 
Consider the per-character costs
$X_i(S,\widetilde R) = \tau_i(S,\widetilde R) - \tau_{i-1}(S,\widetilde R)$ 
and 
$X_i(S,\widebar R) = \tau_i(S,\widebar R) - \tau_{i-1}(S,\widebar R).$
Observe that $X_i(S,\widetilde R) + X_i(S,\widebar R) = 4$ for $i > 1$ and $X_1(S,\widetilde R) + X_1(S,\widebar R) = 5$. Hence,
$$
\cost_{\widetilde R}(S) + \cost_{\widebar R}(S) = \sum_{i=1}^n (X_i(S,\widetilde R) + X_i(S,\widebar R))  = 4n + 1.$$

We now map every strand $S$ to its compliment by replacing each base $A$ with $T$, $C$ with $G$, $G$ with $C$, and $T$ with $A$. Observe that if we renamed each base as above both in $S$ and the reference strand $\widetilde R$, then the cost would not change. That is, $\cost_{\widetilde R}(S) = \cost_{\widebar R}(\varphi(S))$. Using~\eqref{eq:symmetry_identity} we thus obtain~\eqref{eq:bijection_identity}. 
\end{proof}

The key ideas in the following proof are to (i) use a quantile-based batching process to group the strands, and then (ii) decide whether they will be printed with $\widetilde R$ or $\widebar R$ to reduce the overall cost. By using the above lemma, we know that for strands that have cost larger than $2n$ under~$\widetilde R$, they will have cost at most $2n$ under $\widebar R$. Using this and properties of the batching process construction, we argue that for $k-2$ of the batches the cost is at most $2n$, with a constant fraction of these batches having cost at most $2n-\Omega(\sqrt{n})$. For the two ``middle'' batches, we only show that the sum of their costs is at most $4n+1$. Overall, we achieve a total cost of $2nk-\Omega(k\sqrt{n})$ when summing over the $k$ batches.

\begin{proof}[Proof of Theorem~\ref{thm:ub-multi}(1).]
For the proof, we assume that $k$ divides $M$; otherwise, we could set the batches to have size within $M/k \pm 1$. Let $\mathcal{D}$ denote the uniform distribution of length $n$ strands without homopolymers.

We start by defining the batching process and the assignment of reference strands (i.e., choosing between $\widetilde R$ and $\widebar R$ for each batch). To group the strands, we first sort the strands in $\calS$ in a nondecreasing order according to their cost with respect to $\widetilde R$. Then, we let $\calB_i$ be the subset of strands placed between positions $(i-1)M/k + 1$ and $iM/k$ (inclusive) in the ordering (e.g., $\calB_1$ contains the $M/k$ lowest cost strands). After the batches have been defined, we use $\widetilde R$ to print the first $\ell := \lceil k/2 \rceil$ batches and we use $\widebar R$ for the remainder. 

By Lemma~\ref{lem:cost-2n}, the batches with higher cost under $\widetilde R$ have lower cost under $\widebar R$. 
Furthermore, with high probability, all batches have cost at most $2n$ except perhaps the ``middle'' two batches~$\calB_{\ell}$ and~$\calB_{\ell+1}$.
More precisely, we utilize the empirical quantiles $i/k$ for $i\in[k]$, where we recall that~$\EQ_{i/k, \widetilde R}(\calS)$ denotes the minimum value $t_i$ such that an $i/k$ fraction of strands in $\calS$ have cost at most~$t_i$ with respect to $\widetilde R$. In particular, the bound 
\begin{equation}\label{eq:cost_bd_left}
\cost_{\widetilde R}(\calB_i) \leq \EQ_{i/k, \widetilde R}(\calS)
\end{equation}
for $i \in \{ 1, \ldots, \ell -1 \}$ follows from the batch construction process. Similarly, we have that 
\begin{equation}\label{eq:cost_bd_right}
\cost_{\widebar R}(\calB_{k-i}) \leq 4n + 1 - \EQ_{(k-i-1)/k, \widetilde R}(\calS)
\end{equation}
for $i \in \{ 0, 1, \ldots, k - \ell - 2 \}$, due to Lemma~\ref{lem:cost-2n} and the batch construction process. 

The bounds in~\eqref{eq:cost_bd_left} and~\eqref{eq:cost_bd_right} both involve empirical quantiles, for which Lemma~\ref{lem:DKW} provides uniform bounds. 
We apply Lemma~\ref{lem:DKW} with $\varepsilon = 1/(4k)$. 
This implies that, with probability at least $1-2\exp \left( - M / (8k^{2}) \right)$, we have that 
\[
\EQ_{i/k, \widetilde R}(\calS) 
\leq Q_{(i+\nicefrac{1}{4})/k, \widetilde R}(\mathcal{D})
\] 
for all $i \in \{ 1, \ldots, \ell - 1 \}$, and furthermore that 
\[
\EQ_{(k-i-1)/k, \widetilde R}(\calS) 
\geq Q_{(k-i-\nicefrac{5}{4})/k, \widetilde{R}}(\mathcal{D})
\]
for every $i \in \{ 0, 1, \ldots, k-\ell-2\}$. 
Plugging these bounds into~\eqref{eq:cost_bd_left} and~\eqref{eq:cost_bd_right} 
we obtain that, 
with probability at least $1-2\exp \left( - M / (8k^{2}) \right)$, 
the total cost is bounded above by 
\begin{multline}\label{eq:cost_bound}
\sum_{i=1}^{\ell} \cost_{\widetilde R}(\calB_i) 
+ \sum_{i=\ell+1}^{k} \cost_{\widebar R}(\calB_i) \\
\leq \sum_{i=1}^{\ell - 1} Q_{(i+\nicefrac{1}{4})/k, \widetilde R}(\mathcal{D}) 
+ \left\{ \cost_{\widetilde R}(\calB_\ell) + \cost_{\widebar R}(\calB_{\ell+1}) \right\} 
+ \sum_{i=0}^{k-\ell-2} \left( 4n + 1 - Q_{(k-i-\nicefrac{5}{4})/k, \widetilde{R}}(\mathcal{D}) \right).
\end{multline}
In the rest of the proof we bound from above these two sums, as well as the term in the middle. 

First, we claim that for every fixed $\delta > 0$ there exists $\alpha = \alpha(\delta) > 0$ such that 
\begin{equation}\label{eq:quantile_bound}
Q_{\nicefrac{1}{2} - \delta, \widetilde{R}}(\mathcal{D}) 
\leq 2n - \alpha \sqrt{n}. 
\end{equation}
To see this, recall that the cost of printing a strand $S$ using $\widetilde{R}$ is a sum of independent random variables: the cost of the first character is uniform in $\{1,2,3,4\}$ and the cost of the remaining characters are uniform in $\{1,2,3\}$ (since the strand $S$ does not have homopolymers). Therefore $\cost_{\widetilde{R}}(S)$ is approximately Gaussian with mean $2n+1/2$ and variance on the order of $n$. This means that if we consider a quantile that is bounded away from the median, then the cost is smaller than the mean by at least a constant factor of the standard deviation. Since the standard deviation is on the order of $\sqrt{n}$, this implies~\eqref{eq:quantile_bound}. 

Turning back to~\eqref{eq:cost_bound}, consider the first $\lceil k/3 \rceil$ terms of the first sum in~\eqref{eq:cost_bound} (note that $\lceil k/3 \rceil \geq 1$ since $k \geq 3$).  
Note that 
$\left( \lceil k/3 \rceil + 1/4 \right) / k \leq 5/12 = 1/2 - 1/12$ for all $k \geq 3$. 
Let $\alpha_{*} := \alpha \left( 1/12 \right) > 0$. 
Then~\eqref{eq:quantile_bound} implies that 
for every $i \in \{ 1, \ldots, \lceil k/3 \rceil \}$ we have that 
\[
Q_{(i+\nicefrac{1}{4})/k, \widetilde R}(\mathcal{D}) 
\leq 
Q_{(\lceil k/3 \rceil+\nicefrac{1}{4})/k, \widetilde R}(\mathcal{D}) 
\leq 
Q_{5/12, \widetilde R}(\mathcal{D}) 
\leq 2n - \alpha_{*} \sqrt{n}.
\]
Note also that for every $i \in \{ 1, \ldots, \ell - 1 \}$ we have that 
$(i+1/4)/k \leq (\ell - 3/4)/k \leq (k/2-1/4)/k = 1/2 - 1/(4k) < 1/2$. 
Together with~\eqref{eq:quantile_bound}, this implies that for every $i \in \{ 1, \ldots, \ell - 1 \}$ we have that 
\[
Q_{(i+\nicefrac{1}{4})/k, \widetilde R}(\mathcal{D}) \leq 2n.
\]
Putting the bounds in the previous two displays together, we obtain that 
\begin{align}
\sum_{i=1}^{\ell - 1} Q_{(i+\nicefrac{1}{4})/k, \widetilde R}(\mathcal{D}) 
&= \sum_{i=1}^{\lceil k/3 \rceil} Q_{(i+\nicefrac{1}{4})/k, \widetilde R}(\mathcal{D}) 
+ \sum_{\lceil k/3 \rceil + 1}^{\ell - 1} Q_{(i+\nicefrac{1}{4})/k, \widetilde R}(\mathcal{D}) \notag \\
&\leq \sum_{i=1}^{\lceil k/3 \rceil} \left( 2n - \alpha_{*} \sqrt{n} \right) + \sum_{\lceil k/3 \rceil + 1}^{\ell - 1} 2n 
\leq 2 (\ell -1) n - (\alpha_{*}/3) k \sqrt{n}. \label{eq:gain}
\end{align} 
Similarly, note that for every $i \in \{0, 1, \ldots, k-\ell-2 \}$ we have that 
$(k-i-5/4)/k \geq 1/2 + 1/(4k) > 1/2$. 
Therefore, due to the symmetry of the distribution of $\cost_{\widetilde{R}}(S)$, 
we have that 
$Q_{(k-i-\nicefrac{5}{4})/k, \widetilde{R}}(\mathcal{D}) \geq 2n+1$ 
for every $i \in \{0, 1, \ldots, k-\ell-2 \}$.  Using this bound in the second sum of~\eqref{eq:cost_bound}, together with~\eqref{eq:gain}, we obtain that the quantity in~\eqref{eq:cost_bound} is bounded from above by 
\begin{equation}\label{eq:cost_bd_almost_done}
2(k-2)n - (\alpha_{*}/3) k \sqrt{n} 
+ \left\{ \cost_{\widetilde R}(\calB_\ell) + \cost_{\widebar R}(\calB_{\ell+1}) \right\}. 
\end{equation}
Finally, we bound the sum in the curly brackets in~\eqref{eq:cost_bd_almost_done}. 
Let $S'$ be a strand in $\mathcal{B}_{\ell}$ such that 
$\cost_{\widetilde{R}}(\mathcal{B}_{\ell}) = \cost_{\widetilde{R}}(S')$, 
and let $S''$ be a strand in $\mathcal{B}_{\ell+1}$ such that 
$\cost_{\widebar{R}}(\mathcal{B}_{\ell+1}) = \cost_{\widebar{R}}(S'')$. 
By the construction of the batching process we have that 
$\cost_{\widetilde{R}}(S') \leq \cost_{\widetilde{R}}(S'')$. 
On the other hand, 
by Lemma~\ref{lem:cost-2n} we have that 
$\cost_{\widetilde{R}}(S'') + \cost_{\widebar{R}}(S'') = 4n+1$. 
Putting all this together we have that 
\[
\cost_{\widetilde R}(\calB_\ell) + \cost_{\widebar R}(\calB_{\ell+1}) 
= \cost_{\widetilde R}(S') + \cost_{\widebar R}(S'') 
\leq \cost_{\widetilde R}(S'') + \cost_{\widebar R}(S'') 
= 4n+1.
\]
Plugging this back into~\eqref{eq:cost_bd_almost_done}, we obtain that the total cost is bounded from above by 
$2kn + 1 - (\alpha_{*}/3)k\sqrt{n}$, as desired, with this bound holding with probability at least $1-2\exp \left( - M / (8k^{2}) \right)$. 
\end{proof}

\subsection{Batching for unrestricted strands}

The main idea of the proof is to analyze the cost quantiles after splitting into batches. In this case, the optimal splitting is a bit easier to define, as we will always use
the reference strand $\widetilde R = (ACGT)^{*}$. Once the reference strand is fixed, it is easy to see that for arbitrary strands, the optimal batching process involves first taking $M/k$ lowest cost strands and then the next $M/k$ lowest cost and so on.  

\begin{proof}[Proof of Theorem~\ref{thm:ub-multi}(2).]
In this argument, we consider a set $\calS$ of $M$ strands sampled i.i.d.\ from the uniform distribution $\mathcal{U}$ over $\ACGT^n$. For simplicity of notation, we let $\EQ_q = \EQ_{q,\widetilde R}(\calS)$ and $Q_q = Q_{q,\widetilde R}(\mathcal{U})$ denote the empirical and distributional $q$-quantiles of the cost, respectively. We utilize the empirical quantiles $i/k$ for $i\in[k]$, where we recall that $\EQ_{i/k}$ denotes the minimum value $t_i$ such that an $i/k$ fraction of strands in $\calS$ have cost at most $t_i$. In particular, the bound $\cost_{\widetilde R}(\calB_i) \leq \EQ_{i/k}$ follows from the batch construction process. 

Lemma~\ref{lem:DKW} provides a uniform bound on each empirical quantile, and we apply it with $\varepsilon = 1/k$; 
specifically, this implies that, 
with probability at least $1 - 2\exp(-\frac{2M}{k^2})$, 
we have for all $i \in [k-1]$ that 
$\EQ_{i/k} \leq Q_{(i+1)/k}$. 
In the following we assume that we are on this event; 
in particular, on this event we have that $\cost_{\widetilde R}(\calB_i) \leq Q_{(i+1)/k}$ holds for all $i \in [k-1]$. 

We next claim that for any even $k' \leq k-3$ we have that 
\begin{equation}\label{eq:no-hp-ub-sum}
\sum_{i=1}^{k'} Q_{(i+1)/k} \leq 2.5nk'.
\end{equation}
Before showing~\eqref{eq:no-hp-ub-sum}, we conclude the proof assuming that it holds. 
When $k$ is even, we set $k' = k-4$, and otherwise, $k' = k-3$. 
Then, for $i$ in the range $k' < i \leq k$, we simply use the bound from Theorem~\ref{thm:single_batch_UB} that shows that 
$\cost_{\widetilde R}(\calB_i) \leq 2.5n + 3 \sqrt{n \ln M}$. 
Combining this with~(\ref{eq:no-hp-ub-sum}) implies the desired bound with $C_2 = 12$.

We now turn to proving~\eqref{eq:no-hp-ub-sum}. 
We have already seen in Lemma~\ref{lem:cost-2n} that the cost distribution is symmetric around $2.5n$ for strands uniform over $\ACGT^n$. This implies that 
\begin{equation}\label{eq:no-hp-ub-pair}
Q_{(i+1)/k} + Q_{(k-i-2)/k} \leq 2 \cdot 2.5n
\end{equation}
for each $i \leq k'/2$. Here we have paired up $(i+1)/k$ with $(k-i-2)/k$, and hence, we have chosen a cost quantile on either side of the mean, but with a slight asymmetry, shifting the larger one over by one.
The key observation is that $Q_{(k-i-2)/k}$ deviates from the mean less than $Q_{(i+1)/k}$.
More precisely, we have that $Q_{(i+1)/k} \leq 2.5n \leq Q_{(k-i-2)/k}$, and furthermore, 
$
Q_{(k-i-2)/k} - 2.5n \leq 2.5n - Q_{(i+1)/k}.
$
As this holds for each $i \leq k'/2$, the inequality in~(\ref{eq:no-hp-ub-sum}) follows.
\end{proof}

\section{Lower Bounds}\label{sec:lb_multi_batch}

In this section we prove Theorem~\ref{thm:lb-4-homoplymers-main}, which provides lower bounds on the cost of batched DNA synthesis for random strands, for both unconstrained strands and strands without homopolymers.

We first introduce some notation and terminology. 
For a strand $S$, we say that a base $S_i$ is a
repetition of the previous base if this base is the same as the previous base, that is, if $S_i=S_{i-1}$. Let $d(S)$ be the number of bases that are repetitions of the previous
base in $S$ (that is, $d(S) := |\{i: S_i = S_{i+1}\}|$). For every $p\in [0,1]$ and positive integer $n$, define a distribution $D_{p, n}$ on DNA strands of length $n$ by letting the probability of a strand $S$ of length $n$ according to 
$D_{p,n}$ be equal to 
$$D_{p, n}(S) := \frac{1}{4}\cdot \Big(\frac{1-p}{3}\Big)^{n-d(S) - 1}p^{d(S)}.$$
We can think of this distribution as follows. The first base of the strand is chosen uniformly at random from the alphabet $\{A,C,G, T\}$. Then, every 
consecutive base is a repetition of the previous base with probability $p$ and another base with probability $(1-p)$. If a base is not a repetition of the previous base, then it is chosen uniformly at random among the remaining three bases.

In this paper, the two most important distributions on DNA strands are $D_{0,n}$ and $D_{\nicefrac{1}{4},n}$. The former is the uniform distribution on strands of length $n$ without homopolymers; the latter is the uniform distribution on strands of length $n$ (unconstrained, i.e., allowing homopolymers). In this section, we will also crucially use another distribution: $D_{\nicefrac{1}{4} + \delta,n}$ with
$\delta\approx \sqrt{\frac{\log M}{n}}$. To simplify notation, we will omit the second parameter of the distribution $D$ and write $D_p \equiv D_{p,n}$, since the length of all strands we consider is $n$.

We will prove the following theorem, from which Theorem~\ref{thm:lb-4-homoplymers-main} readily follows. 

\begin{theorem}\label{thm:lb-4-homoplymers}
Let $\calS \equiv \calS_{D_{p},M}$ be a set of $M$ i.i.d.\ strands from the distribution $D_{p}$, where $p$ is $0$ or $\nicefrac{1}{4}$.
Suppose that $M$ is divisible by $k$, and $k\leq \sqrt{M/(24n)}$. Then,
with probability at least $1 - 2\exp(- M / (4k^{2}) ) \geq 1 - 2 \exp(-6n)$,
the optimal cost of printing the strands in $\calS$ using $k$ batches of equal size is at least
\begin{align}
\label{eq:lb-without-hpol}
k\cdot(2n -\sqrt{5n\log (2k)}) \qquad \;\;\;\;\;\;&\text{ for strands without homopolymers;}\\
\label{eq:lb-with-hpol}
k\cdot(2.5n -\sqrt{5n\log (2k)}) \qquad \;\;\;\;\;\;&\text{ for unconstrained strands.}
\end{align}
Furthermore, there exist positive absolute constants $c_{1}$, $c_{2}$, and $c_{3}$ such that the following holds in the setting of unconstrained strands (that is, when $p =\nicefrac{1}{4}$). 
Suppose that the number of batches satisfies $k\leq c_1\min\{\sqrt{\log M/\log\log M},\sqrt{n/\log n},\sqrt[4]{M}/\sqrt{n}\}$. 
Then, with probability at least  $1-\exp(- c_{3} \sqrt{M}/k^2)$, 
the total cost of printing the strands in $\calS_{D_{\nicefrac{1}{4}},M}$ using $k$ batches of equal size is at least
\begin{equation}\label{eq:strong-lb-with-hpol}
k\cdot\Big(2.5n +c_2\frac{\sqrt{n\cdot \min\{\log M, n\}}}{k}\Big). 
\end{equation}
\end{theorem}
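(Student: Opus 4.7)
My plan proceeds in two stages. The two basic bounds~\eqref{eq:lb-without-hpol} and~\eqref{eq:lb-with-hpol} follow from a quantile argument, while the stronger bound~\eqref{eq:strong-lb-with-hpol} for unconstrained strands requires an additional analysis of a repetition-heavy subset of $\calS$. Throughout I work with the family $\calR^*$ of references whose infinite suffix is $(ACGT)^*$ and whose first $4n$ bases are arbitrary, so $|\calR^*| \leq 4^{4n}$.

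\textbf{Stage 1 (quantile bound).} The starting observation is that any batch $\calB$ of size $M/k$ cannot be contained in the $M/k-1$ cheapest strands under any reference $R$, so $\max_{S\in\calB}\cost_R(S) \geq \EQ_{1/k,R}(\calS)$. Taking a $\min$ over $R\in\calR^*$ and summing across the $k$ batches gives
\[
\sum_{i=1}^k \cost(\calB_i) \;\geq\; k\cdot \EQ_{1/k,\calR^*}(\calS).
\]
I then apply Lemma~\ref{lem:DKW} with $\varepsilon = 1/(2k)$ to the finite family $\calR^*$ to obtain $\EQ_{1/k,\calR^*}(\calS)\geq Q_{1/(2k),\calR^*}(D_p)$ with probability at least $1 - 2|\calR^*|\exp(-M/(2k^2)) \geq 1 - 2\exp(-M/(4k^2))$, where the hypothesis $k^2\leq M/(24n)$ absorbs the $|\calR^*|$ factor. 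Finally, the stochastic-domination argument from Section~\ref{sec:lb_single_batch} gives, for every $R\in\calR^*$, $\cost_R(S)\succeq \sum_{i=1}^n Y_i$ with $Y_i$ i.i.d.\ uniform on $\{1,\ldots,\ell\}$ ($\ell=3$ for $p=0$ and $\ell=4$ for $p=1/4$), so Hoeffding's inequality yields $Q_{1/(2k),\calR^*}(D_p)\geq \mu n - \sqrt{5n\log 2k}$ with $\mu=2$ or $2.5$ (with slack in the constant), establishing~\eqref{eq:lb-without-hpol} and~\eqref{eq:lb-with-hpol}.

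\textbf{Stage 2 (repetition-heavy batch).} Set $t := c_0\sqrt{n\min\{\log M,\,n\}}$ for a small constant $c_0>0$ and define $\calS'' := \{S\in\calS : d(S)\geq n/4 + t\}$. A direct calculation shows that for uniform $S\in\{A,C,G,T\}^n$ the indicators $\mathbf{1}[S_i = S_{i+1}]$ are i.i.d.\ $\mathrm{Bernoulli}(1/4)$, so $d(S)\sim\Bin(n-1,1/4)$; a standard binomial anti-concentration bound then gives $\p\{d(S)\geq n/4+t\}\geq M^{-1/2-o(1)}$ for the chosen $t$. Bernoulli concentration on $|\calS''|$ yields $|\calS''|\geq \sqrt{M}$ with failure probability $\exp(-\Omega(\sqrt{M}))$, and the pigeonhole principle guarantees some batch $\calB^*$ with $|\calB^*\cap\calS''|\geq \sqrt{M}/k$. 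The heart of the proof is then the claim that, uniformly over $R\in\calR^*$, at least one $S\in\calB^*\cap\calS''$ satisfies $\cost_R(S)\geq 2.5n + \Omega(t)$. Writing $\cost_R(S) = \sum X_i$ and letting $Z_i = \mathbf{1}[S_i=S_{i-1}]$, exchangeability of the $Z_i$'s forces $\p(Z_i=1 \mid S\in\calS'')\geq 1/4 + \Omega(t/n)$ for every $i$, and combining this with the identity $\sum_c d_c(j)\geq 1+2+3+4=10$ yields
\[
\E[X_i\mid\tau_{i-1},\,S_{i-1},\,S\in\calS''] \;\geq\; 2.5 + \Omega(t/n)\cdot(a_i - b_i),
\]
where $a_i := d_{S_{i-1}}(\tau_{i-1})$ and $b_i$ is the average of the other three forward distances at position $\tau_{i-1}$. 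The periodic $(ACGT)^*$ tail of $R\in\calR^*$ guarantees $a_i - b_i = 2$ whenever $\tau_{i-1}$ is past the $4n$-length prefix of $R$, which is the case for a constant fraction of the $n$ positions because the prefix contributes only $O(n)$ cursor steps; hence $\E[\cost_R(S)\mid S\in\calS''] \geq 2.5n + \Omega(t)$. A bounded-differences concentration (the map $S\mapsto\cost_R(S)$ is $O(1)$-Lipschitz in each base) transfers this mean bound to a high-probability lower bound on each $S\in\calS''$, and the union bound over $|\calR^*|\leq 4^{4n}$ is affordable because $t^2/n \gtrsim \log M$, $\gtrsim n$ in the saturated regime, while $|\calB^*\cap\calS''|\geq \sqrt{M}/k$ is more than enough to produce one strand achieving the bound.

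\textbf{Combining, and the main obstacle.} Applying Stage~1 to the remaining $k-1$ batches and combining with the Stage~2 bound on $\calB^*$ yields total cost at least $2.5nk + \Omega(t) - O(k\sqrt{n\log k})$, and the hypothesis $k\leq c_1\sqrt{\log M/\log\log M}$ ensures $k\sqrt{\log k} = o(\sqrt{\log M}) = o(t)$, so the deviation term survives and~\eqref{eq:strong-lb-with-hpol} follows. The technical crux is the uniform-in-$R$ lower bound on $\E[\cost_R(S)\mid S\in\calS'']$: different references in $\calR^*$ produce radically different per-step shifts $a_i - b_i$, and an adversarial prefix can make this shift small or even negative on the first $O(n)$ positions. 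The argument must exploit the $(ACGT)^*$ tail structure of $\calR^*$ together with a careful accounting of how many cursor steps are spent in the prefix versus the periodic tail to guarantee a net positive shift that survives both the concentration across $S$ and the union bound over $\calR^*$.
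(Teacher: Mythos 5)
Your Stage~1 is essentially the paper's treatment of \eqref{eq:lb-without-hpol}--\eqref{eq:lb-with-hpol} and is correct. Stage~2, however, has a genuine gap at exactly the spot you flag as ``the technical crux,'' and the accounting you sketch cannot close it. The claim that $\tau_{i-1}$ lies past the $4n$-length prefix of $R$ ``for a constant fraction of the $n$ positions because the prefix contributes only $O(n)$ cursor steps'' is false: the entire printing cost is at most $4n$, which is exactly the length of the arbitrary prefix, so for any $R\in\calR^*$ the whole trajectory $\tau_1,\dots,\tau_n$ may lie inside the prefix and \emph{zero} positions fall in the periodic tail. Your per-step inequality $\E[X_i\mid\cdot]\geq 2.5+\Omega(t/n)\,(a_i-b_i)$ then yields nothing unless you can control $a_i-b_i$ at positions \emph{inside} an adversarial prefix, and that control does not come from the periodic tail. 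It comes from a trade-off: whenever the adversary places a repeat in $R$ near the cursor (so that $a_i-b_i\leq 0$, as when $R_{\tau_{i-1}+1}=R_{\tau_{i-1}}$), some nearby window of four consecutive bases of $R$ cannot be a permutation of $ACGT$, which forces an absent character and a jump of distance~$\geq 5$, compensating the loss. Capturing this trade-off uniformly over $R$ is precisely what Lemma~\ref{lem:expect-biased-distrib} does, by averaging the conditional expectations of $Y_i,Y_{i+1},Y_{i+2}$ over a $3$-step window and splitting on whether the next twelve bases of $R$ form a triple repeat of a single permutation of $ACGT$; there is no shortcut through a prefix-versus-tail step count.

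There are two further problems with Stage~2. First, under the law of $S$ conditioned on $S\in\calS''=\{d(S)\geq n/4+t\}$, the repeat indicators $Z_i=\mathbf{1}[S_i=S_{i-1}]$ are exchangeable with a constrained sum but not independent; once you condition further on $\tau_{i-1}$ and $S_{i-1}$ (which reveals $Z_2,\dots,Z_{i-1}$), the conditional law of $Z_i$ is no longer bounded below by $1/4+\Omega(t/n)$, and an adversarial $R$ can couple the cursor position to a surplus of past repeats to cancel the gain. The paper avoids this entirely by coupling $\calS\sim\calS_{D_{\nicefrac{1}{4}},M}$ with an i.i.d.\ set $\calS'\sim\calS_{D_{\nicefrac{1}{4}+\delta},\floor{\sqrt M}}$ in Lemma~\ref{lem:S-partition}, so that the per-base increments stay independent and the process in the proof of Lemma~\ref{lem:lb-Q-plus-delta} is a bona fide submartingale. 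Second, $S\mapsto\cost_R(S)$ is \emph{not} $O(1)$-Lipschitz in each base for an arbitrary prefix of $R$ (a single base change can shift the cursor arbitrarily far when the prefix clusters one character); this is why the paper replaces the increments $X_i$ by the truncated $Y_i=\min\{X_i(S,R),5\}$ and applies Azuma--Hoeffding to the submartingale rather than a bounded-differences inequality to $\cost_R$.
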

\begin{proof}
We use the approach outlined in the proof overview (see Section~\ref{sec:proof-overview}). We first prove the bounds~\eqref{eq:lb-without-hpol} and~\eqref{eq:lb-with-hpol}. To do so, we show that with high probability the cost of every batch that contains at least $n/k$ strands is greater than $2n -\sqrt{5n\log (2k)}$ for random strands without homopolymers and $2.5n -\sqrt{5n\log (2k)}$ for unconstrained random strands. Consequently, the total cost of printing $k$ batches is at least $k\cdot(2n - \sqrt{5n\log (2k)})$ and $k\cdot(2.5n - \sqrt{5n\log (2k)})$ for random strands without and with homopolymers, respectively. 

Let $\calR^*$ be the set of all \emph{reasonable} reference strands for printing strands of lengths $n$, that is, $\calR^*$ is the set of all possible strands of length $4n$ appended with the infinite repeating sequence $(ACGT)^*$ (see Section~\ref{sec:proof-overview}). The size of this set is $|\calR^{*}| = 4^{4n} < e^{6n}$. 
Recall  the notions of the $q$-quantile $Q_{q,\calR}(D)$ of the distribution $D$ and the 
empirical $q$-quantile $\EQ_{q,\calR}(\calS)$ of a sample $\calS$ that we introduced in Section~\ref{sec:emp-quant}. 
Since every batch in the optimal partitioning of the set $\calS$ contains a $\nicefrac{1}{k}$ fraction of all strands, its cost is at least 
$\EQ_{\nicefrac{1}{k},\calR^{*}}(\calS)$. 
In Lemma~\ref{lem:DKW} we showed that $\EQ_{q,\calR}(\calS)\geq Q_{q-\varepsilon,\calR}(D)$ with probability close to $1$ for sufficiently large $M$. Specifically, using Lemma~\ref{lem:DKW} with parameters 
$q = \nicefrac{1}{k}$, 
$\varepsilon=q/2$, and $\calR=\calR^*$, we obtain the following bound:
\begin{equation}\label{cor:lem:DKW}
\p\Big\{\EQ_{\nicefrac{1}{k},\calR^*}(\calS_{D,M})\geq Q_{\nicefrac{1}{(2k)},\calR^*}(D)\Big\} 
\geq 1 - 2|\calR^*| e^{-M/(2k^{2})} > 
1 - 2e^{-M/(4k^{2})},     
\end{equation} 
where in the second inequality we used that $|\calR^*| < \exp(6n) \leq e^{M/(4k^{2})}$; 
this holds due to the condition $k\leq \sqrt{M/(24n)}$ that is assumed in the statement of  
Theorem~\ref{thm:lb-4-homoplymers}.

\medskip

We now obtain lower bounds on $Q_{q,\calR^*}(D_{0})$, 
$Q_{q,\calR^*} (D_{\nicefrac{1}{4}})$ and $Q_{q,\calR^*} (D_{\nicefrac{1}{4}+\delta})$. 
\begin{lemma}\label{lem:lb-Q}
For $q \in (0,1)$ we have that 
\begin{align}
\label{eq:lb-q0}
    Q_{q,\calR^{*}}(D_{0})&\geq 2 n - \sqrt{5n \log\nicefrac{1}{q}}\,;\\    
\label{eq:lb-q14}
    Q_{q,\calR^{*}}(D_{\nicefrac{1}{4}})&\geq 2.5 n - \sqrt{5n \log\nicefrac{1}{q}}.
\end{align}
\end{lemma}

\begin{lemma}\label{lem:lb-Q-plus-delta}
For $q \in (0,1)$ and $\delta \in (0,1/600] $ we have that
\begin{equation}
\label{eq:lb-q14d}    
    Q_{q,\calR^{*}}(D_{\nicefrac{1}{4}+\delta})\geq 2.5 n + \nicefrac{2}{3}\;\delta n - 5\sqrt{n \log \nicefrac{1}{q}}.
\end{equation}
\end{lemma}
We prove Lemmas~\ref{lem:lb-Q} and \ref{lem:lb-Q-plus-delta} in Section~\ref{sec:lem:lb-Q}.
Lemma~\ref{lem:lb-Q}, combined with inequality~\eqref{cor:lem:DKW}, immediately yield the lower bounds in \eqref{eq:lb-without-hpol} and \eqref{eq:lb-with-hpol} on the total 
cost of printing random strands without and with homopolymers. 

We now show how to strengthen the lower bound for unconstrained strands, obtaining the desired inequality~\eqref{eq:strong-lb-with-hpol}. We will prove that, with probability at least  $1-\exp(- c' \sqrt{M}/k^2)$ for some absolute constant $c' > 0$, there exists a batch with cost at least 
\begin{equation}\label{eq:lb-cost-one-batch}
2.5 n + \nicefrac{2}{3}\;\delta n - 5\sqrt{n \log (6k)},
\end{equation}
where $\delta=\min\Big\{\sqrt{\frac{\log (M/16)}{16 n}}, 1/600\Big\}$. Thus, using the lower bound from the first part of the theorem for all the other $k-1$ batches, 
with probability at least 
\[
1-\exp(- c' \sqrt{M}/k^2) - 2\exp(- M / (4k^{2}) ) 
\geq 
1-\exp(- c_{3} \sqrt{M}/k^2),
\]
the total cost of printing $\calS_{D_{\nicefrac{1}{4}}, M}$ using $k$ batches is at least 
\begin{multline*}
\bigg(2.5 n + \nicefrac{2}{3}\;\delta n - 5 \sqrt{n \log (6k)}\bigg) + (k-1)\cdot \bigg(2.5 n -\sqrt{5 n \log (2k)}\bigg) \\
\begin{aligned}
&\geq 
k \cdot 2.5 n 
+ \nicefrac{2}{3}\;\delta n 
- 5 k \sqrt{n \log(6k)} \\
&\geq 
k \cdot 2.5 n 
+ \sqrt{n} \left[ \frac{1}{900} \min \left\{ \sqrt{\log (M/16)}, \sqrt{n} \right\} - 5 k \sqrt{\log(6k)} \right].
\end{aligned}
\end{multline*}
Now if 
$k\leq c_1 \min\{\sqrt{\log M/\log\log M},\sqrt{n/\log n}\}$ 
for a small enough $c_{1} > 0$, 
then the quantity in the display above is bounded from below by 
$k \cdot 2.5 n + c_{2} \sqrt{n \min \left\{ \log M, n \right\}}$ 
for some $c_{2} > 0$. 
Thus we have obtained~\eqref{eq:strong-lb-with-hpol}.

What remains is to prove the claim in~\eqref{eq:lb-cost-one-batch}. 
The proof of this relies on the following lemma, the proof of which we defer to Section~\ref{sec:lem:S-partition}.

\begin{lemma}\label{lem:S-partition}
For every $0 \leq \delta \leq \min\Big\{\sqrt{\frac{\log (M/16)}{16 n}}, 0.1\Big\}$ there exists a coupling  $(\calS,\calS')$ of random multisets containing strands of length $n$ that satisfies the following: \begin{enumerate}[(a)]
\item $\calS$ has the same distribution as $\calS_{D_{\nicefrac{1}{4}}, M}$;
\item $\calS'$ has the same distribution as $\calS_{D_{\nicefrac{1}{4}+\delta}, \floor{\sqrt{M}}}$; and 
\item there exists an absolute constant $c>0$ such that 
\[
\p\{|\calS \cap \calS'|\geq |\calS'|/3\}\geq 
1 - \exp(-c\sqrt{M}).
\]
\end{enumerate}
\end{lemma}
Consider the pair of random multisets $(\calS, \calS')$ from Lemma~\ref{lem:S-partition}. Since $\calS$ has the same distribution as $\calS_{D_{\nicefrac{1}{4}},M}$, it suffices to show that, with probability at least  $1-\exp(- c' \sqrt{M}/k^{2})$, in every partitioning of $\calS$ into $k$ batches there exists at least one batch with cost at least $2.5 n + \nicefrac{2}{3}\;\delta n - 5\sqrt{n \log (6k)}$. 
Let $\calS'' := \calS\cap \calS'$. Since $\calS''\subseteq \calS$, one of the batches must contain at least a $\nicefrac{1}{k}$ fraction of all strands in $\calS''$. Denote this batch by $\calB$ and let $\calB'' := \calB\cap\calS''$. If $|\calS''|\geq |\calS'|/3$, then 
$|\calB''|\geq |\calS''|/k\geq |\calS'|/(3k)$. Consequently, the cost of printing $\calB''$ is at least $\EQ_{\nicefrac{1}{(3k)},\calR^*}(\calS')$. Since $\calB''\subseteq \calB$, the cost of printing $\calB$ is also at least $\EQ_{\nicefrac{1}{(3k)},\calR^*}(\calS')$. 
By Lemma~\ref{lem:S-partition}, we have that $|\calS''|\geq |\calS'|/3$ with probability at least  $1-\exp(- c \sqrt{M})$. Thus, with this probability, the cost of printing $\calB$ is at least $\EQ_{\nicefrac{1}{(3k)},\calR^*}(\calS')$. 

Now by the inequality~\eqref{cor:lem:DKW} (replacing $\nicefrac{1}{k}$ with $\nicefrac{1}{(3k)}$ and $M$ with $\floor{\sqrt{M}}$) we have that 
\[
\p\Big\{\EQ_{\nicefrac{1}{(3k)},\calR^*}(\calS')\geq Q_{\nicefrac{1}{(6k)},\calR^*}(D_{\nicefrac{1}{4}+\delta})\Big\} 
\geq 1 - 2|\calR^*| e^{-\sqrt{M}/(20k^{2})} 
> 1 - 2 e^{-\sqrt{M}/(40k^{2})},  
\]
where in the second inequality we used that 
$|\calR^{*}| < \exp(6n) \leq \exp( \sqrt{M} / (40k^{2}))$; 
this holds due to the assumption that 
$k \leq \sqrt[4]{M}/\sqrt{240n}$. 
Finally, by Lemma~\ref{lem:lb-Q-plus-delta} we have that 
\[
Q_{\nicefrac{1}{(6k)},\calR^*}(D_{\nicefrac{1}{4}+\delta}) 
\geq 
2.5n + \nicefrac{2}{3}\,\delta n - 5\sqrt{n \log (6k)}.
\]
Putting everything together we obtain that, 
with probability at least 
$1 - 2 \exp(-\sqrt{M}/(40k^{2})) - \exp( - c\sqrt{M})$, 
the cost of printing $\calB$ is at least 
$2.5n + \nicefrac{2}{3}\,\delta n - 5\sqrt{n \log (6k)}$. 
This concludes the proof of the claim in~\eqref{eq:lb-cost-one-batch}.

We complete the proof of Theorem~\ref{thm:lb-4-homoplymers} in Sections~\ref{sec:lem:lb-Q} and~\ref{sec:lem:S-partition}, where we prove Lemmas~\ref{lem:lb-Q},~\ref{lem:lb-Q-plus-delta}, and~\ref{lem:S-partition}.
\end{proof}

\subsection{Proofs of Lemma~\ref{lem:lb-Q} and Lemma~\ref{lem:lb-Q-plus-delta}}\label{sec:lem:lb-Q}

\begin{proof}[Proof of Lemma~\ref{lem:lb-Q}]
We show that 
$Q_{q,\calR^{*}}(D_{0})\geq 2n - \Delta_{n,q}$ and
$Q_{q,\calR^{*}}(D_{\nicefrac{1}{4}})\geq 2.5n - \Delta_{n,q}$, where $\Delta_{n,q} := \sqrt{5n \log \nicefrac{1}{q}}$. To establish this inequality, it suffices to prove that for all $R\in \calR^{*}$, 
\begin{align*}
\p_{S\sim D_{0}}\big\{\cost_{R} (S) \leq 2n - \Delta_{n,q} \big\} &< q;\\
\p_{S\sim D_{\nicefrac{1}{4}}}\big\{\cost_{R} (S) \leq 2.5n - \Delta_{n,q} \big\} &< q.    
\end{align*}
To do this, we use the stochastic domination argument that we previously used in Section~\ref{sec:single_batch_LB_proofs}. Let $Y_1,\ldots,Y_n$ be i.i.d.\ random variables that are uniformly distributed on $\{1,2,3\}$, and let 
$Z_1,\ldots,Z_n$ be i.i.d.\ random variables that are uniformly distributed on $\{1,2,3,4\}$. Then, by the same arguments as in Section~\ref{sec:single_batch_LB_proofs}, we have that  
\begin{align*}
\p_{S\sim D_{0}}\big\{\cost_{R} (S) \leq 2n - \Delta_{n,q} \big\} &\leq \p\bigg\{\sum_{i=1}^n Y_i \leq 2n - \Delta_{n,q} \bigg\};\\
\p_{S\sim D_{\nicefrac{1}{4}}}\big\{\cost_{R} (S) \leq 2.5n - \Delta_{n,q} \big\} &\leq \p\bigg\{\sum_{i=1}^n Z_i \leq 2.5n - \Delta_{n,q} \bigg\}.    
\end{align*}
Note that $\E[Y_1] = 2$ and $\E[Z_1]=2.5$. Thus  
by Hoeffding's inequality (Theorem~\ref{thm:hoeffding}) we have that 
\begin{align*}
\p\bigg\{\sum_{i=1}^n Y_i \leq 2n - \Delta_{n,q} \bigg\}& \leq\exp\bigg(-\frac{2\Delta^2_{n,q}}{4n}\bigg)<q;\\
\p\bigg\{\sum_{i=1}^n Z_i \leq 2.5n - \Delta_{n,q} \bigg\}&\leq\exp\bigg(-\frac{2\Delta^2_{n,q}}{9n}\bigg)<q.    
\end{align*}
This completes the proof of \eqref{eq:lb-q0} and  \eqref{eq:lb-q14}. 
\end{proof}

\begin{proof}[Proof of Lemma~\ref{lem:lb-Q-plus-delta}] 
To prove the claim it suffices to show that for all $R \in \calR^{*}$ we have that 
\begin{equation}\label{eq:lb-q14d-suff}
\p_{S \sim D_{\nicefrac{1}{4}+\delta}} \left\{ \cost_{R}(S) \leq 2.5 n + \nicefrac{2}{3}\;\delta n - 5\sqrt{n \log \nicefrac{1}{q}} \right\} < q.
\end{equation}
Accordingly, we fix $R \in \calR^{*}$ for the rest of the proof and show~\eqref{eq:lb-q14d-suff}. 

As in Section~\ref{sec:single_batch_LB_proofs}, for $i \geq 1$ let $\tau_{i}(S,R)$ denote the index of the base of $R$ that is used to print the $i$th base of $S$, and let $\tau_{0}(S,R) = 0$ for notational convenience. 
For $i \geq 1$, define $X_{i}(S,R) := \tau_{i} (S,R) - \tau_{i-1}(S,R)$, and also let $Y_{i} := \min \{ X_{i}(S,R), 5 \}$. 
When $S \sim D_{\nicefrac{1}{4}}$, we have seen (see Section~\ref{sec:single_batch_LB_proofs}) that the distribution of $Y_{i}$ stochastically dominates the uniform distribution on $\{1,2,3,4\}$, and in particular 
we have that 
$\E_{S \sim D_{\nicefrac{1}{4}}} \left[ Y_{i} \, \middle| \, \tau_{i-1}(S,R) \right] 
\geq 2.5$. 
When $S \sim D_{\nicefrac{1}{4} + \delta}$, 
this inequality does not necessarily hold for all $i \geq 1$. 
However, we still have the following. 
\begin{claim}\label{cl:eq:-2delta}
For all $i \geq 1$ we have that 
\begin{equation}\label{eq:-2delta}
\E_{S \sim D_{\nicefrac{1}{4}+\delta}} [Y_i\mid \tau_{i-1}(S, R)]\geq 2.5 -2\delta.
\end{equation}
\end{claim}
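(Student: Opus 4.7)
The plan is to condition on $\tau_{i-1}(S,R)$ and compute $\E[Y_i \mid \tau_{i-1}(S,R)]$ explicitly. The key observation is that conditioning on $\tau_{i-1}(S,R)$ (with $R$ fixed) pins down both the previous base $b := R_{\tau_{i-1}(S,R)} = S_{i-1}$ (for $i \geq 2$) and the four distances $d_A, d_C, d_G, d_T$ from position $\tau_{i-1}(S,R)$ to the next occurrences of the corresponding bases in $R$. These distances are four \emph{distinct} positive integers. Writing $s_c := \min(d_c, 5)$, we have $Y_i = s_{S_i}$ because $X_i(S,R) = d_{S_i}$.

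The case $i = 1$ should be handled separately: there $\tau_0 = 0$, the base $S_1$ is uniformly distributed on $\{A,C,G,T\}$ irrespective of $p$, and the stochastic-dominance argument already used in Section~\ref{sec:single_batch_LB_proofs} immediately gives $\E[Y_1 \mid \tau_0(S,R)] \geq 2.5 \geq 2.5 - 2\delta$. For $i \geq 2$, the conditional distribution of $S_i$ places mass $p = 1/4 + \delta$ on $b$ and mass $(1-p)/3 = 1/4 - \delta/3$ on each of the other three bases, so a short algebraic rearrangement will yield
\[
\E[Y_i \mid \tau_{i-1}(S,R)]
= \Bigl(\tfrac{1}{4} - \tfrac{\delta}{3}\Bigr) \sum_{c} s_c + \tfrac{4\delta}{3}\, s_b.
\]

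The final step is to lower bound each of the two terms on the right. The distinctness of $d_A, d_C, d_G, d_T$ as positive integers forces the ordered values $d_{(1)} < d_{(2)} < d_{(3)} < d_{(4)}$ to satisfy $d_{(j)} \geq j$; truncation at $5$ is harmless for $j \leq 4$, so $\sum_c s_c \geq 1+2+3+4 = 10$, and of course $s_b \geq 1$. Substituting these bounds into the display above will produce $2.5 - 10\delta/3 + 4\delta/3 = 2.5 - 2\delta$, as required.

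The main subtlety, and the step to double-check, is sign bookkeeping: because $p > 1/4$ the coefficient of the symmetric sum $\sum_c s_c$ has decreased by $\delta/3$ compared with the uniform case, so one must be sure that the positivity of the additional bias term $(4p-1)/3 \cdot s_b = (4\delta/3)\, s_b$ exactly compensates. This is the only obstacle, and it is resolved precisely by the inequalities $\sum_c s_c \geq 10$ and $s_b \geq 1$; the hypothesis $\delta \leq 1/600$ also guarantees $1/4 - \delta/3 > 0$, so that applying $\sum_c s_c \geq 10$ with a positive coefficient is legitimate.
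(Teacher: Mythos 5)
Your proof is correct and rests on exactly the same observations as the paper's (the support of $X_i$ given $\tau_{i-1}$ is four distinct positive integers, so $\sum_c s_c \geq 10$; the heavy mass $1/4+\delta$ sits on some value $\geq 1$), reaching the same bound $2.5 - 2\delta$. The small algebraic decomposition into a symmetric part plus a bias term, and the explicit treatment of $i=1$, make the argument a bit more transparent than the paper's brief "the conditional expectation is thus minimized when..." but are not a genuinely different route.
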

\begin{proof} 
Given $\tau_{i-1}(S,R)$ (and the knowledge of the fixed reference strand $R$), we know the \emph{support} of the random variable $X_{i}(S,R)$, which consists of four distinct positive integers. We also know that the conditional probabilities of taking on each particular value are given by 
$\nicefrac{1}{4} + \delta$, 
$\nicefrac{1}{4} - \nicefrac{\delta}{3}$, 
$\nicefrac{1}{4} - \nicefrac{\delta}{3}$, 
and $\nicefrac{1}{4} - \nicefrac{\delta}{3}$, 
with some particular permutation. 
The conditional expectation is thus minimized when the support is $\{1,2,3,4\}$ and the largest probability $\nicefrac{1}{4} + \delta$ is assigned to $1$. Hence,  
\[
\E_{S \sim D_{\nicefrac{1}{4}+\delta}} [Y_i\mid \tau_{i-1} (S, R)] \geq (\nicefrac{1}{4}+\delta)\cdot 1
+(\nicefrac{1}{4}-\nicefrac{\delta}{3})\cdot 2
+(\nicefrac{1}{4}-\nicefrac{\delta}{3})\cdot 3
+(\nicefrac{1}{4}-\nicefrac{\delta}{3})\cdot 4
= 2.5 -2\delta. \qedhere
\]
\end{proof}

We now show that, by averaging over three consecutive terms, we can obtain a better lower bound on the conditional expectation that is strictly greater than $2.5$ on average.
\begin{lemma}\label{lem:expect-biased-distrib}
For every $i \in \{ 1, \ldots, n-2 \}$ and $\delta \in (0,1/600]$, we have that 
\begin{equation}\label{eq:exp_geq_avg}
\E_{S\sim D_{\nicefrac{1}{4}+\delta}}\left[\frac{Y_i+Y_{i+1}+Y_{i+2}}{3}
\, \middle| \, \tau_{i-1}(S, R) \right]\geq 2.5+\nicefrac{2}{3}\,\delta.
\end{equation}
\end{lemma}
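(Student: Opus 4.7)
The plan is to extend the per-step analysis in the proof of Claim~\ref{cl:eq:-2delta} to three consecutive steps. Recall from that proof that, conditionally on $\tau_{j-1}$ and $S_{j-1} = B$,
\[
\E_{S \sim D_{\nicefrac{1}{4}+\delta}}[Y_j \mid \tau_{j-1}, S_{j-1} = B] = \tfrac{4\delta}{3} \cdot e_B^{(j)} + \left(\tfrac{1}{4}-\tfrac{\delta}{3}\right) \Sigma_e^{(j)},
\]
where $e_b^{(j)} := \min(d_b^{(j)}, 5)$, $d_b^{(j)}$ denotes the distance from $\tau_{j-1}$ to the next occurrence of base $b$ in $R$, and $\Sigma_e^{(j)} := \sum_b e_b^{(j)}$. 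Since the $d_b^{(j)}$ are distinct positive integers, $\Sigma_e^{(j)} \geq 10$, with equality iff the four characters of $R$ immediately following $\tau_{j-1}$ are all distinct; I call this event the \emph{bad configuration} at step $j$. Setting $G_j := \E[Y_j \mid \tau_{j-1}, S_{j-1}] - (2.5 - 2\delta) \geq 0$, it suffices to show $\E[G_i + G_{i+1} + G_{i+2} \mid \tau_{i-1}] \geq 8\delta$.

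The crucial observation is that any non-bad configuration produces $\Omega(1)$ slack that dwarfs $8\delta$. Specifically, if $\Sigma_e^{(j)} \geq 11$ on an event of conditional probability at least $\rho$, then $\E[G_j \mid \tau_{i-1}] \geq \rho(\tfrac{1}{4}-\tfrac{\delta}{3}) \geq \rho/5$; since $8\delta \leq 8/600 < 1/5$, this exceeds $8\delta$ whenever $\rho \geq 40\delta$. In particular, if the step-$i$ configuration is non-bad, the bound holds immediately, so I reduce to $\Sigma_e^{(i)} = 10$. Write $r_1, r_2, \ldots$ for the characters of $R$ starting at position $\tau_{i-1} + 1$; then $(r_1, r_2, r_3, r_4)$ is a permutation of the alphabet. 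Under the random transition $S_{i-1} \mapsto S_i$, the value $\tau_i$ equals $\tau_{i-1} + k$ when $S_i = r_k$, with probability $p = \tfrac{1}{4}+\delta$ when $r_k = B$, and $q = \tfrac{1}{4}-\tfrac{\delta}{3}$ for the other three values of $k$.

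Demanding the step-$i+1$ configuration to be bad in \emph{every} sub-case of $S_i$ forces, by direct inspection of each of the four sub-cases, the constraints $r_5 = r_1$, $r_6 = r_2$, $r_7 = r_3$, $r_8 = r_4$: if any of these fails, then the violating $S_i$-sub-case has probability at least $q$ and produces a non-bad step $i+1$, contributing at least $q/5 \geq 1/20 > 8\delta$ to $\E[G_{i+1}]$. Iterating the argument for step $i+2$ across the sixteen sub-cases of $(S_i, S_{i+1})$ extends the constraints to $r_{k+8} = r_k$ for $k \in \{1, 2, 3, 4\}$. If some constraint at step $i+2$ fails, the non-bad sub-case has conditional probability at least $q^2$; combined with $G_{i+1} = 4\delta$ (which holds under the already-established step-$i+1$ periodic structure, as computed below), this gives $\E[G_i + G_{i+1} + G_{i+2}] \geq 4\delta + q^2/5 \geq 8\delta$ for $\delta \leq 1/600$, since a direct check yields $q^2 \geq 20\delta$ in this regime.

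It remains to handle the case where $R$ is periodic with period $4$ on the window $[\tau_{i-1}+1, \tau_{i-1}+12]$. Then for any $j \in \{i+1, i+2\}$ and any value of $S_{j-1}$, the next occurrence of $S_{j-1}$ in $R$ lies exactly one full period ahead, so $e_{S_{j-1}}^{(j)} = 4$ and $\Sigma_e^{(j)} = 10$, yielding $G_j = (\tfrac{4\delta}{3})(4-1) = 4\delta$. Combined with $G_i = (\tfrac{4\delta}{3})(e_B^{(i)} - 1) \geq 0$, this gives $\E[G_i + G_{i+1} + G_{i+2} \mid \tau_{i-1}] \geq 8\delta$, with equality when $e_B^{(i)} = 1$ (the tight case). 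The main obstacle is the constraint-propagation bookkeeping: I must carefully enumerate the four $S_i$-sub-cases and the sixteen $(S_i, S_{i+1})$-sub-cases, verifying that each forced constraint on $R$ arises from a sub-case whose probability is large enough that any violation contributes at least $8\delta$; the assumption $\delta \leq 1/600$ is used critically in the step-$i+2$ argument to ensure $q^2/5 \geq 4\delta$, so that the combined gain covers the shortfall.
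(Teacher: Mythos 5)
Your proposal follows essentially the same approach as the paper's proof: you examine the $12$ bases of $R$ following $\tau_{i-1}$, reduce to the case where the window is period-$4$ periodic (where the exact computation yields $G_{i+1}=G_{i+2}=4\delta$, matching the paper's Case~1), and in the non-periodic case you locate the first violated constraint and argue that a sub-case of the jump trajectory reaches a non-permutation quadruple with probability bounded below by a constant, giving a slack term that dominates $O(\delta)$. Your ``constraint-propagation'' organization is equivalent to the paper's decomposition by the position $\tau'$ of the first non-permutation quadruple (your step-$i$-non-bad case is $\tau'=\tau^*$, your step-$(i+1)$-violation case is $\tau'\in\{\tau^*+1,\dots,\tau^*+4\}$, your step-$(i+2)$-violation case is $\tau'\in\{\tau^*+5,\dots,\tau^*+8\}$), and the per-step expectation formula and probability bounds you use are the same; one small arithmetic slip is the claim ``$q/5\geq 1/20$'' (with $q=1/4-\delta/3<1/4$ one has $q/5<1/20$), but the conclusion $q/5>8\delta$ that you actually need still holds for $\delta\leq 1/600$.
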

\begin{proof}
For $i \geq 1$ let $\tau^{*} := \tau_{i-1} (S, R)$ 
and note that for $i \geq 2$, by definition, $R_{\tau^{*}}$ is the base of the reference strand $R$ which is used for printing base $i-1$ of strand $S$. 
Consider the next $12$ bases of the reference strand, that is, the substrand $R_{\tau^* + 1}, \cdots, R_{\tau^* + 12}$ of $R$. We examine two cases.

First, suppose that this substrand is a triple repetition of some permutation of the bases $A$, $C$, $G$, $T$. For example,
$$R_{\tau^* + 1}, \cdots, R_{\tau^* + 12}=ACGT\,ACGT\,ACGT.$$
Then, for $j\in \{i+1,i+2\}$, the conditional distribution of $Y_{j}$ given $\tau_{i-1}(S,R)$ is given by: $Y_j=1$, $Y_j=2$, and $Y_j=3$ all with probability $\nicefrac{1}{4}-\nicefrac{\delta}{3}$, 
and $Y_j=4$ with probability $\nicefrac{1}{4} + \delta$. Thus,
$$\E_{S\sim D_{\nicefrac{1}{4}+\delta}}\left[Y_{i+1}+Y_{i+2}
\, \middle| \, \tau_{i-1} (S, R)\right] = 2\Big( (\nicefrac{1}{4}-\nicefrac{\delta}{3})\cdot (1+2+3)+
(\nicefrac{1}{4}+\delta)\cdot 4\Big) = 2(2.5+2\delta).$$
By Claim~\ref{cl:eq:-2delta} we have that 
$\E_{S\sim D_{\nicefrac{1}{4}+\delta}}\big[Y_{i}
\mid \tau_{i-1} (S, R)\big] \geq 2.5 - 2\delta$. 
By adding up these bounds, we obtain the desired bound~\eqref{eq:exp_geq_avg}.

\medskip

Now consider the second case: that the substrand $R_{\tau^* + 1}, \cdots, R_{\tau^* + 12}$ is not a triple repetition of the same permutation of $A$, $C$, $G$, $T$. Then, this substrand must contain four consecutive bases that are not a permutation of $A$, $C$, $G$, $T$. Let $\tau'$ be the index of the base just before the first such quadruple of bases. By construction we must have $\tau' \in \{ \tau^{*}, \tau^{*} + 1, \ldots, \tau^{*} + 8 \}$. 
Since $R_{\tau'+1},R_{\tau'+2},R_{\tau'+3},R_{\tau'+4}$ is not a permutation of 
$A$, $C$, $G$, $T$, at least one letter from the alphabet $\{A,C,G,T\}$ is absent in this quadruple. Hence, if $\tau_{j}(S, R)=\tau'$ for some $j$, then the expected next jump is at least 
\begin{align}\label{eq:extra-0.25}
\E_{S\sim D_{\nicefrac{1}{4}+\delta}} [Y_{j+1}\mid \tau_j(S, R)=\tau'] &\geq 
(\nicefrac{1}{4}+\delta)\cdot 1 + 
(\nicefrac{1}{4}-\nicefrac{\delta}{3})\cdot 2 +
(\nicefrac{1}{4}-\nicefrac{\delta}{3})\cdot 3 +
(\nicefrac{1}{4}-\nicefrac{\delta}{3})\cdot 5\notag\\
&= 2.5 + (0.25 -\nicefrac{7}{3}\,\delta).
\end{align}

We now distinguish three cases depending on whether 
$\tau' = \tau^{*}$ 
or 
$\tau' \in \left\{ \tau^{*} + 1, \ldots, \tau^{*} + 4 \right\}$ 
or 
$\tau' \in \left\{ \tau^{*} + 5, \ldots, \tau^{*} + 8 \right\}$. 
First, if $\tau' = \tau^{*}$, 
then by~\eqref{eq:extra-0.25} we have that 
$\E_{S\sim D_{\nicefrac{1}{4}+\delta}} [Y_{i} \mid \tau_{i-1} (S, R)] 
\geq 2.5 + (0.25 -\nicefrac{7}{3}\,\delta)$.  
We also have, 
by Claim~\ref{cl:eq:-2delta} and the tower rule, 
that 
$\E_{S \sim D_{\nicefrac{1}{4}+\delta}} [Y_{j} \mid \tau_{i-1}(S, R)]\geq 2.5 -2\delta$ 
for $j \in \{ i+1, i+2 \}$. 
Putting these bounds together we have, if $\tau' = \tau^{*}$, that 
\begin{align*}
\E_{S\sim D_{\nicefrac{1}{4}+\delta}}\left[ Y_{i} + Y_{i+1} + Y_{i+2}
\, \middle| \, \tau_{i-1} (S, R)\right] 
&\geq 
2.5 + (0.25 -\nicefrac{7}{3}\,\delta) 
+ 2 ( 2.5 -2\delta ) \\
&= 3 \cdot 2.5 + 0.25 - \nicefrac{19}{3}\,\delta 
\geq 
3 ( 2.5 + \nicefrac{2}{3}\,\delta), 
\end{align*}
where the last inequality holds when $\delta \leq 0.03$. This shows~\eqref{eq:exp_geq_avg} in this case. 

Next, suppose that $\tau' \in \left\{ \tau^{*} + 1, \ldots, \tau^{*} + 4 \right\}$. 
By Claim~\ref{cl:eq:-2delta} we have that 
\begin{equation}\label{eq:tau'_2_1}
\E_{S \sim D_{\nicefrac{1}{4}+\delta}} [ Y_{i} + Y_{i+2} \mid \tau_{i-1}(S, R)]\geq 2(2.5 -2\delta).
\end{equation}
In order to bound 
$\E_{S \sim D_{\nicefrac{1}{4}+\delta}} [Y_{i+1} \mid \tau_{i-1}(S, R)]$, 
we first condition on $\tau_{i}(S,R)$. 
Note that since $\tau' > \tau^{*}$, the bases 
$R_{\tau^{*}+1}, \ldots, R_{\tau^{*}+4}$ are all different. 
Therefore we must have that $\tau_{i}(S,R) \in \{ \tau^{*}+1, \ldots, \tau^{*}+4\}$. 
Conditioning on the value of $\tau_{i}(S,R)$ we thus have that
\begin{align*}
\E [Y_{i+1} \mid \tau_{i-1}(S, R)] 
&= \sum_{\ell = 1}^{4} \E [Y_{i+1} \mid \tau_{i}(S, R) = \tau^{*} + \ell ] 
\p \left\{ \tau_{i}(S,R) = \tau^{*} + \ell \, \middle| \, \tau_{i-1}(S,R) \right\} \\
&= 
\E [Y_{i+1} \mid \tau_{i}(S, R) = \tau' ] 
\p \left\{ \tau_{i}(S,R) = \tau' \, \middle| \, \tau_{i-1}(S,R) \right\} \\ 
&\quad 
+ \E [Y_{i+1} \mid \tau_{i}(S, R) \neq \tau' ] \left( 1 - 
\p \left\{ \tau_{i}(S,R) = \tau' \, \middle| \, \tau_{i-1}(S,R) \right\} \right), 
\end{align*}
where all expectations and probabilities are under $S \sim D_{\nicefrac{1}{4} + \delta}$. 
We then use~\eqref{eq:extra-0.25} to bound from below the first expectation in the sum above, and we use Claim~\ref{cl:eq:-2delta} to bound from below the second expectation. 
Plugging in these bounds we obtain that 
\begin{align}
\E_{S \sim D_{\nicefrac{1}{4} + \delta}} [Y_{i+1} \mid \tau_{i-1}(S, R)] 
&\geq 2.5 - 2 \delta + (0.25 - \nicefrac{\delta}{3}) \p_{S \sim D_{\nicefrac{1}{4} + \delta}} \left\{ \tau_{i}(S,R) = \tau' \, \middle| \, \tau_{i-1}(S,R) \right\} \notag \\
&\geq 2.5 - 2 \delta + (0.25 - \nicefrac{\delta}{3})^{2} 
\geq 2.5 + 6 \delta, \label{eq:tau'_2_2}
\end{align}
where the last inequality holds when $\delta \leq 1/200$. Putting together~\eqref{eq:tau'_2_1} and~\eqref{eq:tau'_2_2} we obtain~\eqref{eq:exp_geq_avg} in this case as well. 

Finally, suppose that $\tau' \in \left\{ \tau^{*} + 5, \ldots, \tau^{*} + 8 \right\}$. 
By Claim~\ref{cl:eq:-2delta} we have that 
\begin{equation}\label{eq:tau'_3_1}
\E_{S \sim D_{\nicefrac{1}{4}+\delta}} [ Y_{i} + Y_{i+1} \mid \tau_{i-1}(S, R)]\geq 2(2.5 -2\delta).
\end{equation} 
In order to bound 
$\E_{S \sim D_{\nicefrac{1}{4}+\delta}} [Y_{i+2} \mid \tau_{i-1}(S, R)]$, 
we first condition on $\tau_{i+1}(S,R)$.  
Note that since $\tau' > \tau^{*} + 4$, 
the bases $R_{\tau^{*}+1}, \ldots, R_{\tau^{*}+4}$ are all different, 
and the bases $R_{\tau^{*}+5}, \ldots, R_{\tau^{*}+8}$ 
are a repetition of the bases $R_{\tau^{*}+1}, \ldots, R_{\tau^{*}+4}$. 
This implies that 
$\tau_{i}(S,R) \in \{ \tau^{*}+1, \ldots, \tau^{*}+4\}$, 
and, moreover, that given $\tau_{i}(S,R)$, 
we have that 
$\tau_{i+1}(S,R) \in \{ \tau_{i}(S,R)+1, \ldots, \tau_{i}(S,R)+4\}$. 
In particular, note that 
if $\tau_{i}(S,R) = \tau' - 4$ and $S_{i+1} = S_{i}$, then $\tau_{i+1}(S,R) = \tau'$. 
Thus conditioning on whether or not 
$\tau_{i+1}(S,R) = \tau'$ occurs, we have that 
\begin{align*}
\E [Y_{i+2} \mid \tau_{i-1}(S, R)] 
&= 
\E [Y_{i+2} \mid \tau_{i+1}(S, R) = \tau' ] 
\p \left\{ \tau_{i+1}(S,R) = \tau' \, \middle| \, \tau_{i-1}(S,R) \right\} \\ 
&\quad 
+ \E [Y_{i+2} \mid \tau_{i+1}(S, R) \neq \tau' ] \left( 1 - 
\p \left\{ \tau_{i+1}(S,R) = \tau' \, \middle| \, \tau_{i-1}(S,R) \right\} \right). 
\end{align*}
As before, 
we use~\eqref{eq:extra-0.25} to bound from below the first expectation in the sum above, and we use Claim~\ref{cl:eq:-2delta} to bound from below the second expectation. Plugging in these bounds we obtain that 
\begin{equation*}
\E_{S \sim D_{\nicefrac{1}{4} + \delta}} [Y_{i+2} \mid \tau_{i-1}(S, R)] 
\geq 2.5 - 2 \delta + (0.25 - \nicefrac{\delta}{3}) \p_{S \sim D_{\nicefrac{1}{4} + \delta}} \left\{ \tau_{i+1}(S,R) = \tau' \, \middle| \, \tau_{i-1}(S,R) \right\}. 
\end{equation*} 
Since 
$\tau_{i}(S,R) = \tau' - 4$ and $S_{i+1} = S_{i}$ 
together imply that $\tau_{i+1}(S,R) = \tau'$, 
we have that 
\begin{multline*}
\p_{S \sim D_{\nicefrac{1}{4} + \delta}} \left\{ \tau_{i+1}(S,R) = \tau' \, \middle| \, \tau_{i-1}(S,R) \right\} \\
\geq 
\p_{S \sim D_{\nicefrac{1}{4} + \delta}} \left\{ \tau_{i}(S,R) = \tau' - 4 \text{ and } S_{i+1} = S_{i} \, \middle| \, \tau_{i-1}(S,R) \right\} 
\geq \left( \nicefrac{1}{4} - \nicefrac{\delta}{3} \right) \left( \nicefrac{1}{4} + \delta \right) > 1/16.
\end{multline*}
Putting together the two previous displays we obtain that 
\begin{equation}\label{eq:tau'_3_2}
\E_{S \sim D_{\nicefrac{1}{4}+\delta}} [ Y_{i+2} \mid \tau_{i-1}(S, R)] 
\geq 
2.5 + \nicefrac{1}{64} - ( 2 + \nicefrac{1}{48}) \delta 
\geq 2.5 + 6 \delta, 
\end{equation} 
where the last inequality holds whenever $\delta \leq 1/600$. 
Putting together~\eqref{eq:tau'_3_1} and~\eqref{eq:tau'_3_2} we obtain~\eqref{eq:exp_geq_avg} in this case as well. This concludes all cases and hence concludes the proof. 
\end{proof}

We now conclude the proof of~\eqref{eq:lb-q14d-suff}. For simplicity, we assume in the following that $n$ is divisible by $3$; the case of $n$ not being divisible by $3$ can be handled by padding. 
Let $Z_{0} := 0$ and for $t \in \{1, \ldots, n/3 \}$ let 
$$Z_t := \sum_{i=1}^{3t}Y_i - 3(2.5+\nicefrac{2}{3}\,\delta)t.$$ 
By definition we have that 
$$\cost_{R} (S) = \sum_{i=1}^n X_i(S, R) \geq \sum_{i=1}^{n} Y_i = Z_{n/3} + (2.5 + \nicefrac{2}{3}\,\delta) n.$$ 
and so 
\begin{equation}\label{eq:bounding_via_submg}
\p_{S \sim D_{\nicefrac{1}{4}+\delta}} \left\{ \cost_{R}(S) \leq 2.5 n + \nicefrac{2}{3}\;\delta n - 5\sqrt{n \log \nicefrac{1}{q}} \right\} 
\leq 
\p_{S \sim D_{\nicefrac{1}{4}+\delta}} \left\{ Z_{n/3} \leq - 5 \sqrt{n \log \nicefrac{1}{q}} \right\}.
\end{equation}
Lemma~\ref{lem:expect-biased-distrib} implies that 
$\left\{ Z_{t} \right\}_{t=0}^{n/3}$ 
is a submartingale with respect to its natural filtration. 
Since $Y_i\in\{1,\dots,5\}$ for every $i \geq 1$, 
the submartingale differences satisfy 
$Z_{t+1} - Z_{t} 
= (Y_{3t+1} + Y_{3t+2} + Y_{3t+3}) - 3(2.5+\nicefrac{2}{3}\delta) \in [-4.5-2\delta,7.5-2\delta]$, 
that is, they take values in an interval of length $12$. 
Thus by the Azuma-Hoeffding inequality (see, e.g.,~\cite{boucheron2013concentration}) we have that 
\begin{equation}\label{eq:azuma}
\p_{S \sim D_{\nicefrac{1}{4}+\delta}} \left\{ Z_{n/3} \leq - 5 \sqrt{n \log \nicefrac{1}{q}} \right\} 
\leq \exp \left( - \frac{2 \cdot 5^{2} n \log \nicefrac{1}{q}}{(n/3) \cdot 12^{2}} \right) 
= q^{25/24} 
< q.
\end{equation}
Putting together~\eqref{eq:bounding_via_submg} and~\eqref{eq:azuma} we obtain~\eqref{eq:lb-q14d-suff}. 
\end{proof}

\subsection{Proof of Lemma~\ref{lem:S-partition}}\label{sec:lem:S-partition}
\begin{proof}[Proof of Lemma~\ref{lem:S-partition}]
For simplicity of exposition, assume that $\sqrt{M}$ is an integer number (if it is not, we can replace $M$ with $M'=\floor{\sqrt{M}}^2$). Let $\Sigma := \{ A, C, G, T \}$ and denote by $\cL$ the set of all strands $S$ of length $n$ with the number of repeating bases, $d(S)$, being at most  
$\ceil{(\nicefrac{1}{4}+\delta)n}$:
$$\cL = \{S\in \Sigma^n: d(S) \leq \ceil{(\nicefrac{1}{4}+\delta)n} \}.$$
Let $\phi := 1/\sqrt{M}$. We first show the following claim. 
\begin{claim}\label{clm:ratio-D} 
Suppose that  
$0 \leq \delta \leq \min \left\{ \sqrt{\frac{\log (M/16)}{16 n}}, 0.1 \right\}$. 
For all strands $S\in \cL$ we have that 
\begin{equation}\label{eq:D-prob-ratio}
D_{\nicefrac{1}{4}}(S)\geq 2 \phi \cdot D_{\nicefrac{1}{4}+\delta}(S).
\end{equation}
\end{claim}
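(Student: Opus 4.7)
The plan is to write the ratio $D_{\nicefrac{1}{4}+\delta}(S)/D_{\nicefrac{1}{4}}(S)$ in closed form and then bound it above by $\sqrt{M}/2$, which is equivalent to the claim. Using the explicit formula for $D_{p,n}$, one sees that
\[
\frac{D_{\nicefrac{1}{4}+\delta}(S)}{D_{\nicefrac{1}{4}}(S)} = \left(1-\frac{4\delta}{3}\right)^{n-d(S)-1}(1+4\delta)^{d(S)},
\]
since the factors $1/4$ cancel and $(1-p)/3$, $p$ become $(3-4\delta)/12$ and $1/4+\delta$ respectively when $p=\nicefrac{1}{4}+\delta$. This is a likelihood ratio of the type encountered in large deviations for Bernoulli-like distributions, so the strategy is simply to take logs and use the standard inequalities $\log(1+x)\leq x$ and $\log(1-x)\leq -x$.

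Applying these bounds gives
\[
\log \frac{D_{\nicefrac{1}{4}+\delta}(S)}{D_{\nicefrac{1}{4}}(S)} \leq 4\delta \, d(S) - \frac{4\delta(n-d(S)-1)}{3} = \frac{4\delta\bigl(4d(S)-n+1\bigr)}{3}.
\]
The membership $S \in \cL$ is used precisely here: it gives $d(S) \leq \lceil (\nicefrac{1}{4}+\delta)n \rceil \leq (\nicefrac{1}{4}+\delta)n+1$, so that $4d(S)-n+1 \leq 4\delta n + 5$. Substituting,
\[
\log \frac{D_{\nicefrac{1}{4}+\delta}(S)}{D_{\nicefrac{1}{4}}(S)} \leq \frac{16\delta^2 n + 20\delta}{3}.
\]

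To finish, plug in the two hypotheses on $\delta$: the bound $\delta \leq \sqrt{\log(M/16)/(16n)}$ yields $16\delta^2 n \leq \log(M/16)$, and $\delta \leq 0.1$ bounds $20\delta \leq 2$. Therefore the right-hand side is at most $(\log(M/16)+2)/3$, and a short numerical check shows $(\log(M/16)+2)/3 \leq \frac{1}{2}\log M - \log 2 = \log(\sqrt{M}/2)$ whenever $M$ is larger than a modest absolute constant (roughly $M\geq 15$), which is amply satisfied by the standing assumption $M \geq 10n^{2}\log n$ in the theorem. Exponentiating gives $D_{\nicefrac{1}{4}+\delta}(S)/D_{\nicefrac{1}{4}}(S) \leq \sqrt{M}/2 = 1/(2\phi)$, which is exactly~\eqref{eq:D-prob-ratio}.

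The computation is essentially routine; there is no real obstacle beyond careful bookkeeping of constants. The only place where any thought is needed is to verify that the two pieces $16\delta^{2}n$ and $20\delta$ coming from the first and second-order terms of $\log(1\pm 4\delta/3)$ and $\log(1\pm 4\delta)$ are both comfortably absorbed by the budget $\log(\sqrt{M}/2)$; this is why the constraints $\delta \leq \sqrt{\log(M/16)/(16n)}$ and $\delta \leq 0.1$ in the hypothesis of the lemma are chosen with these particular numerical constants.
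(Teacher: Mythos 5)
Your proof is correct and arrives at exactly the claimed bound, but by a cleaner route than the paper's. The paper writes the likelihood ratio as a product of three explicit factors, bounds the first by $1$ via AM--GM applied to $(1-\nicefrac{4}{3}\delta)^3(1+4\delta)$, bounds the third by $2$ directly for $\delta \in [0,\nicefrac{1}{10}]$, and bounds the middle one by $e^{8\delta^2 n}$ using a quadratic comparison; the net result is ratio $\leq 2 e^{8\delta^2 n} \leq \sqrt{M}/2$. You instead pass to logarithms immediately and use the two elementary one-sided bounds $\log(1+x) \leq x$ and $\log(1-x) \leq -x$, getting $\log(\text{ratio}) \leq \nicefrac{4\delta}{3}\,(4d(S)-n+1) \leq (16\delta^2 n + 20\delta)/3$, and then close with a short numerical comparison. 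The two approaches produce essentially the same leading-order term (your $\nicefrac{16}{3}\delta^2 n$ vs.\ the paper's $8\delta^2 n$, with different lower-order fluff), and both land comfortably under the budget $\log(\sqrt{M}/2)$. Yours is arguably shorter and avoids the AM--GM step entirely. One small remark on the final numerics: the inequality $(\log(M/16)+2)/3 \leq \log(\sqrt{M}/2)$ you check is equivalent to $\log M \geq 4 - 2\log 2 \approx 2.61$; this is already guaranteed by the claim's own hypothesis, since $\delta \geq 0$ together with $\delta \leq \sqrt{\log(M/16)/(16n)}$ forces $M \geq 16$, so you do not actually need to invoke the theorem's stronger standing assumption $M \geq 10n^2\log n$.
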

\begin{proof}
First observe that for all $S \in \Sigma^{n}$ we have that 
$$
\frac{D_{\nicefrac{1}{4}+\delta}(S)}{D_{\nicefrac{1}{4}}(S)} = 
\frac{\frac{1}{4}\cdot \big(\frac{1}{4} - \frac{\delta}{3}\big)^{n-d(S)-1}\big(\frac{1}{4} + \delta\big)^{d(S)}}{\frac{1}{4^n}} = 
\big(1 - \nicefrac{4}{3}\,\delta\big)^{n-d(S)-1}\big(1 + 4\delta\big)^{d(S)}.
$$
For $S\in \cL$ we have that $d(S)\leq (\nicefrac{1}{4}+\delta)n+1$ by definition. Plugging this bound into the display above we obtain for $S \in \cL$ that 
\begin{align*}
\frac{D_{\nicefrac{1}{4}+\delta}(S)}{D_{\nicefrac{1}{4}}(S)} 
&\leq 
\big(1 - \nicefrac{4}{3}\,\delta\big)^{(\nicefrac{3}{4}-\delta)n-2}\big(1 + 4\delta\big)^{(\nicefrac{1}{4}+\delta)n+1} \\ 
&=
\Big((1-\nicefrac{4}{3}\;\delta)^3(1+4\delta)\Big)^{n/4} \cdot
\left(\frac{1+4\delta}{1-\nicefrac{4}{3}\;\delta}\right)^{\delta n}\cdot \left(\frac{(1+4\delta)}{(1-\nicefrac{4}{3}\;\delta)^2}\right).
\end{align*}
We now bound from above the three factors on the right hand side. The first factor is at most $1$, since $(1-\nicefrac{4}{3}\;\delta)^3 (1+4\delta) \leq 1$ by the AM-GM inequality. 
The third factor is bounded from above by $2$ for $\delta\in[0,\nicefrac{1}{10}]$. 
Finally, to bound the second factor, observe that the quadratic function
$$(1+4x) - (1-4x/3)(1+8x) = \frac{32\, x(x-\nicefrac{1}{4})}{3}$$
is negative for $x\in (0,\nicefrac{1}{4})$. Hence, $(1+4\delta)/(1-\nicefrac{4}{3}\;\delta)\leq (1+8\delta)$ 
for $\delta \in [0,1/4]$ and so 
$$\left(\frac{1+4\delta}{1-\nicefrac{4}{3}\;\delta}\right)^{\delta n}
\leq (1+8\delta)^{\delta n} = 
e^{\delta n\ln(1+8\delta)}\leq 
e^{8\delta^2 n}.$$
Combining these bounds we obtain that
$$D_{\nicefrac{1}{4}}(S)\geq \nicefrac{1}{2}\;e^{-8\delta^2 n}
\cdot D_{\nicefrac{1}{4}+\delta}(S) 
\geq 2 \phi \cdot D_{\nicefrac{1}{4}+\delta}(S),$$
where in the last inequality we used that 
$0 \leq \delta \leq \sqrt{\frac{\log (M/16)}{16 n}}$ 
and that 
$\phi = 1 / \sqrt{M}$. 
\end{proof}
We also need to estimate the probability of the set $\cL$.
\begin{claim}\label{clm:measure-L}
We have that $D_{\nicefrac{1}{4}+\delta}(\cL)\geq \nicefrac{1}{2}$.
\end{claim}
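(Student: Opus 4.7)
The plan is to recognize $d(S)$ under $D_{\nicefrac{1}{4}+\delta}$ as a binomial random variable and to then compare the threshold $\lceil(\nicefrac{1}{4}+\delta)n\rceil$ with the median of this binomial.

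First I will observe that sampling $S\sim D_p$ with $p=\nicefrac{1}{4}+\delta$ is equivalent to picking $S_1$ uniformly in $\Sigma$ and then, for each $i\ge 2$, setting $S_i=S_{i-1}$ with probability $p$ and otherwise choosing $S_i$ uniformly from the remaining three bases. This follows directly from the product form of $D_{p,n}(S)$ given at the start of Section~\ref{sec:lb_multi_batch}. Consequently, the indicator variables $\mathbf{1}\{S_i=S_{i-1}\}$ for $i=2,\dots,n$ are i.i.d.\ Bernoulli$(p)$, and $d(S)\sim \Bin(n-1,p)$.

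Next, set $T:=\lceil(\nicefrac{1}{4}+\delta)n\rceil=\lceil np\rceil$. Since $(n-1)p\le np$, monotonicity of the ceiling gives $\lceil(n-1)p\rceil\le T$, so
\[
D_{\nicefrac{1}{4}+\delta}(\cL)=\p\{d(S)\le T\}\ge \p\{d(S)\le \lceil(n-1)p\rceil\}.
\]
I will then invoke the classical fact that the median of $\Bin(m,p)$ always lies in the interval $[\lfloor mp\rfloor,\lceil mp\rceil]$ (Kaas--Buhrmann, 1980), applied with $m=n-1$. This yields $\p\{d(S)\le \lceil(n-1)p\rceil\}\ge \nicefrac{1}{2}$, which finishes the argument.

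There is no substantial obstacle: the lemma reduces to a standard fact about binomial medians, and the only bookkeeping is the comparison between $n$ and $n-1$ inside the ceiling, handled by monotonicity. If one prefers a self-contained proof avoiding the Kaas--Buhrmann theorem, an alternative is to examine the ratio of consecutive PMF values of $\Bin(n-1,p)$ around the mean: since $p\le \nicefrac{1}{4}+0.1<\nicefrac{1}{2}$, the PMF is unimodal and the mode sits below $(n-1)p$, from which one can directly verify that the cumulative probability up to $\lceil np\rceil$ exceeds $\nicefrac{1}{2}$.
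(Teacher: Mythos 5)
Your proof is correct and follows essentially the same route as the paper: identify $d(S)\sim\Bin(n-1,p)$ and invoke the fact that the binomial median is at most $\lceil(n-1)p\rceil \le \lceil np\rceil$. You are a bit more explicit about the $n$ versus $n-1$ bookkeeping and cite Kaas--Buhrmann by name, but the argument is the same.
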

\begin{proof}
The number of repetitions $d(S)$ for random strands $S\sim D_{p}$ has a binomial distribution $B(n-1,p)$. The median of the binomial distribution $B(n-1,p)$ is at most $\ceil{(n-1)p}$. Thus, $\p_{S\sim D_p}\{d(S)\leq \ceil{(n-1)p}\} \geq 1/2$ 
and hence $D_{\nicefrac{1}{4}+\delta}(\cL)\geq \nicefrac{1}{2}$.
\end{proof}
We now return to the proof of Lemma~\ref{eq:D-prob-ratio}.
Define two probability measures $A$ and $B$ on $\Sigma^n$:
$$
A(S) := 
\begin{cases}
\frac{D_{\nicefrac{1}{4}+\delta}(S)}{D_{\nicefrac{1}{4}+\delta}(\cL)}, & \text{ for } S\in\cL;\\
0, & \text{ for } S\notin\cL;
\end{cases}\;\;\;\;\;\;\;\;
B(S) :=\frac{D_{\nicefrac{1}{4}}(S) - \phi \, A(S)}{1-\phi}.
$$
\begin{claim}~\label{clm:A-B-prob-measure}
$A$ and $B$ are probability measures on $\Sigma^n$. That is,
$A$ and $B$ are nonnegative and $A(\Sigma^n) = B(\Sigma^n) = 1$.
\end{claim}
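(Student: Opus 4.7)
The plan is to verify the three requirements of the claim separately: nonnegativity of $A$, nonnegativity of $B$, and that each measure integrates to $1$. The only nontrivial point is the nonnegativity of $B$ on $\cL$, which is precisely where we combine Claims~\ref{clm:ratio-D} and~\ref{clm:measure-L}.

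First I would handle $A$. By its piecewise definition $A(S) \geq 0$ for every $S \in \Sigma^n$, since $D_{\nicefrac{1}{4}+\delta}(S) \geq 0$ and $D_{\nicefrac{1}{4}+\delta}(\cL) > 0$. Summing,
\[
A(\Sigma^n) = \sum_{S\in\cL} \frac{D_{\nicefrac{1}{4}+\delta}(S)}{D_{\nicefrac{1}{4}+\delta}(\cL)} = \frac{D_{\nicefrac{1}{4}+\delta}(\cL)}{D_{\nicefrac{1}{4}+\delta}(\cL)} = 1.
\]

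Next I would verify nonnegativity of $B$. Since $\phi = 1/\sqrt{M} < 1$, the denominator $1-\phi$ is positive, so it suffices to show that $D_{\nicefrac{1}{4}}(S) \geq \phi\, A(S)$ for every $S \in \Sigma^n$. For $S \notin \cL$ this is immediate because $A(S) = 0$. For $S \in \cL$, Claim~\ref{clm:ratio-D} gives $D_{\nicefrac{1}{4}}(S) \geq 2\phi\, D_{\nicefrac{1}{4}+\delta}(S)$, while Claim~\ref{clm:measure-L} gives $1/D_{\nicefrac{1}{4}+\delta}(\cL) \leq 2$. Combining,
\[
\phi\, A(S) = \phi\cdot \frac{D_{\nicefrac{1}{4}+\delta}(S)}{D_{\nicefrac{1}{4}+\delta}(\cL)} \leq 2\phi\, D_{\nicefrac{1}{4}+\delta}(S) \leq D_{\nicefrac{1}{4}}(S),
\]
which establishes $B(S) \geq 0$.

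Finally, the total mass of $B$ follows from the previous computation:
\[
B(\Sigma^n) = \frac{D_{\nicefrac{1}{4}}(\Sigma^n) - \phi\, A(\Sigma^n)}{1-\phi} = \frac{1-\phi}{1-\phi} = 1.
\]
The main (and only) obstacle is the nonnegativity of $B$ on $\cL$, and that has been resolved once Claims~\ref{clm:ratio-D} and~\ref{clm:measure-L} are invoked in tandem; everything else is a direct verification from the definitions.
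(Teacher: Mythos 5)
Your proof is correct and follows essentially the same route as the paper: nonnegativity of $B$ on $\cL$ is obtained by combining Claim~\ref{clm:ratio-D} with Claim~\ref{clm:measure-L}, and the total-mass identities are direct computations. The only cosmetic difference is that the paper recognizes $A$ as the conditional distribution of $D_{\nicefrac{1}{4}+\delta}$ given $\cL$, whereas you verify the two axioms for $A$ by hand; both are equivalent.
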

\begin{proof} 
If $S\sim D_{\nicefrac{1}{4}+\delta}$, then the conditional distribution of $S$ given that $S \in \cL$ is exactly given by~$A$. In other words, for every $S' \in \Sigma^{n}$ we have that 
$\p_{S\sim D_{\nicefrac{1}{4}+\delta}} ( S = S' \mid S \in \cL ) = A(S')$. 
Therefore $A$ is a probability measure. 
It is easy to verify that $B(\Sigma^n) = 1$:
$$B(\Sigma^n) = \frac{D_{\nicefrac{1}{4}}(\Sigma^n) - \phi \, A(\Sigma^n)}{1-\phi} = \frac{1-\phi}{1-\phi} = 1.$$
Thus, it remains to show that $B$ is nonnegative. For $S \notin \cL$ we have that $A(S) = 0$, so $B(S) = D_{\nicefrac{1}{4}}(S) / (1 - \phi) \geq 0$. 
By Claim~\ref{clm:measure-L} we know that 
$D_{\nicefrac{1}{4}+\delta}(\cL)\geq \nicefrac{1}{2}$, 
so for every $S \in \cL$ we have that 
$A(S) \leq 2 D_{\nicefrac{1}{4}+\delta}(S)$. 
Therefore for every $S \in \cL$ we have that 
\[
B(S) 
= \frac{D_{\nicefrac{1}{4}}(S) - \phi \, A(S)}{1-\phi} 
\geq \frac{D_{\nicefrac{1}{4}}(S) - 2 \phi \, D_{\nicefrac{1}{4}+\delta}(S)}{1-\phi} 
\geq 0,
\]
where the last inequality is a consequence of Claim~\ref{clm:ratio-D}. 
\end{proof}

We now show how to define the coupling $(\calS,\calS')$ satisfying the conditions (a), (b), and (c) of Lemma~\ref{lem:S-partition}. 
Let $S_1, S_2,\dots$, be an infinite sequence of i.i.d.\ random strands, where each strand $S_i$ is distributed according to the distribution $D_{\nicefrac{1}{4}+\delta}$. 
Let $L_1,L_2,\ldots,$
be the subsequence of strands that belong to the 
set $\cL$. 
Note in particular that, by construction, $L_1,L_2,\ldots,$ are i.i.d.\ strands distributed according to the distribution $A$.

First, let  
$\cS' := \left\{ S_{1}, S_{2}, \ldots, S_{\sqrt{M}} \right\}$. 
By construction we have that $\cS'$ has the same distribution as $\cS_{D_{\nicefrac{1}{4}+\delta},\sqrt{M}}$, 
which shows part (a) of the claim. 

Next, we define $\calS$. 
We generate the strands of $\calS$ one by one. For each $i=1,\dots, M$, we flip a coin: 
with probability $\phi$, we add the first not yet selected strand from the sequence $\{L_j\}_{j \geq 1}$  to~$\cS$; 
with the remaining probability $(1-\phi)$, we add an independent random strand drawn from the distribution $B$ to $\calS$. 
By construction the strands in $\cS$ are i.i.d. 
Moreover, by construction, each strand is distributed according to 
$\phi A + (1-\phi) B = D_{\nicefrac{1}{4}}$, 
where the equality follows from the definition of $B$. 
So $\cS$ has the same distribution as $\cS_{D_{\nicefrac{1}{4}},M}$, 
which shows part (b) of the claim.

Finally, we verify part (c) of the claim. The intersection of the sets $\calS$ and $\calS'$ is a prefix of the sequence $L_1,L_2,\dots$. In expectation, the set $\calS$ contains  $\phi M=\sqrt{M}$ elements from the sequence $L_1,L_2,\ldots$. Thus, by the Chernoff bound, it contains the first $\sqrt{M}/2$ elements of this sequence with probability at least $1 - \exp(-c_1\sqrt{M})$.  Similarly, in expectation, the set $\calS'$ contains at least $\sqrt{M}/2$ elements from $L_1,L_2,\ldots$. Hence, by the Chernoff bound, it contains the first $\sqrt{M}/3$ elements from the sequence $L_1,L_2,\ldots$ with probability at least $1 - \exp(-c_2 \sqrt{M})$. Thus, by a union bound, 
\[
\p \left\{ \left\{ L_{1}, L_{2}, \ldots, L_{\sqrt{M}/3} \right\} \subseteq \calS\cap\calS' \right\} 
\geq 1 - \exp(-c_1\sqrt{M}) - \exp(-c_2 \sqrt{M}). \qedhere
\]
\end{proof}

\section*{Acknowledgements}
We would like to thank Karin Strauss for helpful discussions about this work.

\bibliographystyle{abbrv}
\bibliography{refs}

\appendix

\section{Standard tail bounds}

We recall Hoeffding's inequality (see, e.g.,~\cite{boucheron2013concentration}), which we use throughout our proofs.

\begin{theorem}[Hoeffding's inequality]\label{thm:hoeffding}
Let $X_{1}, \ldots, X_{n}$ be independent random variables such that $X_{i}$ takes its values in $\left[ a_{i}, b_{i} \right]$ almost surely for all $i \leq n$. Let $S_{n} := X_{1} + \ldots + X_{n}$. Then for every $t > 0$ we have that
\[
\p \left\{ S_{n} - \E \left[ S_{n} \right] \geq t \right\}
\leq
\exp \left( - \frac{2 t^{2}}{\sum_{i=1}^{n} \left( b_{i} - a_{i} \right)^{2}} \right).
\]
\end{theorem}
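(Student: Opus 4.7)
The plan is to follow the standard Chernoff--Cram\'er (exponential moment) approach, which is the textbook route to Hoeffding's inequality. Fix $t > 0$ and a parameter $\lambda > 0$ to be optimized. By Markov's inequality applied to the nonnegative random variable $\exp(\lambda(S_n - \E[S_n]))$, we have
\[
\p\{S_n - \E[S_n] \geq t\} \leq e^{-\lambda t}\, \E\bigl[e^{\lambda(S_n - \E[S_n])}\bigr].
\]
By independence of the $X_i$, the moment generating function factorizes as $\E[e^{\lambda(S_n - \E[S_n])}] = \prod_{i=1}^{n} \E[e^{\lambda(X_i - \E[X_i])}]$.

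The main technical step is Hoeffding's lemma: for any random variable $Y$ supported in an interval $[a,b]$ with $\E[Y] = 0$, one has $\E[e^{\lambda Y}] \leq \exp(\lambda^2(b-a)^2/8)$. I would prove this by using convexity of $x \mapsto e^{\lambda x}$ to bound $e^{\lambda y} \leq \tfrac{b-y}{b-a} e^{\lambda a} + \tfrac{y-a}{b-a} e^{\lambda b}$ pointwise for $y \in [a,b]$, taking expectations (using $\E[Y]=0$), and then showing that the resulting function $\psi(u) := \log\bigl(\tfrac{b}{b-a} e^{ua} + \tfrac{-a}{b-a} e^{ub}\bigr)$ with $u = \lambda$ satisfies $\psi(0) = 0$, $\psi'(0) = 0$, and $\psi''(u) \leq (b-a)^2/4$ by recognizing $\psi''$ as the variance of a two-point distribution, which is at most a quarter of the square of the range. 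A Taylor expansion in $u$ then gives $\psi(\lambda) \leq \lambda^2(b-a)^2/8$.

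Applying Hoeffding's lemma to each centered variable $X_i - \E[X_i] \in [a_i - \E[X_i], b_i - \E[X_i]]$ (whose range length is still $b_i - a_i$) yields
\[
\E\bigl[e^{\lambda(S_n - \E[S_n])}\bigr] \leq \exp\!\left(\frac{\lambda^2}{8} \sum_{i=1}^{n} (b_i - a_i)^2\right).
\]
Combining with the Markov step,
\[
\p\{S_n - \E[S_n] \geq t\} \leq \exp\!\left(-\lambda t + \frac{\lambda^2}{8}\sum_{i=1}^{n}(b_i-a_i)^2\right).
\]
Optimizing the right-hand side over $\lambda > 0$, the minimum is attained at $\lambda^{*} = 4t / \sum_{i}(b_i - a_i)^2$, which yields the claimed bound $\exp\bigl(-2t^2 / \sum_{i}(b_i-a_i)^2\bigr)$.

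The main obstacle is really just Hoeffding's lemma, which hinges on the variance-bound $\psi''(u) \leq (b-a)^2/4$; everything else is bookkeeping. Since this is a classical result, I would cite a standard reference (e.g., \cite{boucheron2013concentration}) for the lemma rather than reprove it in detail.
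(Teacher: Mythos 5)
Your proof is correct and is the standard Chernoff--Cram\'er argument via Hoeffding's lemma. The paper does not reprove this result---it simply cites it as a classical theorem (see, e.g., \cite{boucheron2013concentration})---so your write-up matches the intended source exactly.
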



We also recall the Paley-Zygmund inequality (see, e.g.,~\cite{boucheron2013concentration}).

\begin{lemma}[Paley-Zygmund inequality]\label{lem:PZ}
Let $Z$ be a nonnegative random variable and let $\theta \in [0,1]$. Then 
\[
\p \left\{ Z \geq \theta \E \left[ Z \right] \right\} 
\geq \left( 1 - \theta \right)^{2} \frac{\left( \E \left[ Z \right] \right)^{2}}{\E \left[ Z^{2} \right]}.
\]
\end{lemma}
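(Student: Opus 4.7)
The plan is to give the classical two-line proof based on Cauchy--Schwarz, with a brief check of the degenerate case $\E[Z^{2}]=0$.

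First I would dispose of the trivial case: if $\E[Z^{2}]=0$, then $Z=0$ almost surely, so $\E[Z]=0$ and the right-hand side of the claimed inequality is interpreted as $0$ (or more carefully the inequality holds vacuously since both sides are zero or, using the convention $0/0=0$, is trivial). Assume henceforth that $\E[Z^{2}]>0$, so in particular $\E[Z]>0$.

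The core step is to split the expectation according to whether the event $\cE := \{Z \geq \theta \E[Z]\}$ occurs:
\[
\E[Z] \;=\; \E[Z \, \mathbf{1}_{\cE^{c}}] \;+\; \E[Z \, \mathbf{1}_{\cE}].
\]
On $\cE^{c}$ we have $Z < \theta \E[Z]$ (and $Z \geq 0$), so the first term is bounded by $\theta \E[Z]$. For the second term, apply the Cauchy--Schwarz inequality to obtain
\[
\E[Z \, \mathbf{1}_{\cE}] \;\leq\; \sqrt{\E[Z^{2}]}\;\sqrt{\E[\mathbf{1}_{\cE}^{2}]} \;=\; \sqrt{\E[Z^{2}]}\;\sqrt{\p\{\cE\}}.
\]
Combining these two bounds gives $(1-\theta)\E[Z] \leq \sqrt{\E[Z^{2}]}\sqrt{\p\{\cE\}}$; since $\theta \in [0,1]$ the left-hand side is nonnegative, so we may square and divide by $\E[Z^{2}]$ to obtain the stated inequality
\[
\p\{Z \geq \theta\E[Z]\} \;\geq\; (1-\theta)^{2}\,\frac{(\E[Z])^{2}}{\E[Z^{2}]}.
\]

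There is no real obstacle here: the only delicate point is the boundary case $\E[Z^{2}]=0$, handled above, and the observation that the use of Cauchy--Schwarz on $Z \cdot \mathbf{1}_{\cE}$ (as opposed to, say, Markov's inequality on $Z$) is what produces the second-moment denominator that makes the bound nontrivial for small $\theta$.
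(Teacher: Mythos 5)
Your proof is correct and is the standard Cauchy--Schwarz argument for Paley--Zygmund. The paper itself states this lemma without proof, citing it as a standard result (from the Boucheron--Lugosi--Massart reference), so there is no paper proof to compare against; your argument, including the treatment of the degenerate case $\E[Z^{2}]=0$ and the observation that $\theta\in[0,1]$ makes the squaring step legitimate, is a complete and correct self-contained derivation.
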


We use the Paley-Zygmund inequality to prove a lower bound on the right tail of the sum of i.i.d.\ random variables that occur in the proofs of our lower bound results.

\begin{lemma}\label{lem:right_tail_LB}
Let $\ell \geq 2$ be a fixed positive integer. 
Let $\left\{ Y_{i} \right\}_{i=1}^{n}$ be i.i.d.\ random variables that are uniform on $\left\{ 1, 2, \ldots, \ell \right\}$. 
For $\lambda$ satisfying 
$0 < \lambda \leq \ell n / 50$ 
we have that 
\begin{equation}\label{eq:right_tail_LB_general}
\p \left\{ \sum_{i=1}^{n} \left( Y_{i} - \E \left[ Y_{i} \right] \right) \geq \lambda \right\} 
\geq 
\left( \exp \left( \frac{12.5 \lambda^{2}}{\ell^{2} n} \right) - 1 \right)^{2} 
\exp \left( - \frac{400 \lambda^{2}}{\ell^{2} n} \right).
\end{equation}
In particular, when 
$\ell \sqrt{n} / 3 
\leq \lambda 
\leq \ell n / 50$, 
we have that 
\begin{equation}\label{eq:right_tail_LB_specific}
\p \left\{ \sum_{i=1}^{n} \left( Y_{i} - \E \left[ Y_{i} \right] \right) \geq \lambda \right\}
\geq 
\exp \left( - \frac{400 \lambda^{2}}{\ell^{2} n} \right).
\end{equation}
\end{lemma}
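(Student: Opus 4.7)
The plan is to apply the Paley-Zygmund inequality (Lemma~\ref{lem:PZ}) to the exponentially tilted random variable $e^{\theta S_n}$, where $\theta > 0$ is a tuning parameter chosen at the end. Let $X_i := Y_i - \E[Y_i]$, so that $X_i$ is uniform on the symmetric lattice $\{-(\ell-1)/2,\ldots,(\ell-1)/2\}$ with variance $\sigma^2 = (\ell^2-1)/12$, and let $M(\theta) := \E[e^{\theta X_1}]$. By independence, $\E[e^{\theta S_n}] = M(\theta)^n$ and $\E[e^{2\theta S_n}] = M(2\theta)^n$. Setting $Z := e^{\theta S_n}$ and $a := e^{\theta \lambda}$, and applying Paley-Zygmund in the form $\p(Z \geq a) \geq (\E[Z] - a)_+^2/\E[Z^2]$, gives
\[
\p(S_n \geq \lambda) = \p(Z \geq e^{\theta\lambda}) \geq \frac{(M(\theta)^n - e^{\theta\lambda})_+^2}{M(2\theta)^n}.
\]

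The next step is to bound $M$ on both sides. By the symmetry of $X_1$ we have $M(\theta) = \E[\cosh(\theta X_1)]$, and the pointwise inequality $\cosh(u) \geq 1 + u^2/2$ yields $M(\theta) \geq 1 + \theta^2\sigma^2/2$. For $\theta$ small enough that $\theta^2\sigma^2/2 \leq 1/2$, the elementary estimate $\log(1+x) \geq 2x/3$ on $[0,1/2]$ upgrades this to $M(\theta)^n \geq \exp(n\theta^2\sigma^2/3)$. For the matching upper bound, Hoeffding's lemma applied to $X_1 \in [-(\ell-1)/2, (\ell-1)/2]$ gives $M(2\theta) \leq \exp(\theta^2\ell^2/2)$, hence $M(2\theta)^n \leq \exp(n\theta^2\ell^2/2)$. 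Substituting these two bounds into the Paley-Zygmund estimate and factoring $e^{\theta\lambda}$ out of the numerator produces
\[
\p(S_n \geq \lambda) \geq e^{2\theta\lambda - n\theta^2\ell^2/2}\left( e^{n\log M(\theta) - \theta\lambda} - 1 \right)_+^2.
\]

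The third step is to choose $\theta = c\lambda/(n\ell^2)$ for an appropriate absolute constant $c$, which makes $\theta\lambda$, $n\theta^2\sigma^2$, and $n\theta^2\ell^2$ all take the form (absolute constant)$\cdot \lambda^2/(\ell^2 n)$; tuning $c$ so that the two exponents $n\log M(\theta) - \theta\lambda$ and $n\theta^2\ell^2/2 - 2\theta\lambda$ are at least $12.5\lambda^2/(\ell^2 n)$ and at most $400\lambda^2/(\ell^2 n)$ respectively yields~\eqref{eq:right_tail_LB_general}. The assumption $\lambda \leq \ell n/50$ keeps $\theta$ small enough that $\theta^2\sigma^2 \leq 1$, so that the lower bound on $\log M(\theta)$ remains valid and, in particular, $M(\theta)^n > e^{\theta\lambda}$. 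Finally, the simplified inequality~\eqref{eq:right_tail_LB_specific} follows from~\eqref{eq:right_tail_LB_general}: when $\lambda \geq \ell\sqrt{n}/3$, one has $12.5\lambda^2/(\ell^2 n) \geq 12.5/9 > \ln 2$, so $(\exp(12.5\lambda^2/(\ell^2 n)) - 1)^2 \geq 1$.

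The main obstacle is matching the specific constants $12.5$ and $400$ in~\eqref{eq:right_tail_LB_general}. Hoeffding's lemma is loose on $M(2\theta)$ by a factor of roughly $\ell^2/\sigma^2 \in [12,16]$ compared with the true behavior $\log M(2\theta) \approx 2\theta^2\sigma^2$, whereas the symmetry-based lower bound on $\log M(\theta)$ loses only a small constant factor; the arithmetic must be carried out carefully and verified uniformly over $\ell \geq 2$, with the tightest case being $\ell = 2$ (where $\sigma^2 = \ell^2/16$). If Hoeffding turns out to be too loose to yield $400$ on the nose, a Bennett- or Bernstein-type upper bound on $M(2\theta)$ can replace it without changing the structure of the argument.
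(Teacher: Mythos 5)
Your proposal follows the same high-level strategy as the paper: apply Paley--Zygmund to $e^{\theta S_n}$, bound $\E[e^{\theta S_n}]$ from below and $\E[e^{2\theta S_n}]$ from above, and take $\theta \asymp \lambda/(\ell^2 n)$. The difference is in how you bound the moment generating function. The paper computes $\E[e^{t Z_1}]$ in closed form as $\frac{t e^{t/2}}{e^t - 1}\cdot\frac{e^{t\ell/2}-e^{-t\ell/2}}{t\ell}$ and squeezes each factor between explicit exponentials, arriving at $e^{t^2\ell^2/40} \le \E[e^{tZ_1}] \le e^{t^2\ell^2/20}$ for $t \in (0,1/\ell]$, from which the constants $12.5$ and $400$ drop out with the single choice $t = 50\lambda/(\ell^2 n)$. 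You instead use generic inequalities: $\cosh(u) \ge 1+u^2/2$ plus $\log(1+x) \ge 2x/3$ for the lower bound, and Hoeffding's lemma for the upper bound.

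There is a genuine numerical gap, and it is in a place you did not flag. Your lower bound on $\log \E[e^{\theta X_1}]$ works out to $\theta^2\sigma^2/3$, which loses a factor $3/2$ relative to the ideal $\theta^2\sigma^2/2$; Hoeffding gives $\log\E[e^{2\theta X_1}] \le \theta^2(\ell-1)^2/2$, which loses a factor up to about $3$ relative to the ideal $2\theta^2\sigma^2$. Carrying out the arithmetic with $\theta = c\lambda/(\ell^2 n)$, the numerator constraint $n\log M(\theta) - \theta\lambda \ge 12.5\lambda^2/(\ell^2 n)$ forces (in the large-$\ell$ limit, where $\sigma^2/\ell^2 \to 1/12$, so your bound gives coefficient $1/36$) $c^2/36 - c \ge 12.5$, i.e.\ $c \gtrsim 46$, while the Hoeffding denominator constraint $c^2/2 - 2c \le 400$ forces $c \lesssim 30$. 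These are incompatible, so your proof as written fails for $\ell \ge 3$. You anticipate this might happen and propose switching to a Bennett-type bound on $M(2\theta)$. That alone is not enough: even the cleanest Bennett bound compatible with the hypothesis $\lambda \le \ell n/50$ gives roughly $\log M(2\theta) \le 2.9\,\theta^2\sigma^2$, which yields $c \lesssim 45$, still just below the $c \gtrsim 46$ that your lower bound demands. The binding bottleneck is the $\log(1+x) \ge 2x/3$ step on the lower-bound side. If you replace it with $\log(1+x) \ge x - x^2/2$ and use the fact that $x = \theta^2\sigma^2/2 \le 1/24$ under the hypotheses, you recover $\log M(\theta) \ge 0.49\,\theta^2\sigma^2$, and then the ranges for $c$ do overlap (around $c \approx 40$). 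So the approach is sound, but you must sharpen \emph{both} MGF bounds, not just the upper one, and you must verify the arithmetic with the actual constants rather than asserting it; as written the constants do not close.
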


We note that this bound is sharp up to a universal multiplicative constant in the exponent, as witnessed by Hoeffding's inequality (Theorem~\ref{thm:hoeffding}).

\begin{proof}
To abbreviate notation, let 
$Z_{i} := Y_{i} - \E \left[ Y_{i} \right]$ for $i \in \left[ n \right]$, 
and let 
$S_{n} := \sum_{i=1}^{n} Z_{i}$. 
For every $t > 0$ we have that 
\begin{equation}\label{eq:ineq_exp}
\p \left\{ S_{n} \geq \lambda \right\} 
= \p \left\{ \exp \left( t S_{n} \right) \geq \exp \left( t \lambda \right) \right\} 
= \p \left\{ \exp \left( t S_{n} \right) \geq \frac{\exp \left( t \lambda \right)}{\E \left[ \exp \left( t S_{n} \right) \right]} \E \left[ \exp \left( t S_{n} \right) \right] \right\}.
\end{equation}
We now apply the Paley-Zygmund inequality (Lemma~\ref{lem:PZ})  
with $Z := \exp \left( t S_{n} \right)$ 
and 
\[
\theta := \frac{\exp \left( t \lambda \right)}{\E \left[ \exp \left( t S_{n} \right) \right]}.
\]
In order to do so, we must have $\theta \in [0,1]$, so  $\lambda$ must satisfy 
\begin{equation}\label{eq:lambda_condition}
\lambda \leq \frac{1}{t} \log \E \left[ \exp \left( t S_{n} \right) \right].
\end{equation}
For $t > 0$ and $\lambda$ satisfying~\eqref{eq:lambda_condition} we have by~\eqref{eq:ineq_exp} and Lemma~\ref{lem:PZ} that 
\begin{equation}\label{eq:PZexpLB}
\p \left\{ S_{n} \geq \lambda \right\} 
\geq 
\left( 1 - \frac{\exp \left( t \lambda \right)}{\E \left[ \exp \left( t S_{n} \right) \right]} \right)^{2} 
\frac{\left( \E \left[ \exp \left( t S_{n} \right) \right] \right)^{2}}{\E \left[ \exp \left( 2 t S_{n} \right) \right]} 
= \frac{\left( \E \left[ \exp \left( t S_{n} \right) \right] - \exp \left( t \lambda \right) \right)^{2}}{\E \left[ \exp \left( 2 t S_{n} \right) \right]}. 
\end{equation} 
In the following we analyze this expression. 
By an explicit calculation we have that 
\[
\E \left[ \exp \left( t Y_{1} \right) \right] 
= 
\frac{e^{t} \left( e^{t \ell} - 1 \right)}{\ell \left( e^{t} - 1 \right)}.
\]
Since $\E \left[ Y_{1} \right] = (\ell+1)/2$, we thus have that 
\[
\E \left[ \exp \left( t Z_{1} \right) \right] 
= \E \left[ \exp \left( t Y_{1} \right) \right] e^{-t(\ell+1)/2}
= \frac{e^{t/2} \left( e^{t \ell/2} - e^{-t \ell/2} \right)}{\ell \left( e^{t} - 1 \right)}. 
\]
By multiplying and dividing by $t$, we slightly rewrite this expression as 
\begin{equation}\label{eq:Z_momgenfn}
\E \left[ \exp \left( t Z_{1} \right) \right] 
= \frac{t e^{t/2}}{e^{t} - 1} \cdot \frac{e^{t \ell/2} - e^{-t \ell/2}}{t \ell}
\end{equation}
and bound the two factors separately. 
For the first factor we have for all $t > 0$ that 
\begin{equation}\label{eq:bounds_first_factor}
e^{-t^{2}/20} 
\leq 
\frac{t e^{t/2}}{e^{t} - 1}
\leq 1.
\end{equation}
For the second factor, note that 
\[
e^{x^{2}/25} 
\leq \frac{e^{x/2} - e^{-x/2}}{x} 
\leq e^{x^{2}/20},
\]
where the first inequality holds for all $x \in (0,1]$ and the second inequality holds for all $x > 0$. 
Thus for all $t \in ( 0, 1/\ell]$ we have that 
\begin{equation}\label{eq:bounds_second_factor}
e^{t^{2} \ell^{2} / 25} 
\leq 
\frac{e^{t \ell/2} - e^{-t \ell/2}}{t \ell}
\leq 
e^{t^{2} \ell^{2}/20}.
\end{equation}
Putting together the bounds in~\eqref{eq:bounds_first_factor} and~\eqref{eq:bounds_second_factor}, and plugging them into~\eqref{eq:Z_momgenfn} we obtain the upper bound 
\[
\E \left[ \exp \left( t Z_{1} \right) \right] 
\leq 
e^{t^{2} \ell^{2}/20},
\]
which holds for all $t > 0$. 
Similarly, we obtain the lower bound 
\[
\E \left[ \exp \left( t Z_{1} \right) \right] 
\geq 
\exp \left( t^{2} \ell^{2} / 25 - t^{2} / 20 \right) 
= \exp \left( t^{2} \frac{4\ell^{2} - 5}{100} \right) 
\geq e^{t^{2} \ell^{2} / 40},
\]
which holds for all $t \in (0,1/\ell]$; 
here in the last inequality we used that $4\ell^{2} - 5 \geq 2.5 \ell^{2}$, which holds for all $\ell \geq 2$. 
In summary, we have shown that for all $t \in (0,1/\ell]$ we have that 
\begin{equation}\label{eq:Z_bounds}
e^{t^{2} \ell^{2} / 40}
\leq 
\E \left[ \exp \left( t Z_{1} \right) \right] 
\leq 
e^{t^{2} \ell^{2}/20}.
\end{equation}
Since $\left\{ Z_{i} \right\}_{i=1}^{n}$ are i.i.d.\ 
we have that 
$\E \left[ \exp \left( t S_{n} \right) \right] 
= \left( \E \left[ \exp \left( t Z_{1} \right) \right] \right)^{n}$. 
Therefore by~\eqref{eq:Z_bounds} we have for all $t \in (0,1/\ell]$ that 
\begin{equation}\label{eq:Sn_bounds}
e^{t^{2} \ell^{2} n / 40}
\leq 
\E \left[ \exp \left( t S_{n} \right) \right] 
\leq 
e^{t^{2} \ell^{2} n / 20}.
\end{equation}
Plugging these inequalities back into~\eqref{eq:PZexpLB} we obtain the lower bound 
\begin{equation}\label{eq:LB_to_optimize}
\p \left\{ S_{n} \geq \lambda \right\} 
\geq \frac{\left( \exp \left( t^{2} \ell^{2} n / 40 \right) - \exp \left( t \lambda \right) \right)^{2}}{\exp \left( t^{2} \ell^{2} n /5\right)}.
\end{equation}
The inequality in~\eqref{eq:LB_to_optimize} holds 
whenever $t \in (0,1/\ell]$ and the inequality 
$t\lambda \leq t^{2} \ell^{2} n / 40$ 
holds. 
We now choose $t$ to be 
\[
t := \frac{50 \lambda}{\ell^{2} n},
\]
a choice which requires $\lambda > 0$. 
This choice of $t$ satisfies the inequality 
$t\lambda \leq t^{2} \ell^{2} n / 40$. 
Furthermore, we then have $t \leq 1/\ell$ if and only if $\lambda \leq \ell n / 50$. 
Plugging in this choice of $t$ into~\eqref{eq:LB_to_optimize} we obtain that \[
\p \left\{ S_{n} \geq \lambda \right\} 
\geq 
\frac{\left( \exp \left( \frac{62.5\lambda^{2}}{\ell^{2} n} \right) - \exp \left( \frac{50\lambda^{2}}{\ell^{2} n} \right)\right)^{2}}{\exp \left( \frac{500\lambda^{2}}{\ell^{2} n} \right)},
\]
from which~\eqref{eq:right_tail_LB_general} follows directly. 
Finally, note that when $\lambda \geq \ell \sqrt{n} / 3$, 
then 
$\exp \left( 12.5 \lambda^{2} / \left( \ell^{2} n \right) \right) - 1 \geq e - 1 > 1$, and~\eqref{eq:right_tail_LB_specific} follows. 
\end{proof}

\section{Proof of Lemma~\ref{lem:DKW}}\label{sec:proof:lem:DKW}
In this section we prove Lemma~\ref{lem:DKW} about the empirical quantiles of the distribution of the random variable $\cost_R(S)$, where $S$ is a random strand.
\begin{proof}[Proof of Lemma~\ref{lem:DKW}]
By the Dvoretzky--Kiefer--Wolfowitz inequality 
(\cite{DKW56, tightDKW90}), the empirical cumulative distribution function for $\cost_R(S)$ is very close to the cumulative distribution function
for $\cost_R(S) \leq t$ with high probability over the random choice of the set $\calS_{D,M}$. Specifically,
we have 
$$\p\Bigg\{
\bigg|\frac{|\{S\in \calS: \cost_R(S) \leq t\}|}{|\calS|} - 
\p\big\{\cost_R(S) \leq t\big\}\bigg|\leq \varepsilon
\text{ for all }t\in \R\Bigg\}\geq 1 -2e^{-2M\varepsilon^{2}}.$$
Let 
$$\cE := \Bigg\{
\bigg|\frac{|\{S\in \calS: \cost_R(S) \leq t\}|}{|\calS|} - 
\p\big\{\cost_R(S) \leq t\big\}\bigg|\leq \varepsilon
\text{ for all }t\in \R\Bigg\}.$$
We show that if $\cE$ occurs then
for all $q\in(\varepsilon,1-\varepsilon)$, we have
$$Q_{q-\varepsilon,R}(D)\leq \EQ_{q,R}(\calS_{D,M})\leq Q_{q+\varepsilon,R}(D),$$
and, therefore, (\ref{eq:DKW:R}) holds. Inequality~(\ref{eq:DKW:setR}) follows from 
(\ref{eq:DKW:R}) by a union bound over all $R$ in $\calR$.

\medskip
We first show that
$\EQ_{q,R}(\calS_{D,M}) \geq Q_{q-\varepsilon,R}(D)$ for all $q\in(\varepsilon,1)$. Consider an arbitrary $q\in(\varepsilon,1)$ and let $t^*= Q_{q-\varepsilon,R}(D) - 1$. Since $t^*< Q_{q-\varepsilon,R}(D)$, we have $\p\big\{\cost_R(S) \leq t^*\big\} < q-\varepsilon$ and 
$$
\frac{|\{S\in \calS: \cost_R(S) \leq t^*\}|}{|\calS|} \leq 
\p\big\{\cost_R(S) \leq t^*\big\} + \varepsilon
< q.$$
Hence, $\EQ_{q,R}(\calS_{D,M}) > t^*$, and since
$\EQ_{q,R}(\calS_{D,M})$ is an integer, 
$\EQ_{q,R}(\calS_{D,M}) \geq Q_{q-\varepsilon,R}(D)$.

\medskip

We now show that
$\EQ_{q,R}(\calS_{D,M}) \leq Q_{q+\varepsilon,R}(D)$ for all $q\in(0,1-\varepsilon)$. Consider an arbitrary $q\in(0,1-\varepsilon)$ and let $t^{**}= Q_{q+\varepsilon,R}(D)$. By the definition of $Q_{q,R}(D)$, we have $\p\big\{\cost_R(S) \leq t^{**}\big\} \geq q+\varepsilon$ and 
$$
\frac{|\{S\in \calS: \cost_R(S) \leq t^{**}\}|}{|\calS|} \geq 
\p\big\{\cost_R(S) \leq t^{**}\big\} - \varepsilon
\geq q.$$
Hence, 
$\EQ_{q,R}(\calS_{D,M}) \leq t^{**}=Q_{q+\varepsilon,R}(D)$.
\end{proof}

\end{document}